\newtheorem{lem}{Lemma}
\newtheorem{thm}{Theorem}
\theoremstyle{definition}
\newtheorem{remark}{Remark}
\def\eps{\epsilon}
\def\d{d}
\def\z{z}
\def\v{\bld{\phi}}
\def\vm{\phi_\k}
\def\w{\bld{\phi}}
\def\j{\ell}
\def\wG{\t(G,\F)}
\def\wS{\t(\S,\F)}
\def\wGi{t(G,\grH_\k)}
\def\wSi{t(\S,\grH_\k)}
\def\wSstar{\w(\S^*)}
\def\bld{\boldsymbol}
\def\G{\mathcal{G}}
\def\E{\mathbb{E}}
\def\F{\mathcal{F}}
\def\S{S}
\def\t{\bld{t}}
\def\p{p}
\def\grH{F}
\def\J{J}
\def\k{m}
\def\xmom{\E X}
\def\xmomi{\E X^\k}
\def\ymomi{\E Y^\k}
\def\pGF{p_G^F}
\def\pGFhat{\hat{p}_{G}^{\grH}}
\def\bH{b_{\grH,n}}
\def\cH{c_{\grH,n}}
\def\cHi{c_{\grH^\k,n}}
\def\KS{\mathsf{KS}}
\def\MP{\mathsf{\Psi}}
\def\hist{\mathsf{Hist}}
\def\ci{c_{m}}
\def\ctilde{5}
\def\twoctilde{10}
\def\Bveps{B_{\F}^n(\bld{\phi},\bld{\gamma})}
\def\Bvepsd{B_{\F^d}^n}
\def\Cw{C_{\F}^n(\bld{\phi},\bld{\gamma})}
\newcommand\reallywidehat[1]{%
\savestack{\tmpbox}{\stretchto{%
  \scaleto{%
    \scalerel*[\widthof{\ensuremath{#1}}]{\kern-.6pt\bigwedge\kern-.6pt}%
    {\rule[-\textheight/2]{1ex}{\textheight}}%WIDTH-LIMITED BIG WEDGE
  }{\textheight}% 
}{0.5ex}}%
\stackon[1pt]{#1}{\tmpbox}%
}
\DeclareMathOperator*{\argmax}{arg\,max}
\newif\ifarxiv
\begin{document}
	
\title{On the Number of Graphs with a Given Histogram}
	\author{Shahar~Stein~Ioushua and Ofer~Shayevitz\thanks{The authors are with the Department of EE--Systems, Tel Aviv University, Tel Aviv, Israel \{steinioushua@mail.tau.ac.il, ofersha@eng.tau.ac.il\}. This work was supported by the Israel Science Foundation (ISF) under grant 1495/18 and grant 1766/22. This paper was presented in part at 56th Annual Conference on Information Sciences and Systems (CISS) and the 2022 IEEE International Symposium on Information Theory (ISIT).}
	}
	\maketitle
	\begin{abstract}
    Let $G$ be a large (simple, unlabeled) dense graph on $n$ vertices. Suppose that we only know, or can estimate, the empirical distribution of the number of subgraphs $\grH$ that each vertex in $G$ participates in, for some fixed small graph $\grH$. How many other graphs would look essentially the same to us, i.e., would have a similar local structure? In this paper, we derive upper and lower bounds on the number of graphs whose empirical distribution lies close (in the Kolmogorov-Smirnov distance) to that of $G$. Our bounds are given as solutions to a maximum entropy problem on random graphs of a fixed size $k$ that does not depend on $n$, under $d$ global density constraints. The bounds are asymptotically close, with a gap that vanishes with $d$ at a rate that depends on the concentration function of the distribution at the center of the Kolmogorov-Smirnov ball. % 
	\end{abstract}

\section {Introduction}\label{sec:intro_paper}
Let $G$ be a simple unlabeled dense graph on $n$ vertices. Suppose that we cannot access the entire graph, which could be very large, but we are nevertheless interested in gaining some information about its local structure. To that end, we could for instance randomly query a small number of vertices, and probe their local neighborhoods. For example, if we record the degrees of these random vertices, we can obtain a coarse estimate of the degree distribution of the graph, which we can think of as a \textit{histogram} of the true underlying (empirical) degree distribution. Similarly, we can obtain a histogram of the distribution of the number of triangles a vertex participates in. Or more generally, fixing any rooted graph $\grH$ on $r\ll n$ vertices, we can obtain a histogram of the \textit{$\grH$-degree distribution} of $G$, i.e., of the number of times a vertex in $G$ appears as the root of a subgraph $\grH$. We will refer to this as an \textit{$\grH$-histogram} of $G$. Given such an $\grH$-histogram, how much is revealed about the graph itself? More concretely, we are interested in characterizing the number of graphs whose $\grH$-histogram is similar to that of $G$.

\textbf{Contribution and techniques.} In this paper, we formalize the above question in a more abstract manner, by defining a histogram as a ball in the Kolmogorov-Smirnov (KS) metric around some smooth reference distribution. We then characterize the number of graphs whose true $\grH$-degree distribution lies inside the ball, in terms of a solution to a constrained maximum entropy problem over fixed-dimension random graphs. Our approach is based on the following ideas. First, we show that the local structure conditions can be essentially replaced by global ones with only a small penalty; loosely speaking, we show that the set of graphs with $\grH$-degree distribution inside a KS-ball roughly correspond to a set of graphs satisfying certain $d$ \textit{global density constraints} w.r.t. graphs $\{\grH^\k\}_{\k=1}^d$ derived from $\grH$, where by density we mean the relative occurrence of each $\grH^\k$ inside $G$. This reduction to vector of densities is obtained by counting arguments combined with an anti-concentration inequality. 

Given this reduction, it is enough to characterize the number of graphs with (roughly) a given density vector. To that end, and in a way somewhat reminiscent of the method-of-types~\cite{Csiszar_information_1982}, we define something we call a \textit{Szemer\'edi type} of a graph. A Szemer\'edi type is essentially a random graph $S$ over $k$ vertices with edges drawn independently with probabilities $s_{ij}$, together with some accuracy parameter $\eps$. Loosely speaking, a graph $G$ on $n\gg k$ vertices is said to have this Szemer\'edi type, if we can  partition (most of) its vertex set into $k$ equi-sized parts, such that the edges between parts $i$ and $j$ appear to have been drawn independently at random with probability $s_{ij}$, as long as we do not look too closely (determined by $\eps$). In general, a graph $G$ can have no Szemer\'edi type, or can have multiple Szemer\'edi types. The celebrated Szemer\'edi regularity lemma~\cite{szemeredi1975regular} implies that for $n$ large enough, every graph has at least one Szemer\'edi type, and hence one can roughly think of large enough graphs as being ``pseudorandom'' with a small number of parameters. We use this fact in order to convert our vector of densities problem into an optimization problem over (fixed dimension) Szemer\'edi types, and show that its solution is essentially given by the Szemer\'edi type with maximum entropy satisfying the corresponding expected density constraints. The gap between our upper and lower bounds depends on the number of density constraints $d$ we take, and vanishes at a rate that depends on the concentration function of the distribution at the center of our KS-ball.

\textbf{Motivation.} Graphs are widely used to describe relations between different elements in a system,  with notable examples including computer networks, social media, molecular structures and biological systems. Moreover, often these large networks have a typical underlying structure, such as edge-degree distribution (as observed e.g., for the Internet \cite{faloutsos1999power}) or triangle-degree distribution (e.g., the edge-triangle configuration model \cite{newman2009random,miller2009percolation}), and more generally, $\grH$-degree distribution \cite{karrer2010random,ritchie2014higher,ritchie2017generation}. While our work addresses a fundamental combinatorial question regarding the number of graphs with a certain distribution of local properties, it is primarily motivated by the need to provide a combinatorial framework for information theoretical problems over large graphs, in the spirit of the well-known typicality \cite{cover1999elements} and method-of-types~\cite{Csiszar_information_1982} used for sequences.

In the method-of-types, the basic idea is to characterize a finite sequence by the empirical distribution of its symbols, referred to as the \textit{type} of the sequence, and then quantify the number of sequences with the same type, which turns out to be exponential in the entropy of the type. This very simple idea goes a very long way in information theory both in data compression and communication, see \cite{Csiszar_information_1982}. In our setting, the simple sequence type is replaced by a graph $\grH$-histogram (that can be computed using one of the various existing subgraph listing or subgraph local counting algorithms, e.g., \cite{schank2005finding,kloks2000finding,chu2011triangle,chiba1985arboricity,becchetti2008efficient,becchetti2010efficient,suri2011counting}), or more generally and loosely speaking, by a Szemer\'edi type. In a way analogous to the classical result for sequences, the entropy of the Szemer\'edi type will yield the correct exponential behavior of the number of graphs that share the underlying histogram(s), with the small but important distinction that we allow a small deviation (in the KS metric). Hence our result in this sense is more reminiscent of sequence typicality \cite{cover1999elements} (we later discuss why exact counting is nontrivial in the graph setup, in contrast to the sequence setup). Just as in the sequence case, this immediately leads to a lossless compression result, describing the minimal number of bits required in order to represent a graph with a given histogram so that it can be exactly reconstructed. Furthermore, again as in the sequence case, this type of fundamental characterization can be leveraged to address more difficult information-theoretic questions.

One such interesting example is structure-preserving lossy compression of graphs. It has been long observed \cite{watts1998collective} that the global properties of a graph, e.g., edge probability and subgraph counts, are not enough in order to describe realistic networks. Local properties, such as the clustering coefficient and the small world property are required \cite{watts1998collective,albert2002statistical}. 
These properties measure the dependence between the edge probability of two vertices and their joint neighborhood. For example, large clustering coefficient suggests that if two vertices have many joint neighbors, they are likely to have an edge between them.  Therefore, preserving the local neighborhood of each vertex in the compression process can be desirable. If one uses a global distortion measure, as, say, the Hamming distance, then even a very small overall distortion (i.e., edges removed or added) can concentrate on a small subset of vertices or even erase a specific neighborhood entirely. This can be prevented by using a  measure that controls the local distortion. In this type of a setting, we are given a large graph, and would like to represent it using the minimal number of bits, while preserving some local structure up to some prescribed distortion, for each and every vertex (e.g., the set of actual triangles in which each vertex participates in the original graph and its reconstruction, should have say a $95\%$ overlap). This type of problem was previously considered in~\cite{bustin2021lossy} for the edge case, by using known results on edge-degree counts \cite{barvinok_number_2010,barvinok_matrices_2012}, see Section~\ref{sec:related_work} for further details. In order to generalize this method for measures that control the $\grH$-degree distortion for general $\grH$, results on the number of graphs with (approximately) the same histogram, which we give here, appear to be required.

\textbf{Organization.} Section~\ref{sec:related_work} reviews relevant previous results and related work. Section~\ref{sec:perliminaries} includes some necessary preliminaries and definitions. In Section~\ref{sec:problem_setting} we formally present the graph histogram characterization problem and state our main results. In Section~\ref{sec:szemerdi_types} we introduce the Szemer\'edi types, which play a central role in our proofs. Then, in Section~\ref{sec:proof_of_main_result} we prove our main result in two steps: First, in Subsection~\ref{sec:reduction_to_densities}, we reduce the histogram characterization problem to a vector subgraph densities problem via anticoncentration arguments. Then, in Subsection~\ref{sec:sol_of_densities_problem}, we resolve the densities problem via Szemer\'edi types and the regularity lemma, thereby yielding our main result. In Section~\ref{sec:single_density_solution}, we slightly digress to examine the subgraph densities problem in the special case of a scalar density; unlike the general vector setting, this case admits a simple finite $n$ expression. Finally, in Section~\ref{sec:discussion}, we conclude by discussing some gaps as well as possible extensions of our work. Additional background and some technical proofs are relegated to the  Appendix.

\section{Related work}\label{sec:related_work}

The case of edge-degree distributions, where $\grH$ is a single edge on $r=2$ vertices, was solved by Barvinok~\cite{barvinok_number_2010,barvinok_matrices_2012}, who used a generating functions approach to show that the number of graphs with a given degree distribution is obtained as a solution to an $n$-parameter maximum entropy problem. Compared to our approach, his result is stronger as it yields the number of graphs (up to sub-exponential terms) that have the \textit{exact} given edge-degree distribution, rather than close in the KS metric, albeit his solution involves $n$ dimensional optimization rather than $k\ll n$ dimensional in our case (but this can of course be fixed). Recently, the authors gave a simpler information-theoretic proof for Barvinok's result by leveraging degree-distribution invariant operations~\cite{ioushua2019counting}.  Later in \cite{bustin_lossy_2017,bustin2021lossy}, these results were used to derive bounds on the rate distortion function for universal lossy compression of directed random graphs, under distortion measure that limits the number of edges one vertex can lose or gain in the compression process. 
The  degree sequence of graphs was also studied in \cite{chatterjee2011random}, were it was shown that sequences of graphs with a given degree sequence converge to a unique, identifiable limiting object. We elaborate on this limiting object, called a \textit{graphon}, in Subsection ~\ref{sec:intro_CV_bounds}.

The general case of $\grH$-degree distributions or $\grH$-histograms has not been addressed before, and it appears that both the approaches used by Barvinok and the authors are not easy to extend. The difficulty of going beyond the edge case is underscored by the \textit{feasibility problem}, i.e., checking whether the set of graphs with a given $\grH$-degree distribution is nonempty. The feasibility problem is resolved by the Gale-Ryser Theorem~\cite{ryser_term_1958,gale_theorem_1957} in the edge case, but is not well understood for general $\grH$. This is also another reason to work with $\grH$-histograms rather than exact $\grH$-degree distributions. 

The problem of characterizing the number of graphs with an approximately given scalar graph parameter (as defined later), in the limit of $n\rightarrow\infty$, was addressed by Chatterjee and Varadhan~\cite{chatterjee2011large} using an object called a graphon\cite{lovasz2012large}, that can be loosely thought of as a compact limit of Szemer\'edi types for $k\rightarrow \infty$. They derived a large deviation result that yields upper and lower bounds on the exponent of the size; in particular relevance to our setting, they showed that their bounds coincide when the graph parameter is a \textit{scalar} subgraph density, yielding a size exponent given by the maximum entropy graphon satisfying the single density constraint.
In \cite{chatterjee2013estimating}, these results were used to approximate the normalizing constant of exponential random graph models with subgraph counts as their sufficient statistics.
In~\cite{lubetzky2015replica}, Lubetzky and Zhao studied related problems via the regularity lemma lens, and in particular gave a more general sufficient condition on the scalar parameter for the bounds to coincide. These works however do not readily generalize to the vector densities case, as outlined in the next subsection (which can be skipped without affecting the rest of the paper).

%%%%%%%%%%%%%%%%%%%%%%%%%%%%%%%%%%%%%%%%%%%%%%%%%%%%%%%%%%%%%%%%%%%%%%%%%%%%%%%%%%%%%%%%%%%%%%%%%%%%%%%%%%%%%%%%%%%%%%5
%%%%%%%%%%%%%%%%%%%%%%%%%%%%%%%%%%%%%%%%%%%%%%%%%%%%%%%%%%%%%%%%%%%%%%%%%%%%%%%%%%%%%%%%%%%%%%%%%%%%%%%%%%%%%%%%%%%%%%5

\subsection{Graphons and the Chatterjee-Varadhan (CV) Large Deviation Bounds}\label{sec:intro_CV_bounds}

In a sequence of papers (see~\cite{lovasz2012large} and references therein), Lova\'sz et al. developed a beautiful theory of graph limits, and introduced the notion of graphons. 
A graphon is a measurable function $f:[0,1]\rightarrow[0,1]$ that satisfies $f(x,y)=f(y,x)$. The space of graphons (up to measure preserving bijections) can be endowed with a metric called the {\em cut metric} (see Appendix~\ref{appendix:CV_bounds}). Any graph $G$ on $n$ vertices can be represented as a ``checkerboard'' graphon $f^G$, where
\begin{align}\label{eq:fG}
    f^G(x,y) = \begin{cases}1,&\;\text{if $\lceil nx\rceil\sim\lceil ny\rceil$ is an edge in $G$}\\ 0,&\;\text{otherwise}\end{cases}.
\end{align}
A \textit{graph parameter} is a (scalar or vector) function $\tau(G)$ that  can be extended to a continuous (w.r.t to the cut metric) function $\tau(f)$ over graphons, such that $\tau(f^G) = \tau(G)$. Graphons become useful in the context of graphs due to the following property: Let $G_n$ be a sequence of simple graphs over $n$ vertices, and let $t(\grH,G_n)$ denote the relative occurrence of $\grH$ in $G_n$ (a formal definition of $t(\grH,G)$ will later appear). 
Then, if for any fixed $\grH$ the sequence $t(\grH,G_n)$ converges as $n\to\infty$, the sequence $G_n$ converges to a graphon $f$, in the sense that $f^{G_n}$ converges to $f$ in the cut metric. 

Chatterjee and Varadhan used Lov\'asz's graphon framework to derive a large deviation result~\cite[Theorem 2.3]{chatterjee2011large} that gives bounds on the number of graphs $G$ whose parameter $\tau(G)$ lies inside a given set, in the limit of large $n$. Their bounds  (stated explicitly in Appendix~\ref{appendix:CV_bounds}) are given as a maximum entropy problem over a set of graphons that satisfy the associated constraints on their corresponding graph parameter. The lower bound corresponds to the maximum over the set of graphons whose parameter value lies within an open set, and the upper bound correspond to the maximum over the set of graphons whose parameter value lies within the closure of that set. Formal definitions and the explicit bounds can be found in Appendix~\ref{appendix:CV_bounds}.

The aforesaid CV large deviation result is quite general and can be applied to various graph parameters, scalar or vector, as also remarked by the authors in \cite{chatterjee2011large}. However, it is important to note that the resulting bounds are not generally tight; namely, the upper and lower CV bounds coincide in some cases but not in others. Specifically, the CV bounds are not tight in the case where the maximum entropy solution is not continuous w.r.t the parameter level. If an infinitesimally small change in the value of the parameter that defines the feasible set of graphons can cause a large change in the corresponding maximum entropy, a jump discontinuity will occur at the boundary of (some) closed set that defines the CV upper bound, thus strictly separating it from the CV lower bound. An explicit (toy) example of a scalar graph parameter for which the bounds are not tight can be found in Appendix~\ref{appendix:CV_bounds}. In the special case where the parameter is the $\grH$-density of the graph, $\tau(G)=t(\grH,G)$, the bounds are indeed tight and thus yielding (asymptotically) the exact exponential count of the number of graphs with $\grH$-density inside a given interval. 

  The above tightness issue has been addressed in \cite{lubetzky2015replica}, in the scalar parameter case. A graph parameter $\tau(f)$ is called {\em nice} when all its local extrema are also global ones. The authors in \cite{lubetzky2015replica} showed that the CV bounds (applied e.g. to an open interval and its closure) are tight for all (scalar) nice graph parameters. % 
  Specifically, define the {\em rate function} $\psi(t)$ as the solution to a maximum entropy problem 
  \begin{align}\label{eq:rate_func}
   \psi(t) \triangleq \sup \{ H(f), \text{over all graphons $f$ such that } \tau(f)\geq t\},  
  \end{align}
  where 
  \begin{align}
      H (f) = \int_{[0,1]^2} h(f(x,y))dxdy,
  \end{align}
  and $h(\cdot)$ is the binary entropy function. Then $\tau(f)$ being nice implies that the rate function is continuous in $t$, which is a necessary and sufficient condition for the bounds to  be tight.
 The $\grH$-density of the graph was shown to be nice in \cite{lubetzky2015replica}, and in fact the continuity of $\psi(t)$ in this case has already been implicitly used by Chatterjee and Vardahn in the proof of \cite[Theorem 4.1]{chatterjee2011large}.

  To the best of our knowledge, there has been no prior work giving conditions for the tightness of the CV bounds in the case that $\tau(f)$ is a vector parameter. The difficulty in this case is to generalize the notion of maxima and minima to vectors in a way that yields a property equivalent to niceness in the scalar case.  
 Part of our contribution here is to suggest a generalization of niceness in multiple dimensions; we do so by defining {\em local and global boundary points} (formally introduced in Subsection~\ref{sec:sol_of_densities_problem}) and requiring that every local boundary point is also a global one.  However, this property is difficult to generally establish, as discussed also in Subsection~\ref{sec:sol_of_densities_problem}. 
 We circumvent this issue by defining a local measure of niceness, the {\em effective radius}, in a finite $n$ setup, which will imply that there are no boundary points in a specific neighborhood (in the Szemer\'edi types space). 

 In fact, it should be further stressed that niceness of the parameter (in both the scalar and vector cases) is not sufficient in our setup, since we are interested in a finite $n$ characterization. The CV bounds work on asymptotical objects (graphons), and do not immediately yield finite $n$ bounds.  In order to obtain such bounds, one needs to use the Szemer\'edi regularity lemma, and then a suitable counting lemma.  The resulting bounds are then given as maximum entropy problems over sets of Szemer\'edi types.  Then, loosely speaking, in order for the bounds to be tight we need the maximum entropy Szemer\'edi type of the lower and upper bounds not to be too far apart.  This requires, in addition to continuity of the maximum entropy solution with respect to the parameter level, also some quantitative result bounding the gradient of the mapping $\tau(G)$, as further elaborated in Subsection~\ref{sec:sol_of_densities_problem}. 

We remark that as observed in~\cite{radin2013phase}, the CV large deviation bounds coincide for \textit{any} graph parameter (scalar or vector) in the case where one is only interested in the asymptotic size exponent for an \textit{exact}  level of the graph parameter. This asymptotic exponent was studied in~\cite{radin2013phase} for the edge-triangle case, and later generalized in~\cite{kenyon2017multipodal} to a vector of $k$-star densities. While this type of analysis is insightful and interesting, it should be noted that the resulting asymptotic size exponents have no operational meaning due to order of limits (first in $n$, and then in the ball size around the parameter level). For instance, it could be that there exist no graphs (for any $n$) at certain parameter value (e.g., irrational density) but the asymptotic size exponent is still nontrivial. To obtain operational results, the graph parameter must therefore satisfy the above mentioned continuity properties. Furthermore, these properties need to hold in a stronger sense in order to obtain a finite $n$ characterization, which is  necessary for example in applications such as lossy and lossless compression of graphs discussed in Section~\ref{sec:intro_paper}, and is the approach we take here.

%%%%%%%%%%%%%%%%%%%%%%%%%%%%%%%%%%%%%%%%%%%%%%%%%%%%%%%%%%%%%%%%%%%%%%%%%%%%%%%%%%%%%%%%%%%%%%%%%%%%%%%%%%%%%%%%%%%%%%5
%%%%%%%%%%%%%%%%%%%%%%%%%%%%%%%%%%%%%%%%%%%%%%%%%%%%%%%%%%%%%%%%%%%%%%%%%%%%%%%%%%%%%%%%%%%%%%%%%%%%%%%%%%%%%%%%%%%%%%5

\section{Preliminaries and Notation} \label{sec:perliminaries}
The \textit{Kolmogorov-Smirnov (KS) distance} between two probability distributions $p$ and $q$ over $[0,1]$, is the $L^\infty$ distance between c.d.f.s, i.e., 
\begin{align}
    \KS(p,q) = \sup_{x\in[0,1]}\left| \int_0^x dp - \int_0^x dq\right| .
\end{align}
The concentration function of $p$ returns the maximal probability that $p$ gives to any interval of a given size, i.e., 
\begin{align}\label{eq:concent}
S_p(a) = \sup_{x} \int_{x}^{x+a}dp.
\end{align}
We use boldface letters to indicate vectors and write $\boldsymbol{u}\leq \boldsymbol{v}$ for $\boldsymbol{u},\boldsymbol{v}\in\mathbb{R}^d$ to mean an entry--wise inequality. 
In this work, all graphs are simple, undirected, and unlabeled, unless otherwise noted. For a graph $G$ we denote by $V(G)$ the set of all vertices in $G$. We write $i\sim j$ for two vertices $i,j\in G$ to mean that there is an edge between $i$ and $j$ in $G$. The collection of all  graphs on $n$ vertices is denoted by $\G_n$. For a random graph $\S$ the notation $S\sim (s_{ij})$ means that $\S$ has independent edges and the edge $(i,j)$ exists with probability $s_{ij}$. If $\S$ is a matrix then we denote by $[\S]_{ij}$ the element in the $i$th row and $j$th column. In all the following we use the natural base for the log function.

%%%%%%%%%%%%%%%%%%%%%%%%%%%%%%%%%%%%%%%%%%%%%%%%%%%%%%%%%%%%%%%%%%%%%%%%%%%%%%%%%%%%%%%%%%%%%%%%%%%%%%%%%%%%%%%%%%%%%%%%%%%%%%%%%%%%%%%%%%%%%%%%%%%%%%%%%
%% Main Result
%%%%%%%%%%%%%%%%%%%%%%%%%%%%%%%%%%%%%%%%%%%%%%%%%%%%%%%%%%%%%%%%%%%%%%%%%%%%%%%%%%%%%%%%%%%%%%%%%%%%%%%%%%%%%%%%%%%%%%%%%%%%%%%%%%%%%%%%%%%%%%%%%%%%5   

\section {Problem Setting and Main Result}\label{sec:problem_setting}
Let $G$ be a graph on $n$ vertices. Let $\grH$ be a graph  on $r \ll n$ vertices, where one of the vertices is designated as the {\em root vertex}. 
We say that a vertex $v$ of $G$ has {\em $\grH$-degree} $i$, if $v$ appears as the root vertex in exactly $i$ \textit{copies} of $\grH$ in $G$.
Loosely speaking, a copy is any appearance of $F$ inside $G$, counted once. Precisely, a copy is an element in the quotient set $\Pi/ R$, where $\Pi$ is the set of all injections $\pi:V(\grH)\to V(G)$ such that $i\sim j$ in $\grH$ implies $\pi(i)\sim \pi(j)$ in $G$, and $R$ is an equivalence relation on $\Pi$ where $\pi, \pi'\in\Pi$ are equivalent if $\pi=\sigma(\pi')$ for some $\sigma$ from the automorphism group of $\grH$.

Define the {\em$\grH$-degree distribution} $\pGFhat$ of the graph $G$ to be the empirical distribution of the vertices $\grH$-degrees, i.e.,  
\begin{align}
    \pGFhat(m) = \frac{1}{n}\sum_{v\in V(G)} \mathds{1} (v\textrm{ has $\grH$-degree } m). 
\end{align}
Furthermore, define \textit{normalized} $\grH$-degree distribution $\pGF$, obtained from $\pGFhat$ by normalizing the latter to have support in $[0,1]$. Namely,  letting $\bH$ denote the $\grH$-degree of a vertex in the complete graph on $n$ vertices (which is the maximal possible), then $X = \frac{\hat{X}}{\bH}$ with $\hat{X}\sim \pGFhat$ implies $X\sim \pGF$. For example, if $\grH$ is a clique of size $r$ we have $\bH={\binom{n-1}{ r-1}}$. 

For any distribution $\p$ with support in $[0,1]$, we define the {\em $\grH$-histogram} $\hist(\p,\grH,\delta,n)$ as the set of all graphs $G$ on $n$ vertices, whose normalized $\grH$-degree distribution $\pGF$ is $\delta$-close to $\p$ in KS distance, i.e., 
    \begin{equation}
        \hist(\p,\grH,\delta,n) = \{G\in\G_n : \KS(\pGF,\p)\leq\delta\}.
    \end{equation}
From here on, we suppress $n$ and write $\hist(\p,\grH,\delta)$ to denote $\hist(\p,\grH,\delta,n)$. In what follows, we will be interested in characterizing the $\grH$-histogram size. Specifically, we will show that the logarithm of the $\grH$-histogram size is approximately equal to the entropy of a random graph with independent edges on $k\ll n$ vertices, where the maximization is subject to suitable subgraph density constraints. 

To that end, define the $\grH$-density of a graph $G$ to be 
\begin{align}\label{eq:density_deterministic}
    t(G,\grH) \triangleq \frac{\text{number of  copies of $\grH$ in $G$}}{\cH},
\end{align}
where $\cH$ is the number of  copies of $\grH$ in the complete graph on $n$ vertices. The $\grH$-density $t(\S,\grH)$ of a random graph $\S\sim (s_{ij})$ can be defined similarly, to be (roughly) the expected fraction of appearances of $\grH$ in $S$; the exact expression for the mean density, which is a polynomial in $(s_{ij})$, is discussed later in Section~\ref{sec:szemerdi_types}. 
Given an indexed family of $d$ small graphs $\F=\{\grH_\k\}_{\k=1}^d$  we write $\t(\S, \F)$ to denote the corresponding vector of $\grH_\k$-densities. Let $\bld{\gamma}, \v\in [0,1]^d$, and define 
\begin{align}\label{opt:MP}
    \MP(\v,\bld{\gamma},\F,k) = &\max_{\S:  |V(S)| = k,\, |\t(\S,\F)-\v|\leq \bld{\gamma}}  \tfrac{1}{k^2}\cdot H(S)
\end{align}
where $H(S)=\sum h(s_{ij})$. Namely, $\MP(\v,\bld{\gamma},\F,k)$ is the maximal possible (per-edge) entropy of a random graph $\S$ on $k$ vertices, with $\grH_\k$-densities that are $\gamma_\k$-close to $\vm$. 

\newcommand{\Hi}[1]{\grH^{#1}}
Our main result shows that for a specific choice of $\v,\bld{\gamma}$, and $\F$, the function $\MP(\v,\bld{\gamma},\F)$ gives upper and lower bounds on the $\grH$-histogram size. To state our result, we need a few more definitions. First, let $\Hi{m}$ be a graph on $1+\k(r-1)$ vertices obtained by taking $m$ disjoint copies of $\grH$, and merging their root vertices into a single joint root vertex. Now, fixing some $d$ and the graph $\grH$, let 
\begin{align}\label{eq:F_set}
    \F^d \triangleq \left\{\Hi{1},\cdots,\Hi{d}\right\}.
\end{align}
Furthermore, set some reference distribution $\p$ and let $X\sim \p$, and $\v(\p)\in \mathbb{R}^d$ have entries 
\begin{align}\label{eq:density_approximation_out_of_p} 
    \vm(\p) \triangleq  \ci \cdot \xmom^\k,\quad  \k\in[d]%=\frac{1}{a_\k \cdot \cHi}\E_p ,
\end{align}
where 
\begin{align}\label{eq:asymptotic_normalization_ratio}
    \ci \triangleq {\displaystyle \lim_{n\rightarrow \infty }}\frac{n\bH^\k}{a_\k \cHi \k!}  \leq (\k(r-1)+1)^{\k(r-1)+1}, 
\end{align}
and $a_\k=|\mathrm{Aut}(\grH^\k)|$ is the size of the automorphism group of $\grH^\k$. The limit \eqref{eq:asymptotic_normalization_ratio} exists since $n\cdot \bH^\k$ and $\cHi$ are polynomials of the same degree in $n$. We later show that $\vm(\pGF)$ approximates the $\grH^\k$-density of $G$ with $O(n^{-1})$ error.
We will also show that the $\KS$-distance between two distributions can be bounded in terms of the closeness of their first $d$ moments (and therefore in terms of $\vm(\cdot)$, $\k=1,\cdots,d$), thus  translating the  $\grH$-degree distribution constraint to $d$ global density constraints. 

We are now ready to state our main result. For better readability, we omit some technical details at this point, and provide them later in Subsection~\ref{sec:sol_of_densities_problem} (see Theorem~\ref{thm:TypicalSetSizeGeneral}). 
\begin{thm}\label{thm:main_theorem}
Let $\p$ be a distribution on $[0,1]$ that is absolutely continuous w.r.t. the Lebesgue measure, with a density bounded away from zero and infinity. Then, under some regularity conditions, for $k=k(\delta)$ and any sufficiently large $n$, 
\begin{align*}
   \MP(\v(p),\boldsymbol{\beta},\F^d,k)& \lessapprox\hspace{-0.03cm}  \frac{1}{n^2}\log\left|\hist(\p,\grH,\delta)\right|
   \lessapprox \MP(\v(p),\boldsymbol{\gamma},\F^d,k)
\end{align*}
where $\|\boldsymbol{\beta}\|_\infty = \Omega_d(\delta e^{-1/\delta})$ and $\|\boldsymbol{\gamma}\|_\infty = O_d(\delta)$, and $\lessapprox$ means up to some vanishing (with $n$) error terms. 
\end{thm}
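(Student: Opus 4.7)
The plan is to realize the two-stage reduction sketched just after the theorem statement: first convert the local condition $\KS(\pGF,\p)\leq\delta$ into a vector-density condition on $\t(G,\F^d)$, and then count the graphs satisfying this density condition via Szemer\'edi types and the regularity lemma.

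For the \textbf{reduction to densities}, the starting point is that $\phi_m(\p)=\ci\cdot\mathbb{E}X^m$ is, up to the explicit constant $\ci$, the $m$-th moment of $\p$, and (as stated in the paper) $\phi_m(\pGF)$ approximates $t(G,\grH^m)$ with $O(1/n)$ error. So the entire step reduces to comparing KS-closeness of $\pGF$ to $\p$ with closeness of their first $d$ moments. The easy direction (giving the upper-bound radius $\|\bld{\gamma}\|_\infty=O_d(\delta)$) is one integration by parts: for distributions on $[0,1]$, $|\mathbb{E}X^m-\mathbb{E}Y^m|\leq \KS(\p,q)$, so multiplying by the bounded constant $\ci\leq(m(r-1)+1)^{m(r-1)+1}$ and taking the max over $m\leq d$ yields $|\bld{\phi}(\pGF)-\bld{\phi}(\p)|_\infty=O_d(\delta)$. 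Hence $\hist(\p,\grH,\delta)$ is contained in the density-constrained set of radius $\bld{\gamma}$.

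The converse direction (giving the lower-bound radius $\|\bld{\beta}\|_\infty=\Omega_d(\delta e^{-1/\delta})$) is a quantitative moment problem. I would approximate each indicator $\mathds{1}_{[0,x]}$, $x\in[0,1]$, in $L^1(\p)$ by a polynomial $P(y)=\sum_{m=0}^d a_m y^m$. Because $\p$ has density bounded away from infinity, the $L^1(\p)$ error of the best such approximation is controlled by the concentration function $S_\p(1/d)=O(1/d)$, while the coefficient vector of the corresponding Chebyshev-type extremal polynomial satisfies $\|a\|_1=O(C^d)$ for some absolute $C>1$. Matching the first $d$ moments to within $\eta$ therefore gives $\KS(\pGF,\p)\lesssim S_\p(1/d)+C^d\eta$, and balancing the two terms at level $\delta$ forces $d\asymp 1/\delta$ and $\eta\asymp \delta e^{-1/\delta}$, which is exactly $\bld{\beta}$. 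Thus every graph with $|\t(G,\F^d)-\bld{\phi}(\p)|\leq \bld{\beta}$ lies in $\hist(\p,\grH,\delta)$.

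For the \textbf{counting via Szemer\'edi types}, I would invoke the regularity lemma with parameter $k=k(\delta)$ to equipartition $V(G)$ and read off its Szemer\'edi type $\S\sim(s_{ij})$. A standard counting lemma shows $\t(G,\F^d)\approx\t(\S,\F^d)$ because each $t(\cdot,\grH^m)$ is a fixed polynomial in the $s_{ij}$, while a direct enumeration bounds the number of $n$-vertex graphs with a given type by $\exp\!\bigl((1+o(1))\tfrac{n^2}{k^2}H(\S)\bigr)$. Combining with Stage~1, $\tfrac{1}{n^2}\log|\hist(\p,\grH,\delta)|$ is sandwiched between $\MP(\bld{\phi}(\p),\bld{\beta},\F^d,k)$ and $\MP(\bld{\phi}(\p),\bld{\gamma},\F^d,k)$, the maxima of $\tfrac{1}{k^2}H(\S)$ over Szemer\'edi types in the respective feasible regions.

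The \textbf{main obstacle} is the converse in Stage~1: obtaining the sharp $e^{-1/\delta}$ rate demands an extremal polynomial construction that simultaneously controls the $L^1(\p)$-error near the discontinuity of the indicator and the $\ell_1$-norm of the coefficient vector, the two bounds interacting only through the anti-concentration hypothesis on $\p$. A secondary difficulty is bookkeeping across the two very different scales $\bld{\beta}\ll\bld{\gamma}$ while the regularity-lemma parameter $k$ grows tower-like in $1/\delta$: all the $o(1)$ errors from the counting lemma, from the approximation $\phi_m(\pGF)\approx t(G,\grH^m)$, and from the Szemer\'edi-type approximation must be absorbed below the tiny radius $\bld{\beta}$, which is the origin of the unspecified ``regularity conditions'' in the statement and motivates the finite-$n$ ``effective radius'' notion alluded to in Section~\ref{sec:sol_of_densities_problem}.
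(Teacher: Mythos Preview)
Your two-stage structure---reduce KS-closeness to moment/density closeness, then count via Szemer\'edi types---matches the paper exactly, and your treatment of Stage~2 (regularity lemma, counting lemma, type-class enumeration, and the effective-radius bookkeeping you flag at the end) is essentially the paper's argument in outline.

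The genuine divergence is in the converse direction of Stage~1. The paper does \emph{not} approximate the indicator $\mathds{1}_{[0,x]}$ by polynomials. Instead it applies Fainleib's generalization of Esseen's smoothing inequality, which bounds $\KS(p,q)$ by the concentration function $S_p(1/T)$ plus $\int_0^T|\psi_p(t)-\psi_q(t)|/t\,dt$, and then Taylor-expands the characteristic functions $\psi_p,\psi_q$ in the moments. This yields
\[
\KS(p,q)\;\lesssim\; S_p(1/T)\;+\;e^T\Bigl(\gamma+\tfrac{T^{d+1}}{d!\,d}\Bigr),
\]
with two independent dials: the Fourier cutoff $T$ and the moment count $d$. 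Your polynomial-majorant route is more elementary (no Fourier analysis) but collapses both scales into the single degree $d$, which is why you are forced to $d\asymp1/\delta$ whereas the paper keeps $d$ fixed and tunes $T\asymp1/\delta$. Both routes arrive at the $\delta e^{-c/\delta}$ rate. One caution on your version: controlling the approximation error by $S_p(1/d)$ really requires one-sided polynomial \emph{majorants and minorants} (Selberg--Beurling style), not merely a good $L^1(p)$-approximant---since $q=\pGF$ is a discrete measure with no density bound, you cannot control $\int(P-\mathds{1}_{[0,x]})\,dq$ by an $L^1(p)$-norm. The paper's Fourier approach sidesteps this asymmetry by working with the signed measure $p-q$ directly on the transform side.
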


%%%%%%%%%%%%%%%%%%%%%%%%%%%%%%%%%%%%%%%%%%%%%%%%%%%%%%%%%%%%%%%%%%%%%%%%%%%%%%%%%%%%%%%%%%%%%%%%%%%%%%%%%%%%%%%%%%%%%%%%%%%%%%%%%%%%%%%%%%%%%%%%%%%%%%%%%
%% Szemer\'edi Types
%%%%%%%%%%%%%%%%%%%%%%%%%%%%%%%%%%%%%%%%%%%%%%%%%%%%%%%%%%%%%%%%%%%%%%%%%%%%%%%%%%%%%%%%%%%%%%%%%%%%%%%%%%%%%%%%%%%%%%%%%%%%%%%%%%%%%%%%%%%%%%%%%%%%5

\section{Szemer\'edi Types}\label{sec:szemerdi_types}
% \subsection{Szemerdi Types}
We begin by defining two important entities: a {\em $(k,\eps)$-uniform partition} of a graph and a {\em Szemer\'edi type} of a graph. We will later use these notions in Subsection~\ref{sec:sol_of_densities_problem} when we solve the maximum entropy problem with density constrains. Let $G$  be a graph and $A,B\subset V(G)$ be a pair of disjoint subsets. The density of the pair $(A, B)$ is the fraction $d(A,B)=\frac{e(A,B)}{|A||B|}$ where $e(A,B)$ is the number of edges with one endpoint in $A$ and second in $B$ and $|A|,|B|$ denote the cardinalities of $A$ and $B$, respectively. The pair $(A,B)$ is called $\eps$-uniform if for every $A'\subseteq A$, $B'\subseteq B$, $|A'|\geq\eps |A|$, $|B'|\geq\eps |B|$, it holds that $|d(A',B') - d(A,B)|<\eps$. A partition $V(G) = C_0  \cup C_1 \cup \cdots \cup C_k$ is called $(k,\eps)$-uniform if 
    \begin{enumerate}
        \item $|C_0| < \eps |V(G)|$. 
        \item $|C_1|=|C_2|=\cdots =|C_k|$.
        \item all but $\eps\binom{k} {2}$ of the pairs $(C_i,C_j)$ are $\eps$-uniform. 
    \end{enumerate}

    A \textit{Szemer\'edi type} is a triplet $(k,\eps,\S)$, where $k\in\mathbb{N}$, $\eps\in[0,1]$, and $S$ is a $k\times k$ symmetric matrix with entries in $[0,1]$. We say that a graph $G$ has Szemer\'edi type $(k,\eps,\S)$ if there exists some $(k,\eps)$-uniform partition of $G$ and $[\S]_{ij} = d(C_i,C_j)$ for $i\neq j\in[k]$. From hereon we denote $[\S]_{ij}\triangleq s_{ij}$. We will loosely refer to the matrix $\S$ as a $(k,\eps)$-Szemer\'edi type, or simply a Szemer\'edi type, when  $k,\eps$ are clear from context. The {\em Szemer\'edi type class} $\Lambda(k,\eps,\S,n)$ is defined as the collection of all graphs on $n$ vertices for which $\S$ is a $(k,\eps)$-Szemer\'edi type. We will sometimes write $\Lambda(\S,n)$ when $k,\eps$ are clear from context. 
     
    Note that a graph can have multiple $(k,\eps)$-Szemer\'edi types, or none at all. The following well-known lemma shows that when $n$ is sufficiently large, at least one such type exists. 
    
    \begin{lem}[Szemer\'edi regularity lemma \cite{szemeredi1975regular}] For any $\eps > 0$, there exist positive
integers $n_0=n_0(\eps)$, $m_0=m_0(\eps)$ and $k=k(\eps)$, $m_0\leq k\leq n_0$,  such that every graph with at least $n_0$ vertices has  at least one $(k,\eps)$-Szemer\'edi type.
\end{lem}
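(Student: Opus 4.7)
The plan is to prove this via the classical energy-increment argument of Szemer\'edi. The central object is the \emph{index} of a partition $\mathcal{P}=\{C_1,\ldots,C_k\}$ of $V(G)$, defined as
\begin{align}
\mathrm{ind}(\mathcal{P}) = \sum_{i,j=1}^{k} \frac{|C_i||C_j|}{n^2}\,d(C_i,C_j)^2,
\end{align}
which lies in $[0,1]$ since $d(C_i,C_j)\in[0,1]$.

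First, I would establish a \emph{refinement lemma}: if $\mathcal{P}$ fails to be $\eps$-uniform in the sense of condition (3), then there is a refinement $\mathcal{P}'$ with $|\mathcal{P}'|\leq k\cdot 2^{k}$ parts and $\mathrm{ind}(\mathcal{P}')\geq \mathrm{ind}(\mathcal{P})+\eps^{5}$. The key sub-step is a defect Cauchy--Schwarz calculation: for any pair $(A,B)$ that is not $\eps$-uniform, witness subsets $A'\subseteq A$, $B'\subseteq B$ with $|A'|\geq\eps|A|$, $|B'|\geq\eps|B|$ and $|d(A',B')-d(A,B)|\geq\eps$ must exist by definition, and subdividing $A,B$ along these witnesses strictly increases the $L^{2}$-mass of the step density function by at least $\eps^{4}|A||B|/n^{2}$. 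Summing over the at least $\eps\binom{k}{2}$ bad pairs yields a global increment of at least $\eps^{5}$.

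Next, I would iterate the refinement starting from an arbitrary equitable partition into some initial number of parts $m_0$. Because the index is bounded above by $1$ and grows by at least $\eps^{5}$ at each non-terminal step, the procedure halts after at most $\lceil\eps^{-5}\rceil$ iterations with an $\eps$-uniform partition into $k$ parts. Since each refinement can multiply the number of parts by up to $2^{k}$, the final $k$ is bounded by the $\lceil\eps^{-5}\rceil$-th iterate of the map $k\mapsto k\cdot 2^{k}$, which gives the tower-type bound and defines $k(\eps)$ and $n_0(\eps)$ in the statement.

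Finally, a technical cleanup enforces the equal-size condition $|C_{1}|=\cdots=|C_{k}|$ and the exceptional-class bound $|C_{0}|<\eps n$: each $C_{i}$ produced by the above process is further subdivided into blocks of a fixed common size $\lfloor n/K\rfloor$ for a slightly larger $K$, and any leftover vertices are deposited into an exceptional class $C_{0}$. Standard counting confirms this perturbs pairwise densities by $O(\eps)$ and preserves $\eps$-uniformity of nearly all pairs, at worst replacing $\eps$ by an absolute constant multiple of $\eps$ in the final parameters. The main obstacle is the refinement lemma itself: its proof hinges on converting the $\eps$-witness discrepancy on each bad pair into a quadratic $\eps^{2}$ energy gain via the defect Cauchy--Schwarz step, and then aggregating these gains consistently across all bad pairs when the refinements of different parts are taken as a common refinement. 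Once that inequality is in hand, the iteration and the equitability cleanup are essentially bookkeeping.
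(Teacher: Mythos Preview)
Your proposal is correct and follows essentially the same energy-increment argument the paper presents in its appendix: the same index functional, the same $\eps^4$ defect gain per irregular pair via Cauchy--Schwarz, the same $\eps^5$ global increment, and the same $\lceil\eps^{-5}\rceil$-step iteration. You actually go slightly further than the paper, which explicitly proves only a weaker version (no equal-size parts, no single $k$ for all graphs) and omits the equitability cleanup you outline at the end.
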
\label{lem:SzemerLem}

We remark that the $k$ in the above lemma is usually very large. It is easily lower bounded by $k \geq {1}/{\eps}$, but is in fact much larger; the best known upper bound is a tower of exponentials of height proportional to $\eps^{-5}$ \cite{szemeredi1975regular}, and this cannot be significantly improved~\cite{gowers1997lower}. However, the important point is that  $k$ is a function of $\eps$ only, and does not depend on $n$.

    A Szemer\'edi partition is essentially a coarse partition of the vertex set that ``looks random''.  A Szemer\'edi type $(k,\eps,\S)$ therefore naturally corresponds to a random graph $\S\sim (s_{ij})$ over $k$ vertices, and we can think of $t(\S,\grH)$ as essentially being the probability of seeing a  copy of $\grH$ when looking at $r$ uniformly random vertices of the random graph $\S$. More accurately, let $J = (J_1,\ldots J_r)$ be uniformly distributed over $[k]^r$, and $\S|_{J}$ be a graph with vertex set $[r]$, where $i\sim j $ in $\S|_{J}$ if and only if $J_i\neq J_j$ and $J_i \sim J_j$ in $\S$. Then, we define  
    \begin{align}
        t(\S,\grH)\triangleq \Pr(\grH\text{ is a subgraph of }S|_{J}).
    \end{align}

Note that the above extends the deterministic subgraph density~\eqref{eq:density_deterministic}, with the small (but significant) distinction of allowing multiple drawings of the same vertex. 
The exact expression of $t(\S,\grH)$ is a polynomial of degree at most $\binom{r}{ 2}$.
In the simple case that $\grH$ is clique it is given by
\begin{align}
    t(\S,\grH)\triangleq \frac{1}{\binom{k}{r}} \sum_{A\subseteq [k], |A|=r}\;\prod_{i<j\in A}s_{ij}.
\end{align} 
To generalize the above to non-cliques, we need some adjustments.  
Now, vertices in $\grH$ that are not connected with an edge between them can be chosen in the same set $C_i$. Hence, the mean $\grH$-type depends on all the different possible {\em proper vertex colorings} of $\grH$. A proper vertex coloring of a graph is a labeling of the graph’s vertices with colors such that no two vertices sharing the same edge have the same color.
We define by $\mathcal{X}$ the set of all proper coloring of $\grH$. Then, $|X|$ denotes the number of different colors in the coloring $X\in\mathcal{X}$, and we have that $ \chi \leq |X| \leq r$, where $\chi$ is the {\em chromatic number} of $\grH$. The occurrence vector of a coloring $X$, is a vector of length $|X|$ whose $\ell$th entry records the number of vertices that has the $\ell$th color in the coloring. 
We call two coloring different if their corresponding occurrence vectors are not equal in some coordinate. 
Define the mean $\grH$-density of a $(k,\eps)$-Szemer\'edi type $S$ to be
\begin{equation*}
    t(\S,\grH) \triangleq \frac{1}{a_1\binom{k}{r}}\sum_{p=\chi}^r  \sum_{_{X\in\mathcal{X}:\; |X|=p}} \sum_{\small{\begin{subarray}{l}A\subseteq [k]\\
    A=\{i_j\}_{j=1}^{|X|}\end{subarray}}}\prod_{\small{\begin{subarray}{l} \;\;\;\ell<j\in [|X|]\\ \;\;x(\ell)\neq x(j)\end{subarray}}}\hspace{-0.4cm}s_{i_{\ell}i_{j}},
\end{equation*} 
where $a_1=|\mathrm{Aut}(\grH)|$  is the size of the automorphism group of $\grH$.
In the above, the outer sum is over all the possible sizes $p$, $\chi\leq p\leq r$ of coloring of $\grH$, and the inner sum runs over all the proper coloring $X$ with size $|X|=p$.

In Lemma~\ref{lem:counting_lemma}, we will show that all graphs $G\in\Lambda(\S,n)$ have $\grH$-density that is approximately equal to $t(\S,\grH)$, and moreover, have bounded deviation from this average quantity.  
  
%%%%%%%%%%%%%%%%%%%%%%%%%%%%%%%%%%%%%%%%%%%%%%%%%%%%%%%%%%%%%%%%%%%%%%%%%%%%%%%%%%%%%%%%%%%%%%%%%%%%%%%%%%%%%%%%%%%%%%%%%%%%%%%%%%%%%%%%%%%%%%%%%%%%%%%%%
%% Proof of Main Result
%%%%%%%%%%%%%%%%%%%%%%%%%%%%%%%%%%%%%%%%%%%%%%%%%%%%%%%%%%%%%%%%%%%%%%%%%%%%%%%%%%%%%%%%%%%%%%%%%%%%%%%%%%%%%%%%%%%%%%%%%%%%%%%%%%%%%%%%%%%%%%%%%%%%5    
\section{Proof of Main Result}\label{sec:proof_of_main_result}
Let $\F$ be a family of $d$ graphs. Let $\v, \bld{\gamma}\in [0,1]^d$. Then, the {\em $\F$-densities set} $\Bveps$, is the set of all graphs on $n$ vertices whose $\F$-densities vector is $\bld{\gamma}$-close to $v$, i.e., 
\begin{align} \label{eq:typical_set_Bv}
    \Bveps \triangleq \{G\in\G_n:|\t(G,\F)-\v|\leq \bld{\gamma}\}
\end{align}

Recall the specific definition of the family $\F^d$ in ~\eqref{eq:F_set}, induced by a single graph $F$. The first step in proving our main result is showing that for $\v = \v(p)$ defined in~\eqref{eq:density_approximation_out_of_p}, there exist a choice of $\bld{\gamma}=\bld{\gamma}(\p,\delta,d)$ and $\bld{\beta}=\bld{\beta}(\p,\delta,d)$, such that $\Bvepsd(\v,\bld{\beta}) \subseteq \hist(\p,\grH,\delta)\subseteq \Bvepsd(\v,\bld{\gamma})$. This will be done in Subsection~\ref{sec:reduction_to_densities}. 
 Then, in Subsection~\ref{sec:sol_of_densities_problem} we give sufficient conditions on $\v$ under which the size of the set $\Bvepsd(\v,\bld{\gamma})$ is (asymptotically) equal to the maximum entropy solution $\MP(\v,\bld{\gamma},\F^d,k)$ given in~\eqref{opt:MP}.

%%%%%%%%%%%%%%%%%%%%%%%%%%%%%%%%%%%%%%%%%%%%%%%%%%%%%%%%%%%%%%%%%%%%%%%%%%%%%%%%%%%%%%%%%%%%%%%%%%%%%%%%%%%%%%%%%%%%%%%%%%%%%%%%%%%%%%%%%%%%%%%%%%%%%%%%%
%% Reduction to Vector of Densities
%%%%%%%%%%%%%%%%%%%%%%%%%%%%%%%%%%%%%%%%%%%%%%%%%%%%%%%%%%%%%%%%%%%%%%%%%%%%%%%%%%%%%%%%%%%%%%%%%%%%%%%%%%%%%%%%%%%%%%%%%%%%%%%%%%%%%%%%%%%%%%%%%%%%5

 \subsection{Reduction to Vector of Densities} \label{sec:reduction_to_densities}

We now show that the $\grH$-histogram problem can be reduced to a densities-type enumeration problem. This result is embodied in Lemmas~\ref{lem:upper_bound} and~\ref{lem:LowerBound}. To that end, we need two supporting lemmas. The first lemma shows that the $\F^d$-densities $\t(G,\F^d)$ can be approximated from the $\grH$-degree distribution $\pGF$.

\begin{lem}\label{lem:density_equal_moment}
Let $G$ be a graph with $\grH$-degree distribution $\pGF$ and $\F^d$-densities vector $\t(G,\F^d)$. Let $\v(G)$ be computed from $\pGF$ as in \eqref{eq:density_approximation_out_of_p}.
Then, 
\begin{align}
    \|\t(G,\F^d)-\v(\pGF)\|_\infty =O(1/n).
\end{align}
\end{lem}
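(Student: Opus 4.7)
The plan is to prove the $\|\cdot\|_\infty$ bound entry-wise: for each $\k \in \{1, \ldots, d\}$, show $|t(G, \grH^\k) - \vm(\pGF)| = O(1/n)$. I would do this by expressing both quantities as explicit sums over vertices of $G$ and matching them via a combinatorial counting argument; the agreement will follow, up to $O(1/n)$ error, from the very asymptotic definition of $\ci$.

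Expanding the moment gives
\begin{align*}
\vm(\pGF) = \ci \cdot \E X^\k = \frac{\ci}{n \bH^\k} \sum_{v \in V(G)} \hat X_v^\k,
\end{align*}
where $\hat X_v$ denotes the unnormalized $\grH$-degree of $v$. On the density side, for $\k \geq 2$ the root of $\grH^\k$ is topologically distinguished (e.g.\ as the unique cut vertex where the $\k$ blocks meet), so every $\grH^\k$-subgraph of $G$ is uniquely rooted at some vertex; letting $Y_v$ count the rooted $\grH^\k$-copies at $v$, we obtain $t(G, \grH^\k) = \cHi^{-1} \sum_v Y_v$.

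The key combinatorial step is the identity $Y_v = \hat X_v^\k / a'_\k + O(\hat X_v^\k / n)$, where $a'_\k = \k!\,(a'_1)^\k$ is the order of the root-fixing subgroup of $\mathrm{Aut}(\grH^\k)$, accounting for permutations of the $\k$ blocks as well as for root-fixing automorphisms within each block. Indeed, a rooted $\grH^\k$-copy at $v$ is equivalent to an unordered family of $\k$ rooted $\grH$-copies at $v$ whose non-root vertex sets are pairwise disjoint. The number of ordered $\k$-tuples of rooted $\grH$-copies at $v$, ignoring disjointness, is exactly $\hat X_v^\k$; those failing disjointness contribute at most $O(\hat X_v^{\k-1} \cdot n^{r-2})$, since the number of rooted $\grH$-copies at $v$ passing through any fixed non-root vertex of $G$ is $O(n^{r-2})$. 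Because $p$ has density bounded away from $0$, $\hat X_v = \Theta(n^{r-1})$ for typical $v$, making the overlap correction $O(\hat X_v^\k / n)$ in relative terms.

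Combining the two sides yields $t(G, \grH^\k) = \sum_v \hat X_v^\k / (a'_\k \cHi) \cdot (1 + O(1/n))$, and matching to $\vm(\pGF)$ reduces to verifying that $(n \bH^\k) / (\ci \cdot a'_\k \cHi) = 1 + O(1/n)$, which is exactly the finite-$n$ form of the limiting definition $\ci = \lim_n n \bH^\k / (a_\k \cHi \k!)$ (using $a_\k = a'_\k$ in the generic case where the root of $\grH^\k$ is fixed by every automorphism). Taking the maximum over the $d$ coordinates then preserves the $O(1/n)$ bound. The main technical difficulty will be the careful bookkeeping of the automorphism factors---distinguishing $\mathrm{Aut}(\grH^\k)$, its root-fixing subgroup, and the orbit of the root under $\mathrm{Aut}(\grH^\k)$---so that the leading terms cancel exactly, together with verifying that atypical vertices $v$ with very small $\hat X_v$ contribute negligibly to the summed overlap term under the density assumption on $p$.
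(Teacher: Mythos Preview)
Your approach is essentially the paper's. Both proofs express $t(G,\grH^\k)$ and $\vm(\pGF)$ in terms of the unnormalized $\grH$-degrees $\hat X_v$, identify an overlap / ``bad copy'' error, bound it by $O(n^{\k(r-1)})$, and finish by invoking $\cHi=\Theta(n^{\k(r-1)+1})$ together with the definition of $\ci$. The only stylistic difference is that the paper routes through the binomial $\binom{\hat X}{\k}$ and then expands via Stirling numbers of the first kind to recover $\E X^\k$, whereas you work with $\hat X^\k$ directly; the two are equivalent up to the same lower-order corrections.

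One point to clean up: the lemma carries no hypothesis on $p$, and none is needed. Your framing of the overlap correction as a \emph{relative} per-vertex error $O(\hat X_v^\k/n)$ is what forces you to assume $\hat X_v=\Theta(n^{r-1})$ and to worry about atypical vertices. Instead, sum the \emph{absolute} overlap bound: using only $\hat X_v\le \bH=O(n^{r-1})$ one has
\[
\sum_{v} O\bigl(\hat X_v^{\k-1}\, n^{r-2}\bigr)\;\le\; n\cdot \bH^{\k-1}\cdot O(n^{r-2})\;=\;O\bigl(n^{\k(r-1)}\bigr),
\]
and dividing by $\cHi=\Theta(n^{\k(r-1)+1})$ gives $O(1/n)$ unconditionally. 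The paper does this in one line, with the even coarser observation that any bad configuration is determined by at most $\k(r-1)$ vertices of $G$. Your concern that small-$\hat X_v$ vertices might contribute disproportionately is thus a non-issue (they contribute \emph{less}, not more), and the appeal to the density lower bound on $p$ should be dropped.
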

% \iflongpaper
% \fullproofs{
\begin{proof}
We prove for $m>1$. The proof for $\k=1$ is the same except for the normalization constants. First, we show that one can approximate $t(G,\grH^\k)$ using
\begin{align}
    \tilde{t}(G,\grH^\k) = \frac{n}{\cHi}\cdot \E{\binom{\hat{X}}{m}}.
\end{align}
where $\hat{X}\sim \hat{p}(G,\grH)$ is the (un-normalized) $\grH$-degree r.v. 
Then, we show that $|\vm-\tilde{t}(G,\grH^\k)|=O(n^{-1})$. 
A \textit{bad copy of $\grH^\k$} is any graph $Q$ that consist of $\k$ copies of $\grH$ that share one root vertex, and has at least one more vertex that is common between the different copies. Hence $|V(Q)| <1+\k(r-1)$. Let $w_{\k,\text{bad}}$ denote the number of bad copies of $\grH^\k$ in $G$. 
Then 
\begin{align}
   \tilde{t}(G,\grH^\k) =t(G,\grH^\k) + \frac{1}{\cHi}w_{\k,\text{bad}}.
\end{align}
Next, note that since any such bad copy in $G$ involves choosing at most $\k(r-1)$ vertices out of $n$ and accounting for their different permutations, we get that  $w_{\k,\text{bad}}\leq \sum_{p=1}^{\k(r-1)}2^p \binom{n}{p}=O(n^{\k(r-1)})$. Then, since $\cHi=\Theta(n^{\k(r-1)+1})$ (see for example \cite{pippenger1975inducibility}) we get $$  \tilde{t}(G,\grH^\k)-t(G,\grH^\k)=\frac{1}{\cHi}w_{\k,\text{bad}} =O(n^{-1}).$$
Recall that
\begin{align}
    \binom{\hat{X}}{\k}=\sum_{\j=1}^\k \z_{\k,\j}\frac{\hat{X}^\j}{\k!}=\sum_{\j=1}^\k \z_{\k,\j}\frac{\bH^\j\cdot X^\j}{\k!}, 
\end{align}
where $\z_{\k,\j}$ are the Stirling numbers of the first kind  whose absolute value is monotonically decreasing in $\j$, $\z_{\k,\k}=1$ and $|\z_{\k,1}|=(\k-1)!$. Hence \begin{align}
   \tilde{t}(G,\grH^\k) &= \frac{n\bH^\k}{a_\k \cHi \k!}\left(\xmomi+\sum_{\j=1}^{\k-1} \frac{\z[\k,\j]}{\bH^{\k-\j}}\cdot \xmom^\j \right)\\
   &= \frac{n\bH^\k}{a_\k \cHi \k!}\left(\xmomi+O(n^{-(r-1)}) \right)\label{line:limit_of_const0}\\
   &= (\ci+O(n^{-1}))\left(\xmomi+O(n^{-(r-1)}) \right)\label{line:limit_of_const}
   \\ &= \vm + O(n^{-1}),
\end{align}
where in \eqref{line:limit_of_const0} and \eqref{line:limit_of_const} we used the fact that $\xmom^\j\leq 1$, $\bH^\k=\Theta(n^{\k(r-1)})$ and $\cHi=\Theta(n^{\k(r-1)+1})$. \end{proof}

Next, we show that if two distributions are close in KS distance, then their corresponding moment vectors  are also close in the $L_\infty$ distance, and vice versa.

\begin{lem}\label{lem:KS_moments_equivalence}
Let $p$ and $q$ be two distributions over $[0,1]$, and set $X\sim p$, $Y\sim q$.
If $\KS(p,q) \leq \delta$, then for every $\k\in \mathbb{N}$
\begin{align}
    \left|\ymomi-\xmomi\right| \leq  \delta. 
\end{align}
Conversely, if $\left|\ymomi-\xmomi\right| \leq \gamma$ for any $\k\in[d]$, then 
\begin{align}\label{eq:KS_upper_bound_function_of_T}
\KS(p,q) \leq c \left(S_p(1/T) + e^T\cdot \left(\gamma  +  \frac{T^{d+1}}{d!d}\right) \right).
\end{align}
where $c\leq 51$, $S_p(u)$ is the concentration function of $p$, and $T > 1$ a parameter that can be optimized. 
\end{lem}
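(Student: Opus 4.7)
The proof splits into the two directions.

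\emph{Forward direction ($\KS$ implies moment closeness).} I would use the tail-sum identity: for $X\sim p$ supported in $[0,1]$,
\begin{equation*}
    \xmomi = \k \int_0^1 t^{\k-1} (1 - F_p(t))\, dt,
\end{equation*}
and similarly for $\ymomi$. Subtracting and bounding $|F_p - F_q| \leq \KS(p,q) \leq \delta$ gives
\begin{equation*}
    |\ymomi - \xmomi| = \k\left|\int_0^1 t^{\k-1}(F_p(t) - F_q(t))\, dt\right| \leq \k\delta\int_0^1 t^{\k-1}\,dt = \delta.
\end{equation*}

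\emph{Converse direction (moment closeness implies $\KS$ closeness).} Fix $x\in[0,1]$; the plan is to bound $|F_p(x) - F_q(x)|$ uniformly in $x$ using three ingredients that match the three summands of the claimed bound.

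(i) \emph{Smoothing.} Sandwich $\mathbf{1}_{t\leq x}$ between two continuous $[0,1]$-valued functions $\phi^-\leq \mathbf{1}_{t\leq x}\leq \phi^+$ whose transition from $1$ to $0$ occurs in a window of width $\sim 1/T$ around $x$ (e.g., built from exponential tails of the form $e^{-T(t-x)^+}$). The algebraic identity
\begin{equation*}
    F_p(x) - F_q(x) \leq \int(\phi^+ - \phi^-)\,dp + \int \phi^-(dp - dq),
\end{equation*}
together with its symmetric counterpart obtained by swapping the roles of $p$ and $q$, ensures that the ``smoothing residual'' is controlled solely by $\int(\phi^+ - \phi^-)\,dp = O(S_p(1/T))$, never by a concentration function of $q$.

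(ii) \emph{Polynomial approximation.} Approximate each $\phi^\pm$ by a degree-$d$ polynomial $P^\pm$ obtained from a truncated Taylor expansion of the exponential. Since $|T(t-x)|\leq T$ on $[0,1]$, the Lagrange remainder for $e^u$ yields
\begin{equation*}
    \|\phi^\pm - P^\pm\|_\infty \leq \frac{T^{d+1} e^T}{(d+1)!},
\end{equation*}
which matches the $T^{d+1}/(d!\,d)$ summand up to an absolute constant. Writing $P^\pm(t)=\sum_{k=0}^d c_k t^k$ in the monomial basis, the coefficients can be bounded so that $\sum_k|c_k| = O(e^T)$.

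(iii) \emph{Assembly.} Combine
\begin{equation*}
    \left|\int\phi^\pm(dp - dq)\right| \leq 2\|\phi^\pm - P^\pm\|_\infty + \gamma\sum_k |c_k| = O\!\left(\frac{T^{d+1}e^T}{(d+1)!} + e^T \gamma\right),
\end{equation*}
where the second summand uses the hypothesis $|\ymomi - \xmomi|\leq \gamma$ for $\k\in[d]$. Adding (i) and taking $\sup_x$ yields the claimed estimate \eqref{eq:KS_upper_bound_function_of_T} with a universal constant.

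\emph{Main obstacle.} The delicate part is step (ii): choosing the sandwich $\phi^\pm$ together with the polynomial $P^\pm$ so that simultaneously (a) the sandwich $\phi^- \leq \mathbf{1}_{t\leq x} \leq \phi^+$ is respected, (b) the approximation error has the claimed form $O(T^{d+1}e^T/(d+1)!)$, and (c) the coefficient sum stays $O(e^T)$ rather than the $O(e^{2T})$ that a naive monomial expansion of $e^{-T(t-x)}$ around $t=x$ would yield. Balancing these competing requirements while keeping the final constant $c$ independent of $T$ and $d$ is the technical crux; once it is done, the rest is bookkeeping.
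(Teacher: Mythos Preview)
Your forward direction is correct and identical to the paper's (integration by parts / tail-sum).

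For the converse, the paper does not attempt to approximate the indicator by polynomials in the spatial domain. Instead it invokes Fainleib's smoothing inequality,
\[
\KS(p,q)\le c\Bigl(S_p(1/T)+\int_0^T\frac{|\psi_p(t)-\psi_q(t)|}{t}\,dt\Bigr),
\]
and then Taylor-expands the characteristic functions $\psi_p(t)=\sum_m\frac{(it)^m}{m!}\,\E X^m$. Because $|e^{itx}|\equiv 1$, the tail $\sum_{m>d}t^m/m!$ is uniformly controlled and the first $d$ terms are bounded by $\gamma\sum_{m\le d}t^m/m!\le\gamma e^T$ directly from the moment hypothesis; integrating in $t$ gives the stated bound.

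The obstacle you flag is real, but it is more fundamental than the coefficient-sum issue (c); the genuine tension is between your (a) and (b). A sandwich like $\phi^+(t)=e^{-T(t-x)^+}$ is $[0,1]$-valued with the right transition width, but it has a kink at $t=x$ (left derivative $0$, right derivative $-T$), so no \emph{Taylor} polynomial approximates it to error $O(T^{d+1}e^T/(d+1)!)$---that Lagrange-remainder estimate presupposes $d{+}1$ bounded derivatives. Conversely, the entire function $e^{-T(t-x)}$ does admit the Taylor bound, but on $[0,x]$ it grows to $e^{Tx}$, so it is not $[0,1]$-valued and $\int(\phi^+-\phi^-)\,dp$ is no longer $O(S_p(1/T))$. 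The passage to characteristic functions resolves exactly this dilemma: $e^{itx}$ is simultaneously bounded and entire, so its power series has the $1/m!$ decay you need for (b) while the band-limited Fainleib kernel handles (a). That Fourier-analytic step is the missing ingredient in your sketch.
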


\begin{proof}
Let $P(x)$ and $Q(x)$ be the c.d.fs that correspond to the distributions $p$ and $q$, respectively.  
Note that 
\begin{align*}
    \xmomi=\int_0^1p(x)x^mdx = P(x)x^m\Big|_0^1-m\int_0^1x^{m-1}P(x)dx = 1-m\int_0^1 x^{m-1}P(x)dx, 
\end{align*}
where we used integration by parts. Similarly,
\begin{align*}
    \ymomi=1-m\int_0^1 x^{m-1}Q(x)dx, 
\end{align*}
then
\begin{align*}
    |\xmomi&-\ymomi|=\left|\int_0^1 m x^{m-1}(Q(x)-P(x))dx\right|\\
   & \leq \int_0^1 m x^{m-1}|Q(x)-P(x)|dx\\
   & \leq \KS(p,q)\cdot \int_0^1 mx^{m-1}dx =\delta
\end{align*}
Conversely, suppose the $m$th moments are $\gamma$-close, for any $m\in[d]$. Let $\psi_q(t)$ and $\psi_p(t)$ be the characteristic functions of $p$ and $q$, respectively.
From Fainleib's generalization of Esseen's inequality \cite{fainleib1968generalization,esseen1945fourier} we get that for any $T>0$, 
\begin{align}\label{eq:esseen_fainleib}
    \KS(p,q) \leq  c\left( S_p(1/T)+\int_0^T\frac{|\psi_q(t)-\psi_p(t)|}{t}dt\right).
\end{align}
Now, we can express the characteristic functions using a Taylor expansion, since all moments are in $[0,1]$: 
\begin{align}
    \psi_p(t)=\sum_{\k=0}^\infty \frac{(\sqrt{-1}t)^\k}{\k!}\xmomi,\;\;\;
        \psi_q(t)=\sum_{\k=0}^\infty \frac{(\sqrt{-1}t)^\k}{\k!}\ymomi.
\end{align}

Then we can write 
\begin{align}
    &\int_0^T\frac{|\psi_q(t)-\psi_p(t)|}{t}dt\\
    &\leq\int_0^T\frac{\sum_{\k=1}^d \frac{t^\k}{\k!}|\xmomi-\ymomi|+\sum_{\k=d+1}^\infty\frac{t^{\k}}{\k!}}{t}dt\label{line:moments_less_than_1}\\
        &=\sum_{\k=1}^d \frac{T^\k}{\k!\k}\left|\xmomi-\ymomi\right| + \sum_{\k=d+1}^\infty\frac{T^{\k}}{\k!\k} \label{eq:abs_conv_int}\\
    &\leq\sum_{\k=1}^d \frac{T^\k}{\k!}\cdot \gamma+\sum_{\k=d+1}^\infty\frac{T^{\k}}{\k!\k}\label{eq:gamma_ratio_cond_Esseen}\\
      &\leq e^T\cdot \left(\gamma  +  \frac{T^{d+1}}{d!d}\right).\label{line:exp_T}
\end{align}
In~\eqref{line:moments_less_than_1} we used $|\xmomi-\ymomi|\leq 1$ for any $\k$ since $p,q$ have support in $[0,1]$; we used Fubini's Theorem to switch the order of integration and summation in~\eqref{eq:abs_conv_int} since the infinite sum is absolutely convergent; inequality~\eqref{eq:gamma_ratio_cond_Esseen} follows from the assumption on moment differences; and in~\eqref{line:exp_T} we used $\sum_{\k=1}^d\frac{T^\k}{\k!}<e^T-1$, together with $\sum_{\k=d+1}^\infty\frac{T^{\k}}{\k!\k} \leq \frac{T^de^T}{d!d}$. Plugging this in~\eqref{eq:esseen_fainleib}, the proof is concluded. 
\end{proof}

Next, in Lemmas~\ref{lem:upper_bound} and~\ref{lem:LowerBound}, we show that there exist a choice of $\bld{\beta},\bld{\gamma}$ such that $\Bvepsd(\v,\bld{\beta}) \subseteq \hist(\p,\grH,\delta)\subseteq \Bvepsd(\v,\bld{\gamma})$. 

\begin{remark}
Lemma~\ref{lem:KS_moments_equivalence} relates closeness in lower moments to closeness in KS distance. We note that similar results have been recently established for the $p$-Wasserstein distance~\cite{rigollet2019uncoupled} induced by the Euclidean metric on $\mathbb{R}$. The $1$-Wasserstein distance is of particular relevance to our setting, since it can be written as 
\begin{align}\label{eq:Wasser_def}
    \mathcal{W}_1(\p, q) = \int_{0}^1  \left|\int_0^x dp - \int_0^x dq\right| dx.
\end{align}
for any two probability distributions $\p$ and $q$ supported in $[0,1]$. Namely, $\mathcal{W}_1(\p, q)$ is the \textit{average} absolute distance between the c.d.fs., compared to $\KS(\p,q)$ which is the \textit{maximal} absolute distance. Here, we chose to work with the KS distance since it has a clear operational meaning, bounding the deviation in the number of vertices with an $\grH$-degree in any given interval, whereas working with the 1-Wasserstein distance would correspond to a weaker notion of (roughly) an average deviation over intervals. If one is nevertheless interested in working with the 1-Wasserstein distance, then~\cite{rigollet2019uncoupled} gives the following elegant bound for $X\sim p$ and $Y\sim q$:
\begin{align}\label{eq:rigollet}
   \frac{1}{m}|\E X^m-\E Y^m| \leq \mathcal{W}_1(\p, q)\leq \max \left\{\frac{1}{\log(1/\delta)},\frac{1}{d}\right\}. 
\end{align} 
The lower bound (stemming from the Kantorovich-Rubinstein duality) holds for any integer $m$, and the upper bound is with $\delta = \max_{m\in[d]}|\E X^m-\E Y^m|$. 
Using this relation we can approximate the $\delta$-Wasserstein ball around the reference distribution $\p$ by two $\F^d$-densities sets similar to these of Lemmas~\ref{lem:upper_bound} and \ref{lem:LowerBound}, following the same proof techniques. 
Note that it follows from~\eqref{eq:Wasser_def} that for measures over the unit interval
\begin{align}
    \mathcal{W}_1(\p, q)\leq \KS(\p, q), 
\end{align}
hence our upper bound \eqref{eq:KS_upper_bound_function_of_T} yields a bound on the 1-Wasserstein distance. Conversely, however, an upper bound on the 1-Wasserstein distance cannot in general yield an upper bound on the KS distance, in particular when one of the distributions is discrete as is the case in our setting. Hence, the bound~\eqref{eq:rigollet} from~\cite{rigollet2019uncoupled} cannot be used in our setting. 
This issue also manifests itself in the fact that our bound depends on $\p$ and $q$, whereas~\eqref{eq:rigollet} does not. 
\end{remark}

\begin{lem}\label{lem:upper_bound}
Let $\p$ be a distribution on $[0,1]$ that is absolutely continuous w.r.t. the Lebesgue measure, with a density lower bounded by some constant $a>0$. Let $d\geq 1$ be an integer and define $\v(p) \in [0,1]^d$ as in~\eqref{eq:density_approximation_out_of_p}. 
 Then
\begin{align}
    \hist(\p,\grH,\delta)\subseteq \Bvepsd(\v(p),\bld{\gamma})
\end{align}
for $\boldsymbol{\gamma}_\k=\frac{2\k \ci}{a} \E X^\k\cdot \delta+O(n^{-1})$, where $\ci$ is given in~\eqref{eq:asymptotic_normalization_ratio} and $X\sim p$. 
\end{lem}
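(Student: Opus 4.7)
The plan is a three-step reduction from KS-closeness to density-closeness. Fix $G\in\hist(\p,\grH,\delta)$, set $X\sim\p$ and $Y\sim\pGF$, and let $P,Q$ be their c.d.f.s, so that $\|P-Q\|_\infty\leq\delta$. Since~\eqref{eq:density_approximation_out_of_p} applied to $\p$ gives $\vm(\p)=\ci\,\E X^\k$ and the same formula applied to $\pGF$ gives $\vm(\pGF)=\ci\,\E Y^\k$, the triangle inequality yields
\begin{align*}
 \bigl|[\t(G,\F^d)]_\k-\vm(\p)\bigr|\;\leq\;\bigl|[\t(G,\F^d)]_\k-\vm(\pGF)\bigr|\;+\;\ci\,\bigl|\E Y^\k-\E X^\k\bigr|.
\end{align*}
The first summand is $O(1/n)$ by Lemma~\ref{lem:density_equal_moment}, so everything reduces to a sharp bound on $|\E Y^\k-\E X^\k|$ in terms of $\delta$, $a$, and the moment $\E X^\k$.

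To obtain that bound I would start from the integration-by-parts identity
\begin{align*}
 \E Y^\k-\E X^\k\;=\;\k\int_0^1 x^{\k-1}\bigl(P(x)-Q(x)\bigr)\,dx
\end{align*}
and use the density hypothesis $\p\geq a$ via the change of variables $u=P(x)$, for which $dx=du/\p(x)\leq du/a$. Equivalently, $\p\geq a$ forces the quantile map $P^{-1}$ to be globally $(1/a)$-Lipschitz, whence $|P^{-1}(u)-Q^{-1}(u)|\leq\delta/a$ uniformly in $u$. Coupling $X=P^{-1}(U)$, $Y=Q^{-1}(U)$ with $U$ uniform on $[0,1]$ and applying the mean value theorem to $u\mapsto u^\k$ then converts this into a weighted moment estimate of the form $\tfrac{\k\delta}{a}\cdot(\text{moment of }X)$. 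The remaining step is to replace the weight $\E X^{\k-1}$ (coming from the naive expansion) by $\E X^\k$ as in the claim: this is done by absorbing the additive $\delta/a$ correction in the factor $(P^{-1}(u)+\delta/a)^{\k-1}$ and paying for it with the factor $2\k$. Multiplying by $\ci$ and adding the $O(1/n)$ error from Lemma~\ref{lem:density_equal_moment} produces exactly $\gamma_\k=\frac{2\k\,\ci}{a}\,\E X^\k\,\delta+O(1/n)$, which proves the inclusion.

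The main obstacle is precisely this sharpening of the generic moment-closeness estimate. The direct bound coming from Lemma~\ref{lem:KS_moments_equivalence} gives only $|\E Y^\k-\E X^\k|\leq\delta$, with no dependence on $\E X^\k$ and no use of the regularity of $\p$; were we content with that, $\gamma_\k$ would be merely $\ci\delta+O(1/n)$, which is too coarse for the Szemer\'edi-types optimization in Subsection~\ref{sec:sol_of_densities_problem} to close the entropy gap at the rate required by the main theorem. The density lower bound $\p\geq a$ is exactly the hypothesis that unlocks the Lipschitz quantile argument, and tightening the resulting $\E X^{\k-1}$-weighted bound down to the $\E X^\k$-weighted bound claimed is the delicate technical point that needs to be pinned down carefully.
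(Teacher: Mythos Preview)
Your overall decomposition---triangle inequality, Lemma~\ref{lem:density_equal_moment} for the $O(1/n)$ term, then a moment bound---matches the paper. But you have misdiagnosed where the density hypothesis enters, and as a result you are working much harder than necessary.

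You claim the direct bound $|\E Y^{\k}-\E X^{\k}|\leq\delta$ from Lemma~\ref{lem:KS_moments_equivalence} is ``too coarse'' and that a Lipschitz-quantile refinement is needed to produce the factor $\E X^{\k}$. This is a misconception: the paper uses exactly that direct bound, unrefined. The density hypothesis $p\geq a$ is invoked only through the trivial lower bound
\[
\E X^{\k}=\int_0^1 x^{\k}\,p(x)\,dx\;\geq\;a\int_0^1 x^{\k}\,dx\;=\;\frac{a}{\k+1}\;\geq\;\frac{a}{2\k},
\]
which gives $\tfrac{2\k}{a}\,\E X^{\k}\geq 1$ and hence $\delta\leq \tfrac{2\k}{a}\,\E X^{\k}\cdot\delta$. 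Multiplying by $\ci$ and adding the $O(1/n)$ from Lemma~\ref{lem:density_equal_moment} yields the stated $\gamma_{\k}$ immediately. In particular, the form $\gamma_{\k}=\tfrac{2\k\ci}{a}\,\E X^{\k}\,\delta$ is \emph{weaker}, not stronger, than the cruder $\ci\delta$ you were prepared to settle for; the $\E X^{\k}$ factor is there to display the relative error $|\vm(\p)-\vm(\pGF)|/\vm(\p)\leq \tfrac{2\k}{a}\delta$, not because it sharpens the absolute bound.

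Your quantile route delivers $|\E Y^{\k}-\E X^{\k}|\leq \tfrac{\k\delta}{a}\,\E(X+\delta/a)^{\k-1}$, and the step you flag as ``delicate''---replacing this by $2\E X^{\k}$---does not hold in general (for $X$ with mass near $0$ the left side need not be dominated by the right). So the obstacle you identify is both unnecessary and genuinely problematic for the approach you chose; the paper simply sidesteps it.
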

\begin{proof}
Let $G\in \hist(\p,\grH,\delta)$. Then
\begin{align}
    \KS(\p, \pGF)\leq  \delta,
\end{align}
hence from Lemma~\ref{lem:KS_moments_equivalence}
\begin{align}
    |\xmomi-\ymomi|\leq \delta,
\end{align}
where $X\sim p$ and $Y\sim \pGF$.
Further note that 
\begin{align}
    \xmomi &\geq \int_{0}^1 a x^\k dx =\frac{a}{\k+1}.
\end{align}
Hence we get
\begin{align}
    \frac{|\xmomi-\ymomi|}{\xmomi}\leq\frac{\delta}{\xmomi}\leq \frac{(\k+1)}{a}\cdot \delta\leq \frac{2\k}{a}\cdot \delta. 
\end{align}
which yields
\begin{align}
 \frac{| \vm(\p)-\vm(\pGF)|}{\vm(p)}\leq \frac{2\k }{a}\cdot \delta,
\end{align}
and from Lemma~\ref{lem:density_equal_moment}
\begin{align}
    |t(G,\grH^\k)-\vm(p)| \leq  \frac{2\k }{a} \vm(p)\cdot \delta+O(n^{-1}), 
\end{align}
where we have used~\eqref{eq:density_approximation_out_of_p}. Therefore $G\in \Bvepsd(\v(p),\bld{\gamma})$. 
\end{proof}

\begin{lem}\label{lem:LowerBound}
Let $\p$ be a distribution over the unit interval, $d\geq 1$ an integer, and $\v(p)$ as in~\eqref{eq:density_approximation_out_of_p}. Set any $\bld{\beta}\in[0,1]^d$ satisfying 
\begin{align}
    \beta_\k \leq \ci \left(e^{-T}\left(\frac{\delta}{51} - S_p(1/T)\right) - \frac{T^{d+1}}{d!d}\right), %e^{-T}\ci \k\cdot \delta -\frac{b}{T} +O_d(n^{-1})+ O\left(\frac{T^{d+1}}{(d+1)!d(d+1)}\right)
\end{align}
if possible, where $\ci$ given in~\eqref{eq:asymptotic_normalization_ratio} and $T > 1$ arbitrary. 
Then 
\begin{align}
    \Bvepsd(\v(p), \bld{\beta}) \subseteq \hist(\p,\grH,\delta).
\end{align}
\end{lem}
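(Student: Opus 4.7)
The plan is to mirror the proof of Lemma~\ref{lem:upper_bound}, but traversing the chain of implications in the opposite direction and invoking the converse part of Lemma~\ref{lem:KS_moments_equivalence}. Specifically, fix any $G \in \Bvepsd(\v(p), \bld{\beta})$, so that by definition $|t(G,\grH^\k) - \vm(p)| \leq \beta_\k$ for each $\k \in [d]$. Let $Y \sim \pGF$ and $X \sim p$. Combining this with Lemma~\ref{lem:density_equal_moment}, which ensures $|t(G,\grH^\k) - \vm(\pGF)| = O(n^{-1})$, and recalling the definition $\vm(q) = \ci \cdot \E Z^\k$ for $Z \sim q$, the triangle inequality yields
\begin{align*}
    |\E X^\k - \E Y^\k| \;\leq\; \frac{\beta_\k}{\ci} + O(n^{-1})
\end{align*}
uniformly for $\k \in [d]$.

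Next, I would feed this moment-closeness estimate into the second (converse) half of Lemma~\ref{lem:KS_moments_equivalence}, with $\gamma = \max_\k \beta_\k/\ci + O(n^{-1})$ and the same free parameter $T > 1$ that appears in the hypothesis of the present lemma. That bound gives
\begin{align*}
\KS(p, \pGF) \;\leq\; 51 \left( S_p(1/T) + e^T \left( \tfrac{\max_\k \beta_\k}{\ci} + O(n^{-1}) + \tfrac{T^{d+1}}{d!d} \right) \right).
\end{align*}
Plugging in the hypothesis $\beta_\k/\ci \leq e^{-T}(\delta/51 - S_p(1/T)) - T^{d+1}/(d!d)$ causes the $S_p(1/T)$ and $T^{d+1}/(d!d)$ contributions to telescope, leaving exactly $\delta$ (up to a vanishing $O(n^{-1})$ term that can be absorbed for $n$ sufficiently large, or silently, as in the statement of Theorem~\ref{thm:main_theorem}). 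Hence $\KS(p,\pGF) \leq \delta$, which is precisely the condition for $G \in \hist(p,\grH,\delta)$.

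There is essentially no conceptual obstacle here; the content of the lemma is the statement that the choice of $\beta_\k$ given in the hypothesis is exactly the value needed to make the Fainleib–Esseen bound close up at $\delta$. The only care required is bookkeeping the $O(n^{-1})$ error from Lemma~\ref{lem:density_equal_moment} and verifying that the cancellation in the displayed inequality above is tight. If one wanted to be completely precise about the $O(n^{-1})$ slack, one could either tighten the hypothesis by a $\ci \cdot e^{-T} \cdot O(n^{-1})$ term or simply restrict to $n$ large enough that the slack is below any prescribed tolerance, which is consistent with the ``sufficiently large $n$'' language of the main theorem.
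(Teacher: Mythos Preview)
Your proposal is correct and follows essentially the same approach as the paper: start with $G$ in the density set, invoke Lemma~\ref{lem:density_equal_moment} and the triangle inequality to pass from density closeness to moment closeness, and then apply the converse (Fainleib--Esseen) part of Lemma~\ref{lem:KS_moments_equivalence} to conclude $\KS(p,\pGF)\leq\delta$. Your treatment of the $O(n^{-1})$ slack is in fact more explicit than the paper's, which simply absorbs it without comment.
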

\begin{proof}
Let $G\in \Bvepsd(\v(p), \bld{\beta})$. Then by definition
\begin{align}
    |t(G,\grH^\k)-\vm(\p)|\leq \beta_\k,
\end{align} 
and by Lemma~\ref{lem:density_equal_moment}
\begin{align}
    |\vm(\p)-\vm(\pGF)|\leq \beta_\k+O(n^{-1}).
\end{align} 
Using~\eqref{eq:density_approximation_out_of_p} yields 
\begin{align}
    |\xmomi-\ymomi|\leq \beta_\k / \ci +O(n^{-1}),
\end{align}
for $X\sim \p$ and $Y\sim \pGF$. Therefore, picking $\beta_\k$ as in the lemma, and appealing to Lemma~\ref{lem:density_equal_moment}, we get 
\begin{align}\label{eq:KS_smaller_delta}
\KS(\p,\pGF)\leq \delta,
\end{align}
concluding the proof.
\end{proof}

%%%%%%%%%%%%%%%%%%%%%%%%%%%%%%%%%%%%%%%%%%%%%%%%%%%%%%%%%%%%%%%%%%%%%%%%%%%%%%%%%%%%%%%%%%%%%%%%%%%%%%%%%%%%%%%%%%%%%%%%%%%%%%%%%%%%%%%%%%%%%%%%%%%%%%%%%
%% Densities Set Size
%%%%%%%%%%%%%%%%%%%%%%%%%%%%%%%%%%%%%%%%%%%%%%%%%%%%%%%%%%%%%%%%%%%%%%%%%%%%%%%%%%%%%%%%%%%%%%%%%%%%%%%%%%%%%%%%%%%%%%%%%%%%%%%%%%%%%%%%%%%%%%%%%%%%5

\subsection{Densities Set Size} \label{sec:sol_of_densities_problem}

From Lemmas~\ref{lem:upper_bound} and~\ref{lem:LowerBound} we conclude that 
$\Bvepsd(\boldsymbol{\phi}(p),\boldsymbol{\beta}) \subseteq  \hist(\p,\grH,\delta) \subseteq  \Bvepsd(\boldsymbol{\phi}(p),\boldsymbol{\gamma})$, namely, that the $\grH$-histogram is sandwiched between two densities sets with the same center vector $\phi$ and different radii. It is left to relate the size of the densities sets  to the maximum entropy problem~\eqref{opt:MP}. We will do this for a general family $\F$ of $d$ graphs (not necessarily our family $\F^d$).  
Before we proceed, recall that Chatterjee and Varadhan~\cite{chatterjee2011large} completely characterized the exponential growth of typical set in the case of a single density ($d=1$) in the limit of $n\rightarrow \infty$, as a solution to a maximum entropy problem over graphons. Furthermore, Lubetzky and Zhao~\cite{lubetzky2015replica} related the existence of this tight asymptotic characterization to the fact that a single density is a ``nice graph paramater'', which means that it is continuous in the cut metric with all local exterma being global extrema. Now, for $d>1$, this niceness property can be generalized by defining a local boundary point of the mapping $\tau(\S)$ as a point $\S$ in the  Szemer\'edi domain for which there exists a small enough $\delta$ such that there is no $\gamma>0$ for which the image of the ball $\{\S':\|\S-\S'\|_1\leq \delta\}$ in the parameter domain contains a ball of radius $\gamma>0$ around $\tau(S)$. Loosely speaking, local boundary point means that there is some direction that cannot be traversed in the parameter space, by small changes in the $\S$-space. A global boundary point is then defined as a point $\S$ in the  Szemer\'edi domain such that for any $\delta>0$ there is no $\gamma$ for which the image of the ball $\{\S':\|\S-\S'\|_1\leq \delta\}$ in the parameter domain contains a ball of radius $\gamma$ around $\tau(S)$. This, loosely speaking, means that  there is a direction that cannot be traversed at all, namely, it is really a boundary point of the image of the $\S$-space in the parameter space. 
Niceness therefore guarantees that for any two parameter vectors, if the straight line connecting them crosses no global boundary in the parameter domain, there exists a path in the graphon space (which in our finite setting is the Szemer\'edi type space) that maps to that straight line.
However, this property is not sufficient to facilitate a finite $n$ characterization; to that latter end, the magnitude of the gradient along this line cannot be too small, a property that will be captured here in our definition of the effective radius $\rho(S)$, which in turn must not be too large. 
Indeed, for a single density, it is possible to show that $\rho(S^*)$ vanishes as $\eps\to 0$ independently of the partition size $k$, hence we obtain an arbitrarily good accuracy in Theorem~\ref{thm:TypicalSetSizeGeneral}. This will be done in Section~\ref{sec:single_density_solution}.

From hereon we define 
\begin{align}
    \bar{r}\triangleq \max_{F\in\F}|V(F)|.
\end{align}
Let $\J(\S)$ be the Jacobian matrix of the function $\t(S,\F)$, which maps random graphs $\S$ on $k$ vertices to their vector of densities w.r.t. the family $\F$, as a function of $\S$. 
Note that since $\t(S,\F)$ are polynomials in $s_{ij}$, $\J(\S)$ always exist and is a $\binom{k}{2} \times \d$ matrix. Let 
\begin{align}
    \sigma(\rho,S) \triangleq \min_{\S':\|\S'-\S\|_1 \leq \rho} \sigma_{\min} \left(J(\S')\right)
\end{align}
be the smallest singular value of $J$ inside an $L^1$-ball of radius $\rho$ around $S$, where we set $\sigma_{\min}(J(S)) = 0$ for all $S'$ that lie outside this ball. 
We then define the effective radius $\rho(S)$ of $S$ to be the smallest radius $\rho$ for which 
\begin{align}\label{eq:g_parameter_of_S}
    \rho\cdot \sigma(\rho,S) \geq \twoctilde d \cdot \eps^{\frac{1}{\bar{r}}}.
\end{align}
We set $\rho(S) = \binom{k}{2}$ if no such radius exists. 
Loosely speaking, $\rho(S)$ is an upper bound on the deviation in the $S$-domain that guarantees a $\twoctilde d \cdot \eps^{\frac{1}{\bar{r}}}$ deviation in the densities domain.
Note that our definition of $\rho(\S)$ implicitly requires that for any densities vector in the neighborhood of $\tau(\S)$ there is some $\S'$ close to $\S$ that corresponds to the desired densities vector, thus implying the "local niceness" we mentioned in Subsection~\ref{sec:intro_CV_bounds}.

\begin{thm}\label{thm:TypicalSetSizeGeneral}
 Fix $\v, \bld{\gamma}\in[0,1]^d$ and $\eps<\frac{1}{\bar{r}^3}$ for $d\in\mathbb{N}$. Let 
\begin{align}
    S^* =& \argmax_\S H(\S) \label{eq:maximum_entropy_S}\\
    & \mathrm{s.t.} \quad\|\wS-\w\|_\infty \leq \bld{\gamma}+\ctilde \eps^{\frac{1}{\bar{r}}}
 \end{align} 
be any maximizer, where the maximization is over all $(k,\eps)$-Szemer\'edi types. The size of the typical set $\Bveps$ satisfies 
\begin{align}
   \left|\frac{1}{n^2}\log |\Bveps|- \frac{H(S^*)}{k^2}\right| \leq 5 h\left(\frac{\rho(S^*)}{4k^2}\right)+2\eps + o_\eps(1).
\end{align}
\end{thm}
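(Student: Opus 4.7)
The strategy is to sandwich $|\Bveps|$ between two entropy-based counts obtained by stratifying graphs according to their Szemer\'edi type. The upper and lower bounds are then matched via the effective radius $\rho(S^*)$, which quantifies how much ``room'' we have in the Szemer\'edi space to convert the expanded constraint (radius $\bld{\gamma}+\ctilde \eps^{1/\bar{r}}$, used for the upper bound) into the tighter one (radius $\bld{\gamma}$, needed for the lower bound).

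\textbf{Upper bound.} First apply the Szemer\'edi regularity lemma at accuracy $\eps$, yielding $k=k(\eps)$ such that every graph in $\G_n$ has at least one $(k,\eps)$-Szemer\'edi type, so $\Bveps\subseteq\bigcup_{S}\Lambda(k,\eps,S,n)$. The counting lemma (Lemma~\ref{lem:counting_lemma}) asserts that for any $G\in\Lambda(k,\eps,S,n)$ the density $t(G,F)$ is within $\ctilde \eps^{1/\bar{r}}$ of $t(S,F)$ for each $F\in\F$; hence the only Szemer\'edi types that can contribute to the cover are those with $\|\t(S,\F)-\v\|_\infty\le \bld{\gamma}+\ctilde\eps^{1/\bar{r}}$, which is exactly the feasible set in~\eqref{eq:maximum_entropy_S}. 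A standard entropy-counting argument, obtained by viewing $\Lambda(k,\eps,S,n)$ as a subset of graphs whose inter-part edges are (essentially) Bernoulli$(s_{ij})$, gives $|\Lambda(k,\eps,S,n)|\le \exp(n^2 H(S)/k^2+o(n^2))$, where the slack absorbs the exceptional set $C_0$ of size at most $\eps n$ and the $\eps\binom{k}{2}$ non-uniform pairs (each contributing at most $\eps+o_\eps(1)$ per-edge entropy loss). Summing over the finitely many (at most polynomially in $n$) candidate Szemer\'edi types and using $H(S)\le H(S^*)$ yields the upper inequality.

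\textbf{Lower bound.} The maximizer $S^*$ is only guaranteed to satisfy the inflated constraint, so graphs generated from $S^*$ may have densities outside the $\bld{\gamma}$-ball. The remedy is to perturb $S^*$ to a nearby type $S^{**}$ whose densities sit safely inside. By the definition~\eqref{eq:g_parameter_of_S}, $\rho(S^*)\cdot\sigma(\rho(S^*),S^*)\ge \twoctilde d\cdot\eps^{1/\bar{r}}$, so an inverse-function argument applied to $\t(\cdot,\F)$ on the $L^1$-ball of radius $\rho(S^*)$ around $S^*$ shows that the image contains a density ball of radius at least $\ctilde d\cdot\eps^{1/\bar{r}}$ about $\t(S^*,\F)$; in particular one can choose $S^{**}$ with $\|S^{**}-S^*\|_1\le \rho(S^*)$ and $\|\t(S^{**},\F)-\v\|_\infty\le \bld{\gamma}$. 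Because $H$ is a sum of $\binom{k}{2}$ binary entropies with $\sum|s^{*}_{ij}-s^{**}_{ij}|\le \rho(S^*)$, concavity and monotonicity of $h$ on $[0,1/2]$ give $|H(S^*)-H(S^{**})|\le 5 k^2\, h(\rho(S^*)/(4k^2))$, which accounts for the first error term in the theorem. Finally, lower bound $|\Bveps|$ by the number of labeled graphs obtained by independently drawing each inter-part edge according to $S^{**}$: a Chernoff/second-moment argument on $\t(\cdot,\F)$ shows that almost all such graphs actually land in $\Bveps$ (the $\ctilde\eps^{1/\bar{r}}$ counting-lemma slack is absorbed into the $\bld{\gamma}$-ball since we made room for it), and the count is $\exp(n^2 H(S^{**})/k^2-o(n^2))$.

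\textbf{Main obstacle.} The routine parts are the regularity/counting lemma invocations and the entropy-counting of Szemer\'edi classes, all of which are now standard. The delicate step is the lower bound's passage from $S^*$ (maximizer over the inflated feasible set) to $S^{**}$ (a feasible type for the original $\bld{\gamma}$-ball), which is precisely what the effective radius $\rho(S^*)$ is engineered for. The constant $\twoctilde=10$ in~\eqref{eq:g_parameter_of_S} is chosen to guarantee that the image ball strictly contains $\t(S^*,\F)$'s excursion of up to $\ctilde\eps^{1/\bar{r}}$ beyond the $\bld{\gamma}$-frontier with enough slack to swallow the counting-lemma deviation; without this quantitative niceness, continuity alone (the scalar scenario treated in~\cite{lubetzky2015replica}) would not yield a finite-$n$ bound. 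The bookkeeping of lower-order terms, collected into the additive $2\eps+o_\eps(1)$, stems from the exceptional set $C_0$, non-uniform pairs, the logarithm of the number of Szemer\'edi types, and the tail of the concentration argument for $\t(\cdot,\F)$.
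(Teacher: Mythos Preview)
Your proposal is correct and follows essentially the same architecture as the paper: cover $\Bveps$ by the Szemer\'edi classes satisfying the inflated constraint (via regularity plus the counting lemma) for the upper bound, then for the lower bound use the effective radius to perturb $S^*$ to a nearby type $S^{**}$ and control the entropy loss by the continuity estimate $|H(S^*)-H(S^{**})|\le 5k^2\,h(\rho(S^*)/(4k^2))$, which is exactly the paper's Lemma~\ref{lem:ratio_of_class_sizes_vs_L1_norm}.

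The one substantive difference is the final step of the lower bound. The paper moves $S^*$ a distance $\twoctilde d\,\eps^{1/\bar r}$ in the density domain \emph{toward} $\v$ along the segment $[\t(S^*,\F),\v]$, so that the perturbed type $\bar S$ satisfies the \emph{strengthened} constraint $|\t(\bar S,\F)-\v|\le\bld\gamma-\ctilde\eps^{1/\bar r}$; then the counting lemma forces the entire deterministic class $\Lambda(\bar S,n)\subseteq\Bveps$, and Lemma~\ref{lem:S_size} lower-bounds $|\Lambda(\bar S,n)|$. You instead sample block-model graphs from $S^{**}$ and invoke concentration of $\t(G,\F)$ to show most samples land in $\Bveps$. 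Your route avoids re-using the counting lemma and the $\eps$-uniformity count, but in exchange needs an AEP/typical-set step to convert ``high probability under a non-uniform product measure'' into a cardinality lower bound of order $2^{n^2H(S^{**})/k^2}$; also note that for it to work you need $S^{**}$ strictly inside the $\bld\gamma$-ball by a margin exceeding the $o(1)$ concentration fluctuation (which the $\twoctilde d\,\eps^{1/\bar r}$ budget certainly provides, but your stated target $\|\t(S^{**},\F)-\v\|_\infty\le\bld\gamma$ should be sharpened accordingly). Both routes deliver the same exponent and error terms.
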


Before we prove Theorem~\ref{thm:TypicalSetSizeGeneral}, we establish an important property of densities sets, which is that all the graphs in the same Szemer\'edi class have approximately the same subgraph densities, which  is equal to the mean densities $\t(\S,\F)$. Moreover, the deviation from this mean densities is bounded. This property is summarized in the following lemma whose proof follows standard counting arguments, see e.g. ~\cite{duke1995fast,rodl2005hypergraph,rodl2010regularity}.

\begin{lem}(counting lemma)\label{lem:counting_lemma}
Let $\grH$ be a graph on $r\geq 3$ vertices, and $\S$ be a $(k,\eps)$-Szemer\'edi type for $0<\eps<r^{-3}$. Then for any graph $G\in\Lambda(\S,n)$
\begin{equation}
    |t(G,\grH)-t(\S,\grH)|\leq \ctilde \eps^{\frac{1}{r-2}}.
\end{equation}
\end{lem}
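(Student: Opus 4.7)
The plan is to estimate the number of copies of $\grH$ in $G$ via a partition-based embedding count and match it term-by-term against the corresponding expansion of $t(\S,\grH)$. Concretely, let $V(G) = C_0 \cup C_1 \cup \cdots \cup C_k$ be a $(k,\eps)$-uniform partition certifying $G\in\Lambda(\S,n)$. Every copy of $\grH$ in $G$ is specified (up to automorphism) by an injection from $V(\grH)$ to $V(G)$ whose composition with the part assignment gives a map $\bar\pi\colon V(\grH)\to\{0,1,\ldots,k\}$. Grouping copies by $\bar\pi$, the count of copies of $\grH$ in $G$ decomposes as a sum over such maps; meanwhile $t(\S,\grH)$ already decomposes (per the formula in Section~\ref{sec:szemerdi_types}) as a sum over proper colorings of $V(\grH)$ by $[k]$, weighted by products of the $s_{ij}$. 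The strategy is to align the contribution of "good" embeddings in the former sum with the summands of the latter, up to controlled error.

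Next I would dispose of the \emph{defective} maps $\bar\pi$. Embeddings placing any vertex of $\grH$ in $C_0$ contribute at most $r\eps n^r$ copies, which is $O(r\eps)$ when normalized by $\cH=\Theta(n^r)$; embeddings routing some edge of $\grH$ through one of the at most $\eps\binom{k}{2}$ non-$\eps$-uniform pairs contribute at most $e(\grH)\cdot\eps$ to the relative density; and embeddings that send two $\grH$-adjacent vertices to the same $C_i$ vanish automatically, since $\S$ has no self-loops within a part. The surviving maps correspond to proper colorings of $\grH$ using only $\eps$-uniform pairs and avoiding $C_0$. For each such $\bar\pi$, the plan is to invoke the standard iterative \emph{embedding lemma} for $\eps$-regular pairs: the vertices of $\grH$ are placed one at a time, and at each step the $\eps$-uniformity condition ensures that all but an $\eps$-fraction of prior partial embeddings admit a number of extensions equal to $s_{ij}$ times the relevant part size, up to a factor of $1\pm O(\eps^{1/(r-2)})$. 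Accumulating over $r$ vertices yields a per-$\bar\pi$ multiplicative error of $1\pm O(r\eps^{1/(r-2)})$, and summing then combining with the defective contributions produces the claimed bound with the absolute constant $\ctilde$.

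The main obstacle is calibrating the iterative embedding step to yield precisely the exponent $1/(r-2)$. At each intermediate step one must restrict to a subset of a part whose density toward previously chosen vertices is close to the product of the relevant $s_{ij}$'s; the size threshold for this restriction must be balanced against the accumulated error, and the $r-2$ inner iterations (corresponding to the $r-2$ vertices embedded after fixing the first two endpoints of an edge) give the exponent $1/(r-2)$. The hypothesis $\eps<r^{-3}$ is exactly what guarantees that at each iteration the restricted subsets remain large enough to continue the process without violating the $\eps$-regularity threshold. Since this is a well-established counting lemma (see~\cite{duke1995fast,rodl2005hypergraph,rodl2010regularity}), I would appeal to those references rather than re-derive the embedding step in full; the only paper-specific bookkeeping is to match the normalization $\frac{1}{a_1\binom{k}{r}}$ and the sum over proper colorings in the definition of $t(\S,\grH)$ to the analogous sum for $G$, and to confirm that passing between injective embeddings and the random-graph model with repetition contributes only an additional $O(n^{-1})$ term already absorbed by Lemma~\ref{lem:density_equal_moment}.
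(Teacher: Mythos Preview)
Your overall strategy---partition by part assignment, discard defective embeddings, then iterate the embedding lemma on regular tuples---matches the paper's. But there is a genuine gap in your list of defective maps and, correspondingly, in your account of where the exponent $1/(r-2)$ comes from.

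The paper's proof removes one more class of edges before embedding: all edges between $\eps$-uniform pairs whose density $s_{ij}$ falls below the threshold $\eps+\eps^{1/(r-2)}$. This is not cosmetic. The iterative embedding step requires that after each restriction the surviving subset of the target part still has relative size at least $\eps$, so that $\eps$-regularity can be invoked again at the next step. With no lower bound on the $s_{ij}$, a product such as $\prod_{j=1}^{r-2}(s_{i_ji_\ell}-\eps)$ can drop below $\eps$ and the iteration stalls. The threshold is chosen precisely so that $(s_{ij}-\eps)^{r-2}\ge\eps$, keeping every intermediate subset above the regularity bar; and the \emph{cost} of deleting those low-density edges is at most $2\eps^{1/(r-2)}$ in normalized density. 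That deletion, not ``the $r-2$ inner iterations'', is the source of the exponent in the final bound.

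Two smaller points. First, embeddings sending two $\grH$-adjacent vertices into the same $C_i$ do \emph{not} vanish in $G$ (only in $\S$, which has no loops); they must be bounded separately, which the paper does via the crude estimate $k\binom{g}{2}\binom{n}{r-2}\le\eps n^r$. Second, the hypothesis $\eps<r^{-3}$ is used in the paper only at the very end to collapse the accumulated error terms (of the form $r^3\eps^{(r-3)/(r-2)}+(r-1)^2\eps+3\eps^{1/(r-2)}$) into the single constant $\ctilde$; it is not what guarantees the subset-size condition during the embedding.
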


\begin{proof}
\ifarxiv
See Appendix~\ref{appendix:counting_lemma}.
\else
Follows the same arguments as in similar previous variations of the counting lemma, e.g., \cite{duke1995fast,rodl2005hypergraph,rodl2010regularity}.
\fi
\end{proof}

We are now ready to prove Theorem~\ref{thm:TypicalSetSizeGeneral}. We use the following approach to evaluate the size of the $\bld{\gamma}$-typical set $\Bveps$: 
first, we define a new set, $\Cw$, which is the union of all the Szemer\'edi type classes $\Lambda(\S,n)$ for which $|\wSi-\vm|<  \gamma_\k+\ctilde \eps^{\frac{1}{\bar{r}}}$, $\k\in[d]$. Note that by Lemma~\ref{lem:counting_lemma}, $\Cw$ is guaranteed to contain all Szemer\'edi type classes that have an intersection with $\Bveps$, though it may include classes that have no intersection as well. Hence, $\Cw$ contains $\Bveps$. 
In Lemma~\ref{thm:tildSizeDensity}, we bound the size of $\Cw$ by the size of the largest Szemer\'edi type class in $\Cw$, times a polynomial factor of the number of types, which in turn yields an upper bound on the size of $\Bveps$. We then proceed to prove that the size of $\Bveps$ is also lower bounded by the size of $\Cw$ up to a polynomial factor, by showing that there exists a Szemer\'edi type class contained in $\Bveps$, whose size deviates from the maximal one in $\Cw$ by a term inversely proportional to $\rho(S^*)$.

%%%%%%%%%%%%%%%%%%%%%%%%%%%%%%%%%%%%%%%%%%%%%5
Before we prove our main theorem, we need to characterize the number of different Szemer\'edi types, 
and the size of a Szemer\'edi type class.
For brevity, we implicitly assume 
that the subset $C_0$ in the partition is empty. This has a negligible effect on our results and can be easily accounted for. 

\begin{lem}\label{lem:number_of_Szemeredi_types}
The number of nonempty $(k,\eps,n)$-Szemer\'edi type classes
is upper bounded by $(\frac{n^2}{k^2}+1)^{k^2}$.
\end{lem}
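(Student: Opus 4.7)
The plan is to bound the number of distinct Szemerédi type matrices $S$ directly by counting the possible values each entry can take. Since a Szemerédi type class $\Lambda(k,\eps,S,n)$ is indexed solely by the symmetric matrix $S$ (with $k$ and $\eps$ fixed), it suffices to count how many symmetric $k\times k$ matrices $S$ can arise as $s_{ij}=d(C_i,C_j)$ for some $(k,\eps)$-uniform partition of some graph on $n$ vertices.

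First I would fix the partition structure. Assuming (as the paragraph preceding the lemma states) that $C_0$ is empty, every part $C_i$ has exactly $n/k$ vertices. Consequently, for each pair $i\neq j$ the edge count $e(C_i,C_j)$ is an integer in $\{0,1,\ldots,(n/k)^2\}$, so
\begin{equation*}
    s_{ij} \;=\; d(C_i,C_j) \;=\; \frac{e(C_i,C_j)}{(n/k)^2}
\end{equation*}
takes at most $\tfrac{n^2}{k^2}+1$ distinct values in $[0,1]$.

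Next, since $S$ is symmetric and its diagonal is irrelevant to the partition-pair densities, $S$ is fully determined by its $\binom{k}{2}$ strictly upper-triangular entries. Each of these can independently take at most $\tfrac{n^2}{k^2}+1$ values, so the number of realizable matrices (and hence nonempty Szemerédi type classes) is at most
\begin{equation*}
    \left(\tfrac{n^2}{k^2}+1\right)^{\binom{k}{2}} \;\leq\; \left(\tfrac{n^2}{k^2}+1\right)^{k^2},
\end{equation*}
which is the claimed bound.

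There is essentially no obstacle here; the only subtlety is the convention about $C_0$ and the diagonal of $S$. If one wants to be fully rigorous about a nonempty $C_0$, one can absorb the effect of $|C_0|<\eps n$ by noting that the densities $s_{ij}$ can then take at most $(n/k + 1)^2 + 1$ values, which is still well within $\tfrac{n^2}{k^2}+O(n/k)$ and hence the final bound $(\tfrac{n^2}{k^2}+1)^{k^2}$ can be preserved by a trivial slackness in the exponent (or by adjusting constants). This is why the authors note that ignoring $C_0$ has a negligible effect.
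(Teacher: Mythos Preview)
Your proposal is correct and is essentially the paper's own argument: observe that with $|C_i|=n/k$ each density $s_{ij}$ lies in the finite alphabet $\{0,\tfrac{1}{(n/k)^2},\ldots,1\}$ of size $\tfrac{n^2}{k^2}+1$, and then count matrices. The paper's one-line proof just states this alphabet-size observation and (crudely) uses all $k^2$ entries, exactly as you do after bounding $\binom{k}{2}\le k^2$.
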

\begin{proof}
The alphabet size of a $(k,\eps,n)$-Szemer\'edi type matrix $\S$ that corresponds to a nonempty class $\Lambda(\S,n)$  is $\frac{n^2}{k^2}+1$. 
\end{proof}

\begin{lem}\label{lem:S_size}
The size of any nonempty Szemer\'edi type class $\Lambda(k,\eps,\S,n)$ satisfies
\begin{equation*}
    \frac{H(\S) + o_\eps(1)}{k^2} \leq \frac{1}{n^2}\log|\Lambda(k,\eps,\S,n)| \leq \frac{H(\S)}{k^2}+2\eps.
\end{equation*}
\end{lem}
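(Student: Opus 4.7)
The plan is to sandwich $|\Lambda(k,\eps,\S,n)|$ between two explicit counts obtained by fixing an ordered equi-partition of $[n]$ into parts of size $N := n/k$ (I will assume $k \mid n$; the general case contributes $|C_0|\le k$ vertices, adding only $o(n^2)$ to the logarithm). Both bounds reduce to counting edge configurations with prescribed between-part densities, and both use Stirling to relate $\binom{N^2}{s_{ij}N^2}$ to $e^{N^2 h(s_{ij})}$.

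For the \textbf{upper bound}, every $G\in\Lambda(\S,n)$ admits at least one ordered equi-partition $\mathcal{P}=(C_1,\ldots,C_k)$ realizing the densities $d(C_i,C_j)=s_{ij}$ for $i\neq j$; dropping the $(k,\eps)$-uniformity condition only loosens the count. I would union-bound over all $\leq k^n$ ordered equi-partitions, and for each, count the $\prod_{i<j}\binom{N^2}{s_{ij}N^2}$ configurations of between-part edges (arbitrary size-$s_{ij}N^2$ subsets of the $N^2$ slots between $C_i$ and $C_j$) and the $2^{k\binom{N}{2}}$ configurations of within-part edges (unrestricted). Applying $\binom{M}{pM}\leq e^{Mh(p)}$, taking $\log$, and dividing by $n^2$ yields
\[
\tfrac{1}{n^2}\log|\Lambda(\S,n)| \leq \tfrac{H(\S)}{k^2} + \tfrac{\log k}{n} + \tfrac{(N-1)\log 2}{2n}.
\]
Since $k\ge 1/\eps$ (Szemer\'edi), the within-part overhead is $\tfrac{\log 2}{2k} \leq \tfrac{\eps\log 2}{2}<\eps$; and $(\log k)/n\leq \eps$ for $n$ large enough. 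The stated $2\eps$ additive slack follows.

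For the \textbf{lower bound}, I would fix one ordered equi-partition $\mathcal{P}$ and let $\mathcal{G}_\mathcal{P}$ be the set of graphs with between-part edge counts exactly $s_{ij}N^2$ and arbitrary within-part edges, so $|\mathcal{G}_\mathcal{P}|=2^{k\binom{N}{2}}\prod_{i<j}\binom{N^2}{s_{ij}N^2}$. The key claim is that a constant fraction of $\mathcal{G}_\mathcal{P}$ lies in $\Lambda(\S,n)$, i.e., renders $\mathcal{P}$ itself a $(k,\eps)$-uniform partition. For uniformly random $G\in\mathcal{G}_\mathcal{P}$ and any $i<j$, $A'\subseteq C_i$, $B'\subseteq C_j$ with $|A'|,|B'|\geq\eps N$, the edge count $e_G(A',B')$ is hypergeometric with mean $s_{ij}|A'||B'|$, so Serfling's Hoeffding-type bound gives $\Pr[|d(A',B')-s_{ij}|\geq\eps]\leq 2e^{-2\eps^6 N^2}$. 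A union bound over the $\leq 4^N$ subset pairs per $(i,j)$ and the $\binom{k}{2}$ pairs has total failure probability $o(1)$ as long as $\eps^6 N \gg 1$, giving $|\Lambda(\S,n)|\geq |\mathcal{G}_\mathcal{P}|/2$. Using the reverse Stirling bound $\binom{M}{pM}\geq(M+1)^{-1}e^{Mh(p)}$, taking $\log$, and dividing by $n^2$ produces
\[
\tfrac{1}{n^2}\log|\Lambda(\S,n)| \geq \tfrac{H(\S)}{k^2} - \tfrac{\binom{k}{2}\log(N^2+1)+O(1)}{n^2},
\]
and the correction is $o(1)$ as $n\to\infty$ for fixed $\eps,k$, supplying the $o_\eps(1)/k^2$ slack.

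The main obstacle is the concentration step in the lower bound: because we condition on exact between-part edge counts, the edges are sampled \emph{without} replacement, so one must invoke a hypergeometric Hoeffding inequality (e.g.\ Serfling's) rather than the independent-edge Chernoff bound, and then check that $\eps^6 N^2$ beats the subset-enumeration entropy $\log 4^N=O(N)$, which pins down how large $n$ must be relative to $\eps$. The remainder is routine Stirling bookkeeping, and the $|C_0|\le k$ slack together with any labeled-versus-unlabeled discrepancy contribute only $o(n^2)$ to the log and are absorbed into the error terms.
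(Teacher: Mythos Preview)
Your argument is correct and follows the same strategy as the paper: both sandwich the count by fixing an equi-partition, estimate $\binom{N^2}{s_{ij}N^2}$ via Stirling, and for the lower bound show that a $1-o(1)$ fraction of bipartite graphs with the prescribed between-part edge counts are $\eps$-uniform. The only real difference is in the concentration step---you apply Serfling's hypergeometric Hoeffding inequality directly, whereas the paper applies ordinary Hoeffding to a $\mathrm{Ber}(s_{ij})$ random bipartite graph and then transfers to the fixed-edge-count model via the method-of-types bound $\Pr(\text{exactly }s_{ij}N^2\text{ edges})\ge (N^2+1)^{-1}$; both routes yield $e^{-\Omega(\eps^4 N^2)}$ tails (your $\eps^6$ is a valid but slightly loose exponent, obtained by treating $s_{ij}N^2$ rather than $|A'||B'|$ as the sample size), and the remaining bookkeeping is the same.
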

\begin{proof}
See Appendix~\ref{appendix:proof_of_8}.
\end{proof}

\begin{lem}\label{thm:tildSizeDensity}
Let $\F$ be a family of $d$ graphs, and $\v,\bld{\gamma}\in [0,1]^d$. Define the set 
\begin{align}\label{eq:Bset}
    \Cw=\hspace{-0.5cm}\bigcup_{{\begin{array}{cc}
     \quad \S:|\wS-\v|\leq \tilde{\bld{\gamma}}\\
\end{array}}}\hspace{-1.5cm}\Lambda(k,\eps,\S,n),
\end{align}
with $\tilde{\bld{\gamma}}_\k\triangleq \gamma_\k+\ctilde  \eps^{\frac{1}{ \bar{r}}}$.
Then
\begin{equation*}
   \frac{H(\S^*)+o_\eps(1)}{k^2} \leq \frac{1}{n^2}\log |\Cw| \leq  \frac{H(\S^*) + o_\eps(1)}{k^2} +2\eps, 
\end{equation*}
where
\begin{align}\label{eq:UpperBound_S}
    \S^* =& \argmax_{\S} H(\S) \quad \mathrm{ s.t. }\quad \Lambda(\S,n)\subseteq \Cw
\end{align}
\end{lem}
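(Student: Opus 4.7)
\textbf{Proof proposal for Lemma \ref{thm:tildSizeDensity}.}

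The plan is to sandwich $|\Cw|$ between the size of its largest constituent Szemer\'edi type class and that same size inflated by the total number of admissible type classes. Concretely, observe that the condition $\Lambda(\S,n)\subseteq \Cw$ appearing in the definition of $\S^*$ is equivalent to $|\wS-\v|\leq \tilde{\bld{\gamma}}$, because $\Lambda(\S,n)$ is, by construction of the union in \eqref{eq:Bset}, contained in $\Cw$ iff $\S$ is itself one of the types indexing the union. Consequently, $\S^*$ is simply the entropy-maximizing Szemer\'edi type among those whose mean $\F$-densities lie within $\tilde{\bld{\gamma}}$ of $\v$.

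For the lower bound, since $\Cw$ is a union of type classes it contains $\Lambda(k,\eps,\S^*,n)$, and Lemma~\ref{lem:S_size} gives
\begin{align*}
    \tfrac{1}{n^2}\log|\Cw|\;\geq\; \tfrac{1}{n^2}\log|\Lambda(k,\eps,\S^*,n)|\;\geq\; \tfrac{H(\S^*)+o_\eps(1)}{k^2}.
\end{align*}
(One should first confirm that $\Lambda(\S^*,n)$ is nonempty, which is guaranteed by the Szemer\'edi regularity lemma together with the nonemptiness of the feasible set in \eqref{eq:UpperBound_S} whenever $\Cw$ is nonempty; if $\Cw$ is empty the statement is vacuous.)

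For the upper bound, bound $|\Cw|$ by the number of Szemer\'edi types appearing in the union times the maximal class size. By Lemma~\ref{lem:number_of_Szemeredi_types} the first factor is at most $(n^2/k^2+1)^{k^2}$, and by Lemma~\ref{lem:S_size} the second is at most $\exp\!\bigl(n^2 H(\S^*)/k^2 + 2\eps n^2\bigr)$, where the maximum of $H(\S)$ subject to $|\wS-\v|\leq\tilde{\bld{\gamma}}$ is attained at $\S^*$ by definition. Taking $\log$ and dividing by $n^2$,
\begin{align*}
    \tfrac{1}{n^2}\log|\Cw|\;\leq\; \tfrac{H(\S^*)}{k^2}+2\eps+\tfrac{k^2}{n^2}\log\!\bigl(\tfrac{n^2}{k^2}+1\bigr).
\end{align*}
Since $k$ depends only on $\eps$ (and not on $n$) via the regularity lemma, the last term is $o_n(1)$ for each fixed $\eps$, and can be absorbed into the $o_\eps(1)/k^2$ slack already present in the statement.

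The only real subtlety, and the main point to verify carefully, is the equivalence between the condition $\Lambda(\S,n)\subseteq \Cw$ in \eqref{eq:UpperBound_S} and membership of $\S$ in the index set of \eqref{eq:Bset}; in particular one must argue that graphs with multiple Szemer\'edi types do not cause any $\S$ with $|\wS-\v|>\tilde{\bld{\gamma}}$ to satisfy $\Lambda(\S,n)\subseteq\Cw$ in a way that would inflate the entropy maximum. This follows because the entropy maximization constraint is on the type $\S$ itself, not on the graphs it contains, so the padding $\ctilde\eps^{1/\bar r}$ in $\tilde{\bld\gamma}$ (which was introduced precisely to accommodate the counting lemma slack from Lemma~\ref{lem:counting_lemma}) ensures consistency when this result is later invoked to sandwich $|\Bveps|$. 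Everything else reduces to bookkeeping with Lemmas~\ref{lem:number_of_Szemeredi_types} and~\ref{lem:S_size}.
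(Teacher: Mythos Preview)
Your proposal is correct and follows essentially the same route as the paper: the lower bound comes from $\Lambda(\S^*,n)\subseteq\Cw$ together with Lemma~\ref{lem:S_size}, and the upper bound is the union bound over at most $(n^2/k^2+1)^{k^2}$ nonempty types (Lemma~\ref{lem:number_of_Szemeredi_types}) times the maximal class size (again Lemma~\ref{lem:S_size}). Your extra commentary about the equivalence of the constraint $\Lambda(\S,n)\subseteq\Cw$ with membership in the index set, and about absorbing the $\tfrac{k^2}{n^2}\log(n^2/k^2+1)$ term, is more careful than the paper bothers to be, but neither point affects the argument, since the upper bound only needs that the number of summands is at most the total number of nonempty types and that each summand is bounded by $|\Lambda(\S^*,n)|$.
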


\begin{proof}
From Lemma~\ref{lem:S_size} and the fact that $\Lambda(\S^*,n)\subseteq \Cw$ we have
\begin{equation*}
    |\Cw| \geq |\Lambda(\S^*,n)|\geq  2^{\frac{n^2}{k^2} (H(\S^*)+o_\eps(1))}.
\end{equation*}
Also, from the union bound we get
\begin{equation*}
    |\Cw| \leq \hspace{-1.7cm}\sum_{{\begin{array}{cc}
     \quad S:\Lambda(\S,n)\subseteq \Cw\\
\end{array}}}\hspace{-1.7cm}|\Lambda(\S^*,n)|\leq \left(\frac{n^2}{k^2}+1\right)^{k^2}\cdot 2^{\frac{n^2}{k^2} H(\S^*)+n^2 2\eps},
\end{equation*}
which concludes the proof.
\end{proof}

Next, in Lemma~\ref{thm:SizeDensity}, we bound the size of the typical set $\Bveps$ using the size of the set $\Cw$. 
To that end, we first 
show in Lemma~\ref{lem:ratio_of_class_sizes_vs_L1_norm} that small perturbations in $\S$ result in bounded change to the  Szemer\'edi type class size $\Lambda(\S,n)$.

\begin{lem}[continuity]\label{lem:ratio_of_class_sizes_vs_L1_norm}
Let $\S_1,\S_2$ be two $(k,\eps,n)$-Szemer\'edi types with nonempty Szemer\'edi type classes $\Lambda(\S_1,n),\Lambda(\S_2,n)$.  Then, 
\begin{equation}\label{eq:sizeRatio}
 \left|\frac{1}{n^2}\log\frac{\left|\Lambda(\S_1,n)\right|}{\left|\Lambda(\S_2,n)\right|}\right| \leq 5 h\left(\frac{\|\S_1-\S_2\|_1}{4k^2}\right).
\end{equation}
\end{lem}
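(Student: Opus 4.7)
The plan is to bound the log-ratio directly by writing each $|\Lambda(\S_i,n)|$ as, essentially, a product over pairs of parts of a binomial counting the bipartite graphs with a prescribed edge density, and then comparing these products pair-by-pair via a standard entropy-continuity inequality.

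First, I would fix an ordered equi-sized partition $V(G)=C_1\cup\cdots\cup C_k$ of $[n]$, temporarily ignoring the exceptional set $C_0$ (whose at most $\eps n$ vertices contribute only sub-exponential-in-$n^2$ factors that appear identically on both sides). Conditioned on this partition being a Szemer\'edi partition for $\S_i$, the number of graphs realizing pairwise densities close to $s^{(i)}_{pq}$ is, up to the $\eps$-slack in each density and polynomial-in-$n$ factors from summing over the slack, the product $\prod_{p<q}\binom{(n/k)^2}{s^{(i)}_{pq}(n/k)^2}$. The total partition count $\binom{n}{n/k,\ldots,n/k}/k!$, together with the polynomial slack factors, appears as a common prefactor to both $|\Lambda(\S_1,n)|$ and $|\Lambda(\S_2,n)|$ and cancels in the ratio, up to a $o(1)$ additive term after normalization by $1/n^2$.

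Second, applying the Stirling estimate $\log\binom{N}{pN}=N h(p)+O(\log N)$ to each binomial, the normalized log-ratio becomes
\begin{equation*}
\tfrac{1}{n^2}\log\tfrac{|\Lambda(\S_1,n)|}{|\Lambda(\S_2,n)|}=\tfrac{1}{k^2}\sum_{p<q}\bigl(h(s^{(1)}_{pq})-h(s^{(2)}_{pq})\bigr)+o(1).
\end{equation*}
The next step is the standard inequality $|h(x)-h(y)|\le h(|x-y|)$ (valid for $|x-y|\le 1/2$, and trivially extended via $h\le\log 2$ otherwise), applied entry-wise, followed by Jensen's inequality for the concave function $h$:
\begin{equation*}
\sum_{p<q}\bigl|h(s^{(1)}_{pq})-h(s^{(2)}_{pq})\bigr|\le \tbinom{k}{2}\, h\!\left(\tfrac{2}{k(k-1)}\sum_{p<q}\bigl|s^{(1)}_{pq}-s^{(2)}_{pq}\bigr|\right)=\tbinom{k}{2}\, h\!\left(\tfrac{\|\S_1-\S_2\|_1}{k(k-1)}\right),
\end{equation*}
using the symmetry $\|\S_1-\S_2\|_1=2\sum_{p<q}\bigl|s^{(1)}_{pq}-s^{(2)}_{pq}\bigr|$. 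Since the same bound holds with $\S_1$ and $\S_2$ swapped, the absolute value of the left-hand side of~\eqref{eq:sizeRatio} is at most $\tfrac12\, h(\|\S_1-\S_2\|_1/(k(k-1)))+o(1)$.

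Closing the gap to the stated $5 h(\|\S_1-\S_2\|_1/(4k^2))$ is a matter of bookkeeping: concavity with $h(0)=0$ gives $h(\alpha x)\le h(x)$ for $\alpha\in(0,1)$ and, for small $x$, $h(\alpha x)/h(x)\to 1$, so the mild shrinkage of the argument by a constant factor paired with a modest enlargement of the prefactor comfortably absorbs the remaining $\eps$-slacks (the set $C_0$, the $\eps\binom{k}{2}$ irregular pairs, and the $\eps$-tolerance per density) and the lower-order logarithmic corrections from Stirling and from the partition enumeration. The main obstacle I anticipate is the clean cancellation in the first step: $|\Lambda(\S,n)|$ counts graphs rather than (graph, partition) pairs, and a single graph may admit several Szemer\'edi partitions. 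The argument must therefore either pin a canonical partition to each graph or use a double-counting argument whose overcount is identical on both sides of the ratio, so that only the per-pair binomial factors survive. The counting lemma~(Lemma~\ref{lem:counting_lemma}), combined with the structure of $\eps$-uniform partitions, makes this possible up to sub-exponential-in-$n^2$ corrections, and the nonemptiness hypothesis on both $\Lambda(\S_1,n)$ and $\Lambda(\S_2,n)$ guarantees that each density profile $s^{(i)}_{pq}$ is actually realizable.
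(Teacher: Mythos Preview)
Your proposal is correct in spirit but takes a genuinely different route from the paper, and the ``obstacle'' you anticipate is one the paper sidesteps entirely.

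The paper does not attempt any direct ratio cancellation. Instead, it first proves the purely analytic inequality $|H(\S_1)-H(\S_2)|\le 5k^2\,h(\|\S_1-\S_2\|_1/(4k^2))$ by a three-case argument depending on the position of a free parameter $\alpha$ relative to $s^{(1)}_{ij},s^{(2)}_{ij}$, obtaining $h(s^{(2)}_{ij})-h(s^{(1)}_{ij})\le h(\alpha)+\delta_{ij}\log\tfrac{1-\alpha}{\alpha}$ in each case, summing, and then choosing $\alpha=\|\S_1-\S_2\|_1/(4k^2)$. Only at the very end does it invoke Lemma~\ref{lem:S_size} to convert the entropy gap into a class-size gap. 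Your route to the entropy inequality---the standard continuity bound $|h(x)-h(y)|\le h(|x-y|)$ followed by Jensen---is cleaner and yields the sharper constant $\tfrac12\,h(\|\S_1-\S_2\|_1/(k(k-1)))$, which indeed dominates the stated $5\,h(\|\S_1-\S_2\|_1/(4k^2))$ via concavity (since $h(cy)\ge c\,h(y)$ and $\tfrac{k-1}{4k}\ge\tfrac18$). What your approach buys is a shorter, case-free proof with a better constant; what the paper's case analysis buys is that it works directly without needing to check the range condition $|x-y|\le 1/2$ or invoke a named inequality.

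Where your plan goes astray is in Steps~1--2 and the concluding paragraph: you try to re-derive, from scratch, the relation $\tfrac{1}{n^2}\log|\Lambda(\S,n)|\approx H(\S)/k^2$ and worry about multiple Szemer\'edi partitions and cancellation of prefactors. This is exactly the content of Lemma~\ref{lem:S_size}, which is already available; invoking it removes the whole issue. Your appeal to the counting lemma (Lemma~\ref{lem:counting_lemma}) is misplaced---that lemma controls subgraph densities, not the multiplicity of partitions or the size of type classes, and plays no role here.
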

\begin{proof}
Denote $[\S_1]_{ij}\triangleq s_{ij}^{(1)}$, $[\S_2]_{ij}\triangleq s_{ij}^{(2)}$, $\delta_{ij}\triangleq |s^{(1)}_{ij}-s^{(2)}_{ij}|$ and assume without loss of generality $s^{(1)}_{ij}<s^{(2)}_{ij}$, and $s^{(1)}_{ij}<s^{(2)}_{ij}<\frac{1}{2}$. We will justify these assumptions in the end. 
First let us show that for any $0<\alpha<\frac{1}{2}$,
\begin{equation}
 \left|\log\frac{\left|\Lambda(\S_1,n)\right|}{\left|\Lambda(\S_2,n)\right|}\right| \leq n^2\cdot h(\alpha)+\frac{n^2}{k^2}\cdot \log\left(\frac{1-\alpha}{\alpha}\right)\cdot\|\S_1-\S_2\|_1.
\end{equation}
We treat three cases. First, assume $\alpha<s^{(1)}_{ij}<s^{(2)}_{ij}<1/2$, then we have
\begin{align}
    h(s^{(2)}_{ij})-h(s^{(1)}_{ij})&\leq \delta_{ij}\log\left(\frac{1-s^{(1)}_{ij}}{s^{(1)}_{ij}}\right)\label{line:convexity}\\
    &\leq \delta_{ij}\log\left(\frac{1-\alpha}{\alpha}\right)\label{line:alpha_smaller_than_s1}
    \leq h(\alpha)+ \delta_{ij}\log\left(\frac{1-\alpha}{\alpha}\right),
\end{align}
where \eqref{line:convexity} stems from the concavity of $h(x)$ and using the derivative of $h(\cdot)$, and \eqref{line:alpha_smaller_than_s1} is due to $\alpha<s^{(1)}_{ij}$ and the non-negativity of the binary entropy. 
Next, let $s^{(1)}_{ij}<\alpha<s^{(2)}_{ij}<1/2$, then we  have
\begin{align}
    h(s^{(2)}_{ij})-h(s^{(1)}_{ij})&\leq h(\alpha)+\left(s^{(2)}_{ij}-\alpha\right)\log\left(\frac{1-\alpha}{\alpha}\right)-h(s_{ij}^{(1)})\label{line:convexity_again}\\
    &\leq h(\alpha)+\left(\delta_{ij}-\alpha\right)\log\left(\frac{1-\alpha}{\alpha}\right)\label{line:s1_smaller_than_h}\\
    &\leq h(\alpha)+\delta_{ij}\log\left(\frac{1-\alpha}{\alpha}\right)\label{line:none_negative_log},
\end{align}
where in \eqref{line:convexity_again} we again used concavity of $h(x)$ and the derivative of $h(\cdot)$, in \eqref{line:s1_smaller_than_h} we used the relation $s\leq h(s)$ for $s\leq 1/2$ and in \eqref{line:none_negative_log} we used the fact that for $\alpha<1/2$ we  get $\log\left(\frac{1-\alpha}{\alpha}\right)>0$.
Finally, when  $s^{(1)}_{ij}<s^{(2)}_{ij}<\alpha\leq 1/2$, 
 \begin{equation*}
    h(s^{(2)}_{ij})-h(s^{(1)}_{ij})\leq h(\alpha)\leq h(\alpha)+\delta_{ij}\log\left(\frac{1-\alpha}{\alpha}\right).
\end{equation*}
Therefore, we get in the general case that
\begin{equation*}
    H(\S_2)- H(\S_1) \leq   k^2\cdot h(\alpha)+\log\left(\frac{1-\alpha}{\alpha}\right)\cdot\| \S_2-\S_1\|.
\end{equation*}
This result holds also when $s^{(1)}_{ij}<1/2<s^{(2)}_{ij}$ since in this case we can always take $\bar{s}^{(2)}_{ij}=1-s^{(2)}_{ij}$ and get the same entropy $h(s^{(2)}_{ij})=h(\bar{s}^{(2)}_{ij})$ but a smaller $\delta_{ij}$.
Next, let us choose $\alpha = \frac{\| \S_1-\S_2\| _1}{4k^2}$, then the above becomes
\begin{align}
    H(\S_2)- H(\S_1) &\leq k^2\left( h(\alpha)+4\alpha\log\left(\frac{1-\alpha}{\alpha}\right)\right)\\
    &\leq k^2\left( h(\alpha)-4\alpha\log\alpha +4\alpha\log(1-\alpha)\right)\\
    &\leq k^2\left( h(\alpha)-4\alpha\log\alpha\right)\\
    &\leq 5k^2h(\alpha).
\end{align}
and \eqref{eq:sizeRatio} then follows directly by plugging in the result of Lemma~\ref{lem:S_size}. 
\end{proof}

\begin{lem}\label{thm:SizeDensity}
Let $\v,\bld{\gamma}\in [0,1]^d$, $\F$ a family of $d$  graphs and $\S^*$ be as in $\eqref{eq:maximum_entropy_S}$. The size of the typical set $\Bveps$ satisfies 
\begin{equation}
\left| \frac{1}{n^2}\log \frac{|\Bveps|}{|\Cw|}\right| \leq 5h\left(\frac{\rho(S^*)}{4k^2}\right) + \frac{8k}{n}
\end{equation}
where $\Cw$  is defined as in  \eqref{eq:Bset}. 
\end{lem}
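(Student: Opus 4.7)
The stated bound is a two-sided estimate, so both directions are needed. The upper direction follows immediately from the counting lemma: every $G\in\Bveps$ admits a Szemer\'edi type $\S_G$ by the regularity lemma, and
\[
\|t(\S_G,\F)-\v\|_\infty \;\leq\; \|t(G,\F)-\v\|_\infty +\ctilde\eps^{\frac{1}{\bar{r}}} \;\leq\; \bld{\gamma}+\ctilde\eps^{\frac{1}{\bar{r}}},
\]
so $G\in\Lambda(\S_G,n)\subseteq\Cw$. Hence $\Bveps\subseteq\Cw$ and $\frac{1}{n^2}\log(|\Bveps|/|\Cw|) \leq 0$, which already lies inside the stated RHS. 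The substance of the lemma is the matching lower bound on $|\Bveps|/|\Cw|$.

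The plan for the lower direction is to exhibit a single Szemer\'edi class $\Lambda(\S',n)\subseteq \Bveps$ whose size is comparable to that of the dominant class $\Lambda(\S^*,n)$ in $\Cw$. Reversing the counting-lemma sandwich, a sufficient condition for $\Lambda(\S',n)\subseteq\Bveps$ is $\|t(\S',\F)-\v\|_\infty \leq \bld{\gamma}-\ctilde\eps^{1/\bar{r}}$, so I need to perturb $\S^*$, whose densities can sit at the outer boundary $\bld{\gamma}+\ctilde\eps^{1/\bar{r}}$, into an $\S'$ whose densities are interior by $\ctilde\eps^{1/\bar{r}}$, i.e., I must push $t(\cdot,\F)$ toward $\v$ by at most $2\ctilde\eps^{1/\bar{r}}=\twoctilde\eps^{1/\bar{r}}$ per coordinate. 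This is exactly what the effective radius is designed for: by \eqref{eq:g_parameter_of_S}, one has $\sigma_{\min}(J(\S))\geq \twoctilde d\,\eps^{1/\bar{r}}/\rho(\S^*)$ uniformly over the $L^1$-ball of radius $\rho(\S^*)$ around $\S^*$. A quantitative inverse-function / Newton-type argument based on this uniform singular-value bound then shows that the image of the $L^1$-ball of radius $\rho(\S^*)$ under $t(\cdot,\F)$ contains an $L_\infty$-ball of radius $\twoctilde\eps^{1/\bar{r}}$ around $t(\S^*,\F)$, which is enough to produce such an $\S'$. A final rounding of $\S'$ to the admissible grid (entries in multiples of $k^2/n^2$) at negligible $L^1$ cost then guarantees that $\Lambda(\S',n)$ is nonempty.

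Stringing the pieces together, Lemma~\ref{thm:tildSizeDensity} bounds $|\Cw|$ by $|\Lambda(\S^*,n)|$ up to a polynomial-in-$n,k$ factor (via Lemma~\ref{lem:number_of_Szemeredi_types}); the continuity estimate of Lemma~\ref{lem:ratio_of_class_sizes_vs_L1_norm} applied to the pair $\S^*,\S'$ with $\|\S^*-\S'\|_1\leq \rho(\S^*)$ yields
\[
\frac{1}{n^2}\log\frac{|\Lambda(\S^*,n)|}{|\Lambda(\S',n)|} \;\leq\; 5\,h\!\left(\frac{\rho(\S^*)}{4k^2}\right),
\]
and $|\Lambda(\S',n)|\leq |\Bveps|$ by construction. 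The $\frac{8k}{n}$ slack absorbs the polynomial Szemer\'edi-type enumeration factor from Lemma~\ref{lem:number_of_Szemeredi_types} together with the rounding cost of putting $\S'$ on the admissible grid. The hardest step is the quantitative inversion: one must leverage the \emph{uniform} singular-value lower bound over the entire ball of radius $\rho(\S^*)$ (rather than merely a pointwise bound at $\S^*$) to certify that the image of the polynomial map $t(\cdot,\F)$ actually contains an $L_\infty$-ball of the required radius. This is precisely why $\rho(\S^*)$ is defined via $\sigma(\rho,\S^*)$ in \eqref{eq:g_parameter_of_S} rather than via $\sigma_{\min}(J(\S^*))$ alone, and it is the finite-$n$ quantitative substitute for the asymptotic ``niceness'' condition discussed in Subsection~\ref{sec:intro_CV_bounds}.
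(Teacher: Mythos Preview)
Your proposal is correct and follows the paper's proof essentially step for step: the containment $\Bveps\subseteq\Cw$ via the counting lemma, the bound $|\Cw|\leq (n^2/k^2+1)^{k^2}|\Lambda(\S^*,n)|$, the construction of a nearby type $\S'$ with $\Lambda(\S',n)\subseteq\Bveps$ and $\|\S'-\S^*\|_1\leq\rho(\S^*)$ via the effective-radius definition, the application of Lemma~\ref{lem:ratio_of_class_sizes_vs_L1_norm}, and the absorption of the type-count and grid-rounding into the $8k/n$ term. The only cosmetic difference is that the paper describes the perturbation as moving $t(\S^*,\F)$ along the single direction $\hat{\v}_*=(\v-t(\S^*,\F))/\|\v-t(\S^*,\F)\|_2$ by $\twoctilde d\,\eps^{1/\bar r}$ (and appeals to a ``trajectory'' in the $\S$-domain), whereas you phrase the same step as the image containing an $L_\infty$-ball; both rely on the same uniform singular-value bound encoded in $\rho(\S^*)$.
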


\begin{proof}
Recall that for any $G\in \Bveps$ we have $|\wGi -\vm|<\gamma_\k$. 
From  Lemma~\ref{lem:counting_lemma} we have that for any $G\in \Lambda(\S,n)$ it holds that $\|\wG-\wS\|_\infty\leq \ctilde  \eps^{\frac{1}{\bar{r}}}$. Then, using the triangle inequality we get that all the Szemer\'edi classes that contain some graph $G\in \Bveps$ must have $|\wSi-\vm|<\gamma_\k+\ctilde  \eps^{\frac{1}{\bar{r}}}$. Hence, $\Bveps\subseteq \Cw$ and $|\Bveps|\leq |\Cw|$.

Since $\S^*$ is the type with the largest type-class $\Lambda(\S^*,n)\subseteq \Cw$, we have $|\Cw| \leq (\frac{n^2}{k^2}+1)^{k^2}|\Lambda(\S^*,n)|$. For any (nonempty) $(k,\eps)$-Szemer\'edi type class $\Lambda(\S,n)\subseteq \Bveps$ we have $|\Lambda(\S,n)|\leq |\Bveps|$, and hence 
\begin{align}
    1\leq \frac{|\Cw|}{|\Bveps|} \leq \left(\frac{n^2}{k^2}+1\right)^{k^2}\frac{|\Lambda(\S^*,n)|}{|\Lambda(\S,n)|}.
\end{align}
Therefore, it is enough to show that there exists some nonempty $(k,\eps)$-Szemer\'edi type class $\Lambda(\S,n)\subseteq \Bveps$ with $ \left|\log\frac{\left|\Lambda(\S^*,n)\right|}{\left|\Lambda(\bar{\S},n)\right|}\right| \leq  5 n^2\cdot h\left(\frac{\rho(\S^*)}{4k^2}\right)$ to conclude our proof. 

Clearly, if $\Lambda(\S^*,n)\subseteq \Bveps$ we are done. 
Else, let $\hat{\w}_*=\frac{\w-\wSstar}{\|\w-\wSstar\|_2}$, i.e., the unit vector in the densities domain that points in the direction of $\w-\wSstar$. 
Note that for any $\S\in\Cw$ we have $|\wSi-\vm|<\gamma_\k+\ctilde  \eps^{\frac{1}{ r}}$, and a sufficient condition for $\Lambda(\S,n)$ to be fully contained in $\Bveps$ is $|\wSi-\vm|<\gamma_\k-\ctilde  \eps^{\frac{1}{\bar{r}}}$. 
Then, from the triangle inequality  we get that any $\S$ with $\wS=\wSstar+\twoctilde\d \cdot  \eps^{\frac{1}{\bar{r}}}\hat{\w}_*$ will ensure that $\Lambda(\S,n)\subseteq \Bveps$. 
From \eqref{eq:g_parameter_of_S} we get that there is a trajectory of length at most $\rho(S^*)$ in the $S$-domain that starts at $\S^*$ and ends at some $\bar{S}$ with $\w(\bar{\S})=\wSstar+\twoctilde\d  \cdot \eps^{\frac{1}{\bar{r}}}\hat{\w}_*$, and whose image in the densities domain is the straight line between $\wSstar$ and $\w(\bar{\S})$. 
Hence $\|\bar{\S}-S^*\|_1 \leq \rho(S^*)$. Here, we assume that $\bar{S}$ corresponds to a nonempty class $\Lambda(\bar{\S},n)$. In the case it is not, that is, the entries of $\bar{S}$ are not an integer multiple of the resolution $\frac{k^2}{n^2}$, we will need to quantize it to its nearest nonempty type. This is always possible since $\w$ is $d$-Lipschitz, hence for a large enough $n$ one can quantize $\bar{S}$ per entry as $\bar{s}_{ij,\text{quantized}}=\lceil \bar{s}_{ij}\cdot \frac{n^2}{k^2}\rceil \cdot \frac{k^2}{n^2}$ and still have $\bar{\S}_\text{quantized}\in\Bveps$.
Then, from Lemma~\ref{lem:ratio_of_class_sizes_vs_L1_norm} we get 
\begin{align}
    \log\frac{\left|\Lambda(\S^*,n)\right|}{\left|\Lambda(\bar{\S},n)\right|} \leq  5n^2h\left(\frac{\rho(S^*)}{4k^2}\right),
\end{align}
and along with $\log(\frac{n^2}{k^2}+1)^{k^2}\leq \frac{8k}{n}$ we are done.
\end{proof}
\begin{proof}[Proof of Theorem~\ref{thm:TypicalSetSizeGeneral}]
Follows from Lemmas~\ref{thm:tildSizeDensity} and~\ref{thm:SizeDensity}. 
\end{proof}

\section{The scalar $\grH$-density problem revisited}\label{sec:single_density_solution}

The accuracy of the estimated size of the densities set in Theorem~\ref{thm:TypicalSetSizeGeneral} is given as a function of $\rho(\S^*)$, which may yield a tight estimate for some $\S^*$ and not for others. 
However, in the special case $d=1$, that is, when the density-typical set is defined by a single subgraph density, an explicit expression for the accuracy can be derived, yielding a tight result for any $\S^*$, as we show below in Theorem~\ref{lem:single_density_typical_set_size}. Moreover, this case also vividly demonstrates the additional conditions needed in order to derive a finite-$n$ result, conditions which are not needed in the asymptotic case~\cite{chatterjee2011large}. Specifically, as shown in the proof of Theorem~\ref{lem:single_density_typical_set_size}, it is not enough that the parameter (in this case, the $\grH$-density) is nice; an additional bound on the size of its gradient is required. We establish such a bound in Lemma~\ref{lem:single_density_property}, where we show that the smallest deviation $\|\S-\bar{\S}\|$ in the $S$-domain that guarantees a deviation $\Delta$ in the density domain, is bounded by a quantity  proportional to $\Delta^{\sfrac{1}{\binom{r}{2}}}$.
We then use this result to get an (asymptotically) tight bound on the size of the density-typical set in Theorem~\ref{lem:single_density_typical_set_size}. 

\begin{lem}\label{lem:single_density_property}
Let $F$ be a graph on $r$ vertices and $\S$ be an $(k,\eps)$-Szemer\'edi type with $t(\S,\grH)=\phi$, then for any $\phi'\in[0,1]$ there exist a $(k,\eps)$-Szemer\'edi type $\bar{\S}$ with $t(\bar{\S},\grH)=\phi'$ and $\|\S-\bar{\S}\|_1 \leq \left(\frac{|\phi-\phi'|}{1-\min\{\phi,\phi'\}}\right)^{\frac{1}{\binom{r}{2}}} \binom{k}{2}$.
\end{lem}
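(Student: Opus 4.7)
The plan is to construct $\bar{\S}$ explicitly via a one-parameter interpolation of Szemer\'edi types and then locate the desired density by the intermediate value theorem. I will WLOG treat the case $\phi \leq \phi'$; the decreasing case follows by a parallel "inverse" construction outlined below.

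For $\phi \leq \phi'$, define $\bar{\S}(\lambda)$ by $\bar{s}_{ij}(\lambda) = (1-\lambda)s_{ij} + \lambda$ for $\lambda \in [0,1]$, smoothly interpolating from $\S$ at $\lambda=0$ (density $\phi$) to the all-ones matrix at $\lambda=1$ (density $1$). The central technical step is the product inequality: for any $s_1,\ldots,s_m \in [0,1]$ and $\lambda \in [0,1]$,
\[
\prod_{i=1}^m \bigl[(1-\lambda)s_i + \lambda\bigr] \;\geq\; \lambda^m + (1-\lambda^m)\prod_{i=1}^m s_i.
\]
I would prove this by expanding the left-hand side as $\sum_{B \subseteq [m]} \lambda^{|B|}(1-\lambda)^{m-|B|}\prod_{i \notin B} s_i$, then bounding $\prod_{i \notin B} s_i \geq \prod_{i=1}^m s_i$ (valid since each $s_i \leq 1$) for all $B \neq [m]$, while the $B=[m]$ term has empty product $1$ and contributes $\lambda^m$; the remaining multinomial coefficients sum to $1-\lambda^m$. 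Applying this pointwise to each monomial of the mean $F$-density formula (each of degree exactly $|E(F)| \leq \binom{r}{2}$ in the $s_{ij}$'s) yields $t(\bar{\S}(\lambda), F) \geq \lambda^{|E(F)|} + (1-\lambda^{|E(F)|})\phi$.

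Choosing $\lambda_0 := \bigl((\phi'-\phi)/(1-\phi)\bigr)^{1/|E(F)|}$ makes this lower bound equal $\phi'$. Since $t(\bar{\S}(\lambda),F)$ is a continuous polynomial in $\lambda$ and nondecreasing along the interpolation, the intermediate value theorem yields $\tilde\lambda \in [0,\lambda_0]$ with $t(\bar{\S}(\tilde\lambda),F) = \phi'$. Setting $\bar{\S} := \bar{\S}(\tilde\lambda)$, the $L^1$ distance is
\[
\|\S - \bar{\S}\|_1 = \tilde\lambda \sum_{i<j}(1-s_{ij}) \;\leq\; \lambda_0 \binom{k}{2} \;\leq\; \left(\frac{|\phi-\phi'|}{1-\min(\phi,\phi')}\right)^{1/\binom{r}{2}} \binom{k}{2},
\]
using $|E(F)| \leq \binom{r}{2}$ together with $(\phi'-\phi)/(1-\phi) \leq 1$, so that raising to the smaller exponent $1/\binom{r}{2}$ only enlarges the bound.

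For the symmetric case $\phi > \phi'$, the natural dual construction is $\bar{\S}(\lambda) := (\S - \lambda\mathbf{1})/(1-\lambda)$, which inverts the increase map and satisfies $\S = \bar{\S}(\lambda) + \lambda(\mathbf{1} - \bar{\S}(\lambda))$. Applying the product inequality at $\bar{\S}(\lambda)$ with parameter $\lambda$ gives $\phi \geq \lambda^{|E(F)|} + (1-\lambda^{|E(F)|}) t(\bar{\S}(\lambda))$, whence $t(\bar{\S}(\lambda)) \leq (\phi-\lambda^{|E(F)|})/(1-\lambda^{|E(F)|})$; matching this with $\phi'$ at $\lambda_0 = ((\phi-\phi')/(1-\phi'))^{1/|E(F)|}$ and invoking IVT again locates a valid $\tilde\lambda \leq \lambda_0$ hitting $\phi'$ exactly. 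A direct computation gives $\|\S-\bar{\S}(\tilde\lambda)\|_1 = \tilde\lambda \sum(1-s_{ij})/(1-\tilde\lambda)$, which combined with the weighted Kruskal--Katona inequality $\sum s_{ij} \geq \phi^{1/|E(F)|}\binom{k}{2}$ (equivalently $\sum(1-s_{ij}) \leq (1-\phi^{1/|E(F)|})\binom{k}{2}$) reduces to the lemma's bound by elementary manipulation.

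The main obstacle is the validity of the inverse interpolation in the decreasing case: $(\S-\lambda\mathbf{1})/(1-\lambda)$ has nonnegative entries only when $\lambda \leq \min_{ij} s_{ij}$, which may fail when $\S$ has small entries. Handling this exceptional regime likely requires either a preliminary truncation (raising small entries slightly before inverting, at modest $L^1$ cost) or a fallback to a direct shrinkage $\bar{\S}(\lambda) = (1-\lambda)\S$ toward the zero matrix; in the latter one must use a sharper Kruskal--Katona-type upper bound on $\sum s_{ij}$ in terms of $\phi$ to still recover the stated bound. Stitching these cases together while ensuring the resulting $\bar{\S}$ remains a valid $(k,\eps)$-Szemer\'edi type (so that the $\eps$-uniformity of the underlying partition is preserved under the perturbation of entry values) is where I expect the proof to be most delicate.
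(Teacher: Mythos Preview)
Your argument for the case $\phi \le \phi'$ is essentially the paper's own proof: the interpolation $\bar s_{ij}=(1-\lambda)s_{ij}+\lambda$, the product inequality $\prod_i[(1-\lambda)s_i+\lambda]\ge \lambda^m+(1-\lambda^m)\prod_i s_i$, the intermediate value theorem, and the $L^1$ bound via $\sum(1-s_{ij})\le \binom{k}{2}$ all appear there in the same order. The paper's bookkeeping for the product inequality is slightly different (it lower-bounds $\alpha^{|B|}$ by $\alpha^{\binom{r}{2}}$ rather than lower-bounding $\prod_{i\notin B}s_i$ by $\prod_i s_i$), but the conclusion is identical. Your handling of general $F$ via $|E(F)|\le\binom{r}{2}$ is the natural one; the paper restricts to cliques ``for simplicity'' and asserts the general case is trivial.

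Your elaborate treatment of the decreasing case $\phi>\phi'$ goes well beyond the paper, which disposes of it with nothing more than ``assume without loss of generality that $\phi'\ge\phi$''. So the inverse interpolation, the Kruskal--Katona detour, and the truncation idea are not part of the paper's argument at all. The symmetric construction the WLOG presumably points to is the direct shrinkage $\bar S=(1-\alpha)S$, for which $t(\bar S,F)=(1-\alpha)^{|E(F)|}\phi$ exactly and one can again apply the intermediate value theorem; whatever subtleties remain in matching the precise constant are not addressed in the paper either.

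One misconception to clear up: a $(k,\eps)$-Szemer\'edi type is by definition just a triplet $(k,\eps,S)$ with $S$ any symmetric $[0,1]$-valued $k\times k$ matrix. The $\eps$-uniformity condition concerns whether a given \emph{graph} has that type, not the type itself, so there is nothing to ``preserve'' when you perturb the entries of $S$. The only constraint on $\bar S$ is that its entries remain in $[0,1]$, which your increasing interpolation guarantees automatically; your final worry is therefore not an issue.
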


\begin{proof}
To prove this we use a technique similar to the proof of \cite[Proposition~4.2]{chatterjee2011large}.
For simplicity we prove this lemma for cliques. The adjustment for general graphs $\grH$ is trivial since we consider non-induced graphs.
Assume without loss of generality that $\phi'\geq \phi$, and let  $\bar{\S}=\S+\alpha(1-\S)$ for some $\alpha\in[0,1]$ to be chosen later in the proof. 
Then for any $r$let $\mathcal{A}$ of indices we have
\begin{align*}
    \prod_{ij\in\mathcal{A}}\bar{s}_{ij}&=\prod_{ij\in\mathcal{A}}(s_{ij}+\alpha(1-s_{ij}))\\
    &\geq \prod_{ij\in\mathcal{A}}s_{ij}+\alpha^{\binom{r}{2}}\prod_{ij\in\mathcal{A}}(s_{ij}+(1-s_{ij}))- \alpha^{\binom{r}{2}}\prod_{ij\in\mathcal{A}}s_{ij} \\
    &= \prod_{ij\in\mathcal{A}}s_{ij}+\alpha^{\binom{r}{2}}(1-\prod_{ij\in\mathcal{A}}s_{ij}).
\end{align*}
Therefore  
\begin{align*}
    t(\bar{\S},\grH)\geq t(\S,\grH)+ \alpha^{\binom{r}{2}}(1-t(\S,\grH)) =\phi +\alpha^{\binom{r}{2}}(1-\phi). 
\end{align*}
and for $\alpha= \left(\frac{|\phi-\phi'|}{1-\phi}\right)^{\frac{1}{\binom{r}{2}}}$  we get
\begin{align*}
    t(\bar{\S},\grH)\geq \phi'.
\end{align*}
Since $t(\S,\grH)$ is continuous in all $s_{ij}$ we get by the intermediate value theorem that there exist some $\alpha\in[0,\left(\frac{|\phi-\phi'|}{1-\phi}\right)^{\frac{1}{\binom{r}{2}}}]$ for which $t(\bar{\S},\grH) =\phi'$.
Finally, note that 
\begin{align*}
    \|\S-\bar{S}\|_1=\alpha \sum_{ij} (1-s_{ij})\leq \alpha \binom{k}{2}, 
\end{align*}
 which concludes the proof.
\end{proof}

\begin{thm}\label{lem:single_density_typical_set_size}
 Fix $\phi, \gamma\in[0,1]$ and $\eps<\frac{1}{r^3}$. Let 
\begin{align}\label{eq:maximum_entropy_S_single_density}
    S^* =& \argmax_\S H(\S)\\
    & \mathrm{s.t.} \quad|t(\S,\grH)-\phi| \leq \gamma+\ctilde \eps^{\frac{1}{\bar{r}}}
 \end{align} 
be any maximizer, where the maximization is over all $(k,\eps)$-Szemer\'edi types. The size of the typical set $B_{\grH}^n(\phi,\gamma)$ satisfies 
\begin{align}
   \left|\frac{1}{n^2}\log |B_{\grH}^n(\phi,\gamma)|- \frac{H(S^*)}{k^2}\right| \leq 5 h\left(c\cdot \eps^{\frac{1}{r^3}}\right)+2\eps+o_\eps(1),
\end{align}
with \begin{align*}
    c=\begin{cases}\left(\frac{\twoctilde}{1-\phi-\gamma}\right)^{\frac{1}{\binom{r}{2}}},\;\phi+\gamma<1\\ 
    \left(\frac{\twoctilde}{1-\phi+\gamma}\right)^{\frac{1}{\binom{r}{2}}},\;\phi+\gamma\geq 1\end{cases}.
\end{align*}
\end{thm}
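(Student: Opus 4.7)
The plan is to specialize Theorem~\ref{thm:TypicalSetSizeGeneral} to the scalar setting ($d=1$, $\F=\{\grH\}$, $\bar{r}=r$), which directly delivers
\begin{align*}
\left|\frac{1}{n^2}\log|B_{\grH}^n(\phi,\gamma)| - \frac{H(S^*)}{k^2}\right| \leq 5\,h\!\left(\frac{\rho(S^*)}{4k^2}\right) + 2\eps + o_\eps(1),
\end{align*}
so the only remaining work is to bound $\rho(S^*)/(4k^2)$ explicitly. By the definition of the effective radius, it suffices to exhibit a $(k,\eps)$-Szemer\'edi type $\bar S$ within $L^1$-distance $\rho$ of $S^*$ whose density differs from $t(S^*,\grH)$ by $\twoctilde\eps^{1/r}$ in the direction of $\phi$ (this is precisely the trajectory used in the proof of Lemma~\ref{thm:SizeDensity} specialized to one dimension). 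Applying Lemma~\ref{lem:single_density_property} with $\phi_1=t(S^*,\grH)$ and $\phi_2=\phi_1\pm\twoctilde\eps^{1/r}$ supplies such a $\bar S$ with
\begin{align*}
\|S^* - \bar S\|_1 \;\leq\; \left(\frac{\twoctilde\,\eps^{1/r}}{1-\min\{\phi_1,\phi_2\}}\right)^{1/\binom{r}{2}} \binom{k}{2},
\end{align*}
reducing the problem to lower bounding $1-\min\{\phi_1,\phi_2\}$.

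Two cases arise. In Case A ($\phi+\gamma<1$) the feasibility constraint gives $\phi_1\leq\phi+\gamma+\ctilde\eps^{1/r}$, and the $\twoctilde\eps^{1/r}$-shift preserves this up to $O(\eps^{1/r})$, so $1-\min\geq 1-\phi-\gamma-O(\eps^{1/r})$ and the constant $c=(\twoctilde/(1-\phi-\gamma))^{1/\binom{r}{2}}$ emerges. In Case B ($\phi+\gamma\geq 1$), $\phi_1$ could a priori be close to $1$, which would collapse the denominator; the remedy is to exploit entropy maximization, noting that by the symmetry and concavity of $H$ a maximizer $S^*$ can be taken with all entries equal to the scalar $\phi_1^{1/\binom{r}{2}}$, and that such a maximizer must lie as close as the constraint allows to the unconstrained entropy maximum, giving $\phi_1\leq\max\{2^{-\binom{r}{2}},\phi-\gamma\}+O(\eps^{1/r})$, hence $1-\min\geq 1-\phi+\gamma-O(\eps^{1/r})$ and $c=(\twoctilde/(1-\phi+\gamma))^{1/\binom{r}{2}}$.

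Combining these estimates, using the elementary bounds $r\binom{r}{2}\leq r^3$ (so $\eps^{1/(r\binom{r}{2})}\leq\eps^{1/r^3}$ for $\eps\leq 1$) and $\binom{k}{2}/(4k^2)\leq 1/8$, together with the monotonicity of $h$ on $[0,1/2]$ (valid once $\eps$ is sufficiently small), one concludes $5\,h(\rho(S^*)/(4k^2))\leq 5\,h(c\cdot\eps^{1/r^3})$, and plugging back into the Step~1 bound yields the theorem. The main obstacle is Case B: the naive argument with a small $\twoctilde\eps^{1/r}$ perturbation breaks down when $t(S^*,\grH)\to 1$ since then $1-\min$ can be as small as $\eps^{1/r}$, and the resolution relies on an entropy-maximization argument forcing the density of $S^*$ away from the boundary of $[0,1]$; this is precisely what replaces $(1-\phi-\gamma)$ by $(1-\phi+\gamma)$ in the constant $c$.
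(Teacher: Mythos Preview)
Your overall strategy matches the paper's: construct, via Lemma~\ref{lem:single_density_property}, a type $\bar S$ with $\Lambda(\bar S,n)\subseteq B_{\grH}^n(\phi,\gamma)$ and explicitly small $\|S^*-\bar S\|_1$, then invoke Lemma~\ref{lem:ratio_of_class_sizes_vs_L1_norm} and Lemma~\ref{thm:tildSizeDensity}. However, the framing ``apply Theorem~\ref{thm:TypicalSetSizeGeneral} and bound $\rho(S^*)$'' is not quite right. The effective radius $\rho(S^*)$ is defined through a uniform lower bound on $\sigma_{\min}(J(\cdot))$ over an $L^1$-ball around $S^*$ (see~\eqref{eq:g_parameter_of_S}); exhibiting a single $\bar S$ whose density has shifted by $10\eps^{1/r}$ does not establish any such Jacobian bound, so it does not bound $\rho(S^*)$. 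The paper therefore does not invoke Theorem~\ref{thm:TypicalSetSizeGeneral} at all; instead it re-enters the proof of Lemma~\ref{thm:SizeDensity} and \emph{replaces} the step that appeals to $\rho(S^*)$ by the direct construction of $\bar S$ from Lemma~\ref{lem:single_density_property}. Your parenthetical shows you see this, but the argument should be restructured accordingly.

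For Case~B the paper's route differs from yours and is shorter. It observes that when $\phi+\gamma\geq 1$ the upper constraint is vacuous, and then appeals to monotonicity of $t(S,\grH)$ in each $s_{ij}$ to argue that the maximum-entropy type satisfies $t(S^*,\grH)<\phi+\gamma-5\eps^{1/\bar r}$; thus only the lower-boundary situation can occur, and $1-\min\{\phi_1,\phi_2\}\approx 1-(\phi-\gamma)$ follows directly. Your detour through a constant maximizer is not needed, and the claim that ``a maximizer $S^*$ can be taken with all entries equal'' is neither proved in the paper nor immediate: the feasible set $\{S:t(S,\grH)\geq c\}$ is in general non-convex, so the usual permutation-averaging argument does not apply. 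If you want to make this part rigorous, a cleaner device (implicitly underlying both the paper's monotonicity remark and your entropy-maximization remark) is to move along the segment from $S^*$ toward the all-$\tfrac{1}{2}$ type and use strict concavity of $H$ together with the intermediate value theorem to force $t(S^*,\grH)\leq\max\{t(S_{1/2},\grH),\,\phi-\gamma-5\eps^{1/\bar r}\}$, which is what both arguments need.
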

\begin{proof}
First, assume $\phi+\gamma<1$. 
It is enough to show that when $d=1$ the result in Lemma~\ref{thm:SizeDensity} can be improved to
\begin{equation}
\left| \frac{1}{n^2}\log \frac{|B_{\grH}^n(\phi,\gamma)|}{|C_{\grH}^n(\phi,\gamma)|}\right| \leq 5 h\left( c\cdot \eps^{\frac{1}{r^3}}\right) + \frac{8k}{n},
\end{equation}
where $C_{\grH}^n(\phi,\gamma)$ is as in \eqref{eq:Bset}. Then the rest of the proof follows easily using the same technique as in Theorem~\ref{thm:TypicalSetSizeGeneral}.
To that end, assume without loss of generality that $\phi+\gamma-\ctilde \eps^{\frac{1}{\bar{r}}}\leq t(\S^*,\grH)\leq \phi+\gamma+\ctilde \eps^{\frac{1}{\bar{r}}}$. 
We need to show that there exists a type $\bar{\S}$ with $t(\bar{\S},\grH)=t(\S^*,\grH) -\twoctilde \eps^{\frac{1}{\bar{r}}}$ to ensure that $\Lambda(\bar{S})\subseteq B_{\grH}^n(\phi,\gamma)$. 
From Lemma~\ref{lem:single_density_property} we get that there exist such $\bar{\S}$ with $\|\S^*-\bar{\S}\|\leq (\frac{\twoctilde}{1-\phi-\gamma })^{\frac{1}{\binom{r}{2}}}\cdot \eps^{\frac{1}{r^3}}$, which concludes the proof for the case $\phi+\gamma<1$.
The adjustment for the case $\phi+\gamma\geq 1$ can be easily done by noting that from the monotonicity of $t(\S,F)$ in all $s_{ij}$ it follows that for any $\eps<\frac{1}{r^3}$, the maximum entropy Szemer\'edi type \eqref{eq:maximum_entropy_S_single_density} must hold that $t(\S,F)<\phi+\gamma-\ctilde \eps^{\frac{1}{\bar{r}}}$, hence the constant in the theorem will be given by $c=\left(\frac{\twoctilde}{1-\phi+\gamma}\right)^{\frac{1}{\binom{r}{2}}}$.
\end{proof}

\section{Summary and Discussion}\label{sec:discussion}

In this paper we considered the problem of counting the number of  graphs on $n$ vertices that share approximately the same $F$-degree distribution.
Except for the special case when $F$ is a single edge, this problem had not been addressed before, and generalizing the methods used in the edge case appears to be nontrivial, partly due to the feasibility problem. 
Here, to circumvent the feasibility problem, we defined a histogram as a KS-ball around a smooth reference distribution, and then characterize the number of graphs whose $F$-degree distribution lies inside this ball, in terms of a solution to a constrained maximum entropy problem over fixed-dimension random graphs with global structure constraints. Our approach was based on reducing the problem to the study of multiple global density types, and then estimating the size of such types using the regularity lemma and anti-concentration inequalities. 

The main gap in the current work is deriving explicit continuity conditions for the maximum entropy solution \eqref{eq:maximum_entropy_S} in Theorem~\ref{thm:TypicalSetSizeGeneral}, when $d>1$.
To that end, one approach is bounding the smallest singular value of the Jacobian $J(\S)$ of the mapping $\t(\S,\F)$ away from zero. However, this cannot be done for all the feasible $\S$ points. For example, whenever $s_{ij}=s$ for all $i,j\in[k]$, and some $s\in[0,1]$, $J(\S)$ has at least one zero singular value. Hence, it seems that a more complex argument is needed in order to establish the desired continuity. 

Other interesting aspects for further study may include improving the exponential bound \eqref{eq:KS_upper_bound_function_of_T} to yield tighter upper and lower bounds in Theorem~\ref{thm:main_theorem}, and extending the framework to handle induced subgraphs. The fact that we count non-induced subgraphs plays a central role in both steps of our solution. When replacing the $F$-degree distribution constraint with $d$ global density constraint in Subsection~\ref{sec:reduction_to_densities}, we use the fact that the subgraphs we count are non-induced in order to establish the equivalence between the moments of the $F$-degree distribution and the global densities of the graph. Then, when characterizing the size of the density typical set in Subsection~\ref{sec:sol_of_densities_problem}, we assume that the (expected) global subgraph densities of the Szemer\'edi type are monotonically increasing in the edge densities $s_{ij}$. This assumption is no longer true when dealing with induced subgraphs.

%%%%%%%%%%%%%%%%%%%%%%%%%%%%%%%%%%%%%%%%%%%%%%%%%%%%%%%%%%%%%%%%%%%%%%%%%%%%%%%%%%%%%%%%%%%%%%%%%%%%%%%%%%%%%%%%%%%%%%%%%%5
\section{Acknowledgements}
We are indebted to Wojciech Samotij for useful discussions and many ideas that were elemental in writing this paper.
We thank the Associate Editor for his dedicated effort in handling our paper, and the reviewers for making many helpful and constructive comments that improved  our presentation. In particular, we are grateful to one of the reviewers for introducing us to~\cite{rigollet2019uncoupled}.

%%%%%%%%%%%%%%%%%%%%%%%%%%%%%%%%%%%%%%%%%%%%%%%%%%%%%%%%%%%%%%%%%%%%%%%%%%%%%%%%%%%%%%%%%%%%%%%%%%%%%%%%%%%%%%%%5
%%%%%%%%%%%%%%%%%%%%%%%%%%%%%%%%%%%%%%%%%%%%%%%%%%%%%%%%%%%%%%%%%%%%%%%%%%%%%%%%%%%%%%%%%%%%%%%%%%%%%%%%%%%%%%%
%%%%%%%%%%%%%%%%%%%%%%%%%%%%%%%%%%%%%%%%%%%%%%%%%%%%%%%%%%%%%%%%%%%%%%%%%%%%%%%%%%%%%%%%%%%%%%%%%%%%%%%%%%%%%%%%%555

% \fullproofs{

\appendix 
\section{Appendix}

\subsection{The CV Bounds}\label{appendix:CV_bounds}

We provide a brief formal presentation of the main large deviation result by Chatterjee and Vardhan in \cite{chatterjee2011large}, and then give a toy example showing that these bounds are not necessarily tight. 
Let $\mathcal{W}$ be the space of all graphons, i.e., of all measurable functions from $[0, 1]^2$ into $[0, 1]$ that satisfy $f (x, y) = f (y, x)$ for all $x$, $y$. Two elements $f,g\in\mathcal{W}$ are said to be equivalent if there exist a measure preserving bijection $\sigma:[0,1] \rightarrow [0,1]$ such that $f(x,y)=g(\sigma x,\sigma y)\triangleq g_\sigma(x,y)$. The quotient space induced by this equivalence relation is denoted by $\tilde{\mathcal{W}}$. The cut distance between two elements $f,g\in \mathcal{W}$ is given by 
\begin{align}
    d_\square(f,g)\triangleq \sup_{S,T\subseteq[0,1]}\int_{S\times T}[f(x,y)-g(x,y)]dxdy. 
\end{align}
The cut metric is then defined for two graphon $\tilde{f},\tilde{g}$ in $\tilde{\mathcal{W}}$ as
\begin{align}
    \delta_\square (\tilde{f},\tilde{g})= \inf_\sigma d_\square(f,g_\sigma).
\end{align}
It was shown in \cite{lovasz2012large} that $\tilde{\mathcal{W}}$ is compact with respect to the cut metric. 

A graph $G$ of $n$ vertices has a natural representation $f^G$ in the graphon space, given by \eqref{eq:fG}. 
The $\grH$-density of a graph as given in \eqref{eq:density_deterministic} also has a natural extension to graphons, given by
\begin{align}
    t(f,\grH) \triangleq \int_{[0,1]^r} \prod_{i\sim_\grH j}f(x_i,x_j)dx_1\cdots dx_r,
\end{align}
where $i\sim_\grH j$ indicates that there is an edge between the vertices $i$ and $j$ in $\grH$.
If $G_n$ is a sequence of simple graphs whose number of nodes tends to infinity, and $t(\grH,G_n)$ is the corresponding $\grH$-density of $G_n$, 
and the sequence $t(\grH,G_n)$, $n\rightarrow \infty$, converges for any $\grH$,  
then there exists  a graphon $f\in\mathcal{W}$, such that 
\begin{align}
    \delta_\square(\tilde{f}^{G_n},\tilde{f})\underset{n\rightarrow \infty}{\longrightarrow} 0.
\end{align}

Denote by $\mathbb{P}_n$ the probability induced on the space $\tilde{\mathcal{W}}$ by the Erd\H{o}s-R\'enyi random graph $G(n,1/2)$ through the map $G\rightarrow f^G\rightarrow \tilde{f}^G$ (the original result in \cite{chatterjee2011large} is stated for a general Erd\H{o}s-R\'enyi random graph $G(p,1/2)$, $p\in[0,1]$. We bring the special case $p=1/2$ here since this is the useful setting for the purpose of our counting problem).

\begin{thm}[Theorem 2.3 in \cite{chatterjee2011large}]\label{thm:CV_bounds}
For any closed set $\tilde{V}\subseteq \tilde{\mathcal{W}}$
\begin{align}
    \limsup_{n\rightarrow \infty} \frac{1}{n^2} \log\mathbb{P}_n(\tilde{V})\leq \sup_{f\in\tilde{V}}(H(f)),\label{eq:CV_upper}
\end{align}
and for any open set $\tilde{U}\in\tilde{\mathcal{W}}$,
\begin{align}
    \liminf_{n\rightarrow \infty} \frac{1}{n^2} \log\mathbb{P}_n(\tilde{U})\geq \sup_{f\in\tilde{U}}(H(f)).\label{eq:CV_lower}
\end{align}
\end{thm}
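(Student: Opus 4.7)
The plan is the standard Chatterjee--Varadhan argument: reduce both bounds to a local cut-metric ball estimate around step-function graphons, then globalize using the Lov\'asz--Szegedy compactness of $\tilde{\mathcal W}$ in $\delta_\square$. The local estimate meshes naturally with the Szemer\'edi-type machinery of Section \ref{sec:szemerdi_types}, so most of the counting ingredients are already in hand.

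First I would prove the local estimate. Fix $\tilde f\in\tilde{\mathcal W}$ and, after a measure-preserving bijection, assume $f$ equals a step function with value $s_{ij}\in[0,1]$ on the $(i,j)$-block of a uniform $k\times k$ partition of $[0,1]^2$. A counting / inverse-counting lemma in the cut metric --- the direct half is essentially our Lemma~\ref{lem:counting_lemma}, the inverse half is the regularity-based statement that graphons close in $\delta_\square$ admit compatible balanced partitions --- shows that, up to $\exp(o(n^2))$ factors, the graphs $G$ with $\tilde f^G\in B_\square(\tilde f,\delta)$ are exactly the graphs lying in Szemer\'edi type classes $\Lambda(S',n)$ with $\|S'-(s_{ij})\|_\infty=O(\delta^\alpha)$ for some $\alpha>0$. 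Counting these by Stirling and weighting each graph by its $G(n,1/2)$-probability $2^{-\binom{n}{2}}$ yields
\begin{equation*}
   \tfrac{1}{n^2}\log\mathbb P_n\!\left(B_\square(\tilde f,\delta)\right) = H(f) + o_\delta(1) + o_n(1),
\end{equation*}
up to the normalization convention of the statement (the standard CV formulation has a $\tfrac{1}{2}$ and a $-\tfrac{\log 2}{2}$ shift coming from the Bernoulli-$1/2$ base measure). For a general $f$ one approximates in $\delta_\square$ by step functions $f_k$ (via conditional expectation onto a dyadic partition after a rearrangement), obtaining $\delta_\square(f_k,f)\to 0$ and $H(f_k)\to H(f)$.

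Given the local estimate, the upper bound on closed $\tilde V$ follows by compactness: for each $\varepsilon>0$ extract a finite cover $\tilde V\subseteq\bigcup_{j\le N} B_\square(\tilde f_j,\delta_j)$ with every $\tilde f_j\in\tilde V$ and each $\delta_j$ small enough that the local estimate gives $\mathbb P_n(B_\square(\tilde f_j,\delta_j))\le \exp(n^2(H(\tilde f_j)+\varepsilon))$; a union bound over the $N$ balls and $\varepsilon\to 0$ conclude. For the lower bound on open $\tilde U$, pick $f\in\tilde U$ nearly attaining $\sup_{\tilde U}H$, find $B_\square(\tilde f,\delta)\subset\tilde U$, approximate $f$ by a step function $f_k$ with $H(f_k)\ge H(f)-\eta$ and $\delta_\square(f_k,f)<\delta/2$, and invoke the \emph{lower} half of the local estimate --- a Chernoff-type concentration that the empirical block densities of $G(n,1/2)$ on a balanced $k$-partition sit within $O(1/\sqrt{n})$ of $1/2$, so that the random graphon lands in $B_\square(\tilde f,\delta)$ with overwhelming conditional probability inside the target Szemer\'edi class.

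The hardest step is the upper direction of the local estimate: one must show that \emph{every} graph with $\delta_\square(\tilde f^G,\tilde f)<\delta$ admits, after some vertex permutation and Szemer\'edi partition, block densities approximating $s_{ij}$ up to a vanishing-in-$\delta$ quantity, losing only $\exp(o(n^2))$ in the count. This requires the regularity lemma together with uniform control over the measure-preserving rearrangements that can bring $f^G$ close to $\tilde f$, and is the counterpart of the ``inverse'' side of the counting lemma. A further subtlety is that $H(\cdot)$ is only upper semicontinuous in $\delta_\square$, so the twin limits $k\to\infty$ in the step-function approximation and $\delta\to 0$ in the ball radius must be interleaved carefully to preserve the target inequalities.
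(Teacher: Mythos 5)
The paper does not actually prove Theorem~\ref{thm:CV_bounds}: it is imported verbatim from Chatterjee and Varadhan and used only as background in Appendix~\ref{appendix:CV_bounds}, so there is no in-paper argument to compare yours against. Your outline follows the same general architecture as the original proof --- two-sided estimates for small $\delta_\square$-balls around step graphons, approximation of a general $f$ by block averages, then compactness of $\tilde{\mathcal{W}}$ with a finite cover for closed sets and a single well-chosen ball for open sets --- and you are right to flag the missing $\tfrac{1}{2}$ and $-\tfrac{\log 2}{2}$ normalization coming from the $G(n,1/2)$ base measure: as literally printed, \eqref{eq:CV_lower} already fails for $\tilde U=\tilde{\mathcal{W}}$ (left side $0$, right side $\log 2$), so any proof must target the corrected form $\tfrac{1}{2}\left(H(f)-\log 2\right)$.

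Two steps of your sketch do not work as stated. First, the lower bound: Chernoff concentration of the block densities of $G(n,1/2)$ around $1/2$ only says the empirical graphon concentrates near the constant-$\tfrac{1}{2}$ graphon; it cannot place the sample near a general target $\tilde f$, whose ball has exponentially small probability. The correct mechanism is pure counting: every graph has $\mathbb{P}_n$-mass exactly $2^{-\binom{n}{2}}$, so you must lower-bound the number of graphs in a single Szemer\'edi class whose step function is cut-close to $f_k$; Hoeffding enters only to show that all but an exponentially small fraction of bipartite graphs with the prescribed block edge counts are $\eps$-uniform (exactly the argument of Appendix~\ref{appendix:proof_of_8}), which gives the exponent $\tfrac{1}{2}H(f_k)-\tfrac{\log 2}{2}$. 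Second, your upper bound hinges on an ``inverse counting lemma'' asserting that every $G$ with $\delta_\square(\tilde f^G,\tilde f)<\delta$ admits, after some measure-preserving rearrangement, a balanced partition with block densities entrywise $O(\delta^\alpha)$-close to $(s_{ij})$; you yourself call this the hardest step and leave it unproven, and in this entrywise form it is both delicate (the infimum over rearrangements need not respect the blocks) and stronger than necessary. The standard way out is to avoid entrywise identification altogether: apply the regularity lemma to each $G$ in the ball, use the standard fact that the step graphon of an $\eps$-regular partition is $o_\eps(1)$-close to $\tilde f^G$ in $\delta_\square$ (hence lies in a slightly inflated ball around $\tilde f$), bound each class size via Lemma~\ref{lem:S_size} and the number of classes via Lemma~\ref{lem:number_of_Szemeredi_types}, and finish with upper semicontinuity of $H$ on the inflated ball as $\delta,\eps\to 0$. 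With those two repairs your plan becomes a viable reconstruction of the CV bounds.
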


The above can be used to bound the number of graphs $G$ whose $\grH$-density lies within some interval $[a,b]$ by setting $\tilde{U}$ to be the open set $\tilde{U}=\{f\in\tilde{\mathcal{W}}: a < t(f,\grH)< b\}$ and $\tilde{V}$ as its closure $\tilde{V}=\{f\in\tilde{\mathcal{W}}: a\leq t(f,\grH)\leq b\}$.
Then by noting that $\mathbb{P}_n$ is uniform over all graphons that correspond to some graph $G$ on $n$ vertices and zero over all others, the desired result is obtained. 
A similar result can be derived for any graph parameter. The tightness of the bounds then depends, as explained in Subsection~\ref{sec:intro_CV_bounds}, on the continuity of the rate function $\psi(t)$ in \eqref{eq:rate_func} with respect to $t$.

To show that the bounds may not coincide in some cases, consider the following simple example;  let $e(f)$ denote the edge density of $f$, that is, $t(\grH,f)$ in the case that $\grH$ is a single edge. We define the parameter $\tau(f)$ to be the function $z(x)$ in Figure~\ref{fig:toy_example}, applied to $e(f)$, i.e., $\tau(f)= z(e(f))$. 
Since $e(f)$ is continuous with respect to the cut metric, and $\tau(f)$ is a continuous mapping of $e(f)$, then $\tau(f)$ is also continuous with respect to the cut metric and hence a valid graph parameter.

\begin{figure}[h]
\includegraphics[width=0.85\textwidth]{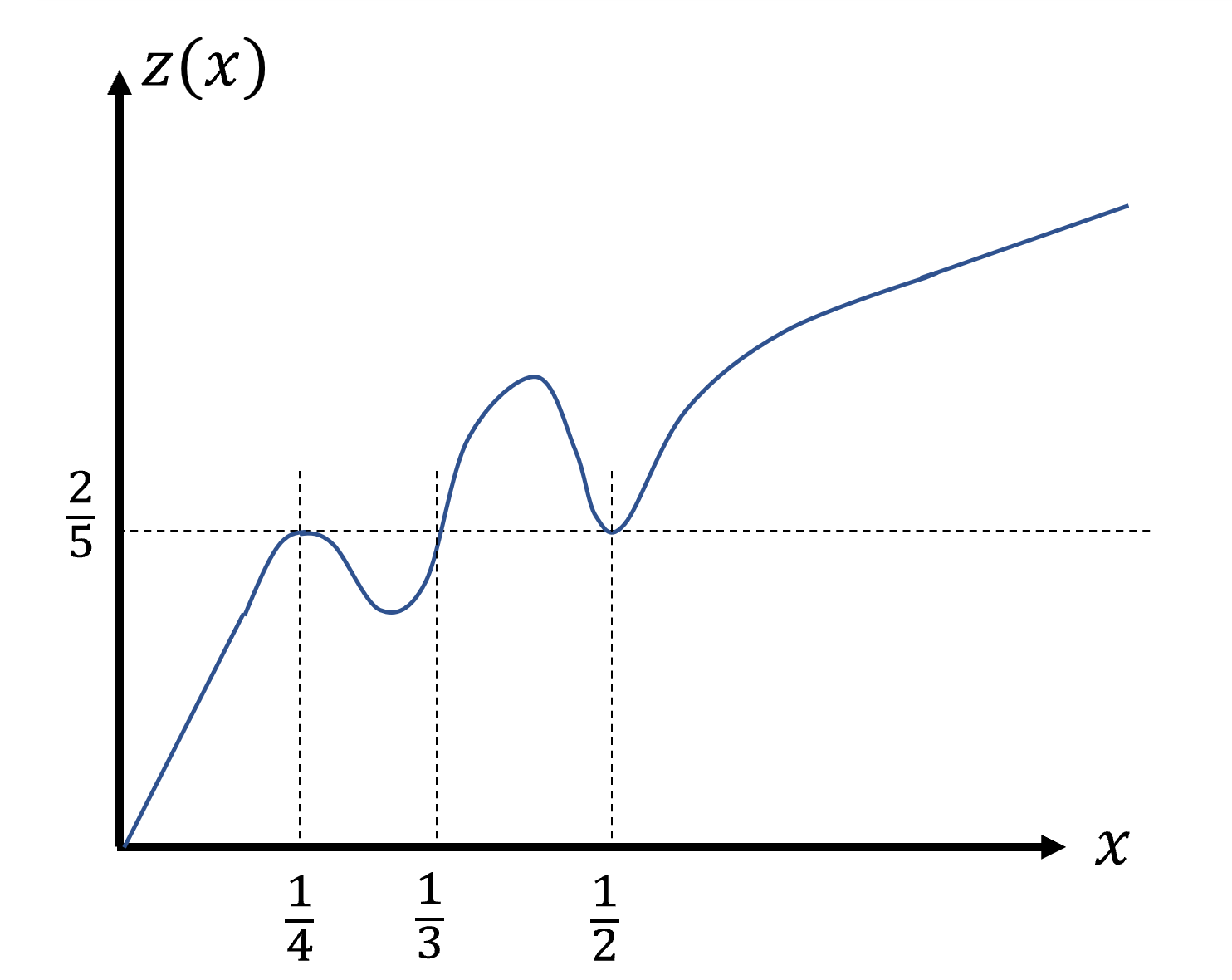}\label{fig:toy_example}
\caption{Example of a parameter for which the CV bounds do not coincide. Setting $\tau(f)=Z(e(f))$, where $e(f)$ is the edge density of the graphon $f$, will result in a gap between the upper bound (80) and lower bound (81), for the set ${f:\tau(f)<2/5}$ and its closure. This gap occurs since $\tau(f)$ is not a nice graph parameter; it has local extrema points that are not global ones.}
\end{figure}

Next, let us  apply the bounds in Theorem~\ref{thm:CV_bounds} to an open set and its closure, and see that they do not coincide (in some cases). Consider the sets $\tilde{U}=\{f:\tau(f)<\frac{2}{5}\}$ and $\tilde{V}=\{f:\tau(f)\leq \frac{2}{5}\}$. Then we get that the upper bound \eqref{eq:CV_upper} is equal to $1$, since $\tilde{V}$ contains the constant graphon $f(x,y)=1/2$, $\forall x,y$, which is the global maximum entropy graphon.  
However, the lower bound \eqref{eq:CV_lower} must be strictly smaller than $1$, since $\tilde{U}$ contains only graphons with $e(f)<1/3$. 
An example with a countable number of jump discontinuities can be constructed in a similar way.

\ifarxiv
\subsection{Proof of Szemer\'edi's Regularity Lemma }\label{appendix:Szemeredi_reg_lemma}
We prove Szemer\'edi's  regularity lemma, to make the paper more self-contained for readers less familiar with it. We prove a slightly weaker version of Lemma~\ref{lem:SzemerLem}, where the size of the sets in the partition are not necessarily equal sized. Also, we only show that for any $\eps$, for each graph there exist some $k$ such that it has an $(\eps,k)$ partition, rather that there is a single $k$ that fits all graphs (for $n$ large enough). We chose to prove this version, which is based on the one available on Wikipedia, since its proof is short and intuitive; the proof of the full version appearing in Lemma~\ref{lem:SzemerLem} can be found in \cite{szemeredi1975regular}. The course of the proof will be as follows: for a given graph we first start with some arbitrary partition of size $k_0$. Then, if this partition is not $\eps$-uniform we perform a refinement step where each set is partitioned to at most $2^{k_0}$ parts. We repeat this step while the partition is not $\eps$-uniform.
We will show that as long as the partition is not $\eps$-uniform, the refinement step increases the "energy" of the partition (to be defined later). Since this energy property is bounded from above, this process is finite and bound to produce some $\eps$-uniform partition.

To that end we need the following lemmas that will show the energy cannot decrease upon refinement, and most increase if the partition to be refined is not $\eps$-uniform.
Let $G$ be some graph on $n$ vertices and let $W, U\subseteq V(G)$. Define
\begin{align*}
    q(W,U)\triangleq \frac{|W||U|}{n^2}d(W,U)^2.
\end{align*}
For partitions $\mathcal{P}_W=\{W_1,\cdots,W_k\}$ of $W$ and $\mathcal{P}_U=\{U_1,\cdots,U_k\}$ of $U$ define
\begin{align*}
    q(\mathcal{P}_W,\mathcal{P}_U)\triangleq \sum_{i=1}^k \sum_{j=1}^k q(W_i,U_j).
\end{align*}
Then, for a partition $\mathcal{P}=\{C_1,\cdots, c_k\}$ of $V(G)$ we define the {\em energy} of the partition as 
\begin{align*}
    q(\mathcal{P})= \sum_{i=1}^k \sum_{j=1}^k q(C_i,C_j) = \sum_{i=1}^k \sum_{j=1}^k \frac{|C_i||C_j|}{n^2}d(C_i,C_j)^2.
\end{align*}
Note that $0\leq q(\mathcal{P}) \leq 1$ for any $\mathcal{P}$ since $0\leq d(C_i,C_j)\leq 1$ by definition.
First we show that the energy is non increasing upon refinement.
\begin{lem}\label{lem:non_increasing_eneregy}
Let $\mathcal{P}_W$ and $\mathcal{P}_U$ be some partitions of $W$ and $U$ respectively, then
\begin{align}
    q(\mathcal{P}_W,\mathcal{P}_U)\geq q(W,U).
\end{align}
\end{lem}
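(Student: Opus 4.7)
The plan is to exploit the convexity of the squared density, using the fact that $d(W,U)$ is a weighted average of the refined densities $d(W_i,U_j)$. Specifically, since $e(W,U) = \sum_{i,j} e(W_i, U_j)$, dividing by $|W||U|$ yields
\begin{align*}
    d(W,U) = \sum_{i,j} \frac{|W_i||U_j|}{|W||U|}\, d(W_i,U_j),
\end{align*}
which expresses $d(W,U)$ as a convex combination of $\{d(W_i,U_j)\}$ with weights $\lambda_{ij} = \frac{|W_i||U_j|}{|W||U|}$ (noting $\sum_{i,j}\lambda_{ij}=1$ since the $\mathcal{P}_W,\mathcal{P}_U$ partition $W,U$).

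Applying Jensen's inequality to the convex function $x \mapsto x^2$ then gives
\begin{align*}
    d(W,U)^2 \leq \sum_{i,j} \lambda_{ij}\, d(W_i,U_j)^2 = \frac{1}{|W||U|}\sum_{i,j} |W_i||U_j|\, d(W_i,U_j)^2.
\end{align*}
Multiplying both sides by $\frac{|W||U|}{n^2}$ reassembles the two sides exactly into $q(W,U)$ on the left and $q(\mathcal{P}_W,\mathcal{P}_U)$ on the right, which is the desired inequality. Equivalently, one could derive the same bound directly from Cauchy–Schwarz applied to $\sum_{ij} e(W_i,U_j)\cdot 1$, weighted by $|W_i||U_j|$; both routes are effectively the same calculation.

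There is no real obstacle here: the lemma is a textbook convexity computation, and the only thing to verify carefully is that the weights $\lambda_{ij}$ indeed sum to one (which uses that $\mathcal{P}_W$ and $\mathcal{P}_U$ are \emph{partitions}, so $\sum_i |W_i| = |W|$ and $\sum_j |U_j| = |U|$). The cleanest write-up is to state the convex-combination identity for $d(W,U)$ explicitly in one line and then invoke Jensen in the next.
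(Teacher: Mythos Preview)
Your proof is correct and is essentially the same as the paper's: the paper phrases the same convexity argument probabilistically by setting $Z=d(W_i,U_j)$ for $(i,j)$ chosen via uniformly random vertices, so that $\mathbb{E}[Z]=d(W,U)$, $\mathbb{E}[Z^2]=\frac{n^2}{|W||U|}q(\mathcal{P}_W,\mathcal{P}_U)$, and then invokes $\mathbb{E}[Z^2]\ge(\mathbb{E}[Z])^2$, which is exactly your Jensen step with the same weights $\lambda_{ij}$.
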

\begin{proof}
Let $\mathcal{P}_W=\{W_1,\cdots,W_k\}$ of $W$ and $\mathcal{P}_U=\{U_1,\cdots,U_k\}$, and let us choose a vertex $x$ from $W$ and a vertex $y$ from $U$ uniformly at random.
Let $W_i$ and $U_j$ be the subsets that $x$ and $y$ belongs to in the partitions $\mathcal{P}_W$ and $\mathcal{P}_U$, respectively, and define the random variable $Z=d(W_i,U_j)$.
Then
\begin{align}
    \mathbb{E}\left[Z\right]=\sum_{i=1}^k \sum_{j=1}^k \frac{W_i}{W}\frac{U_j}{U}d(W_i,U_j)=\frac{e(W,U)}{|W||U|}=d(W,U),
\end{align}
and
\begin{align}
    \mathbb{E}\left[Z^2\right]=\sum_{i=1}^k \sum_{j=1}^k \frac{W_i}{W}\frac{U_j}{U}d(W_i,U_j)^2=\frac{n^2}{|W||U|}q(\mathcal{P}_W,\mathcal{P}_U).
\end{align}
By convexity we have $\mathbb{E}\left[Z^2\right]\geq \mathbb{E}\left[Z\right]^2$ which yields the desired result.
\end{proof}
Next, we show that if a pair of sets is not $\eps$-uniform, there exist a refinement for the pair that will boost its energy.
\begin{lem}\label{lem:energy_boost}
If the pair of sets $(W,U)$ is not $\eps$-uniform as witnessed by $W_1\subset W$ and $U_1\subset U$, then 
\begin{align}
    q(\{W,W\backslash W_1\},\{U,U\backslash U_1\})>q(W,U)+\eps^4\frac{|W||U|}{n^2}.
\end{align}
\end{lem}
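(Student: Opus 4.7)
The plan is to carry over the probabilistic viewpoint used in Lemma~\ref{lem:non_increasing_eneregy} and recognize that the gain from refinement equals the variance of the natural density random variable; the non-uniformity witness will then provide a concrete lower bound on this variance. (The statement is to be read with the refinement partitions $\{W_1, W\setminus W_1\}$ and $\{U_1, U\setminus U_1\}$, the outer $W$ and $U$ in the statement being an evident typo.)

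Concretely, I would first pick $x$ uniformly in $W$ and $y$ uniformly in $U$, let $(W_i,U_j)$ be the cell of $\{W_1,W\setminus W_1\}\times\{U_1,U\setminus U_1\}$ containing $(x,y)$, and set $Z \triangleq d(W_i,U_j)$. Exactly as in the proof of Lemma~\ref{lem:non_increasing_eneregy},
\begin{align*}
    \mathbb{E}[Z] \;=\; d(W,U), \qquad \mathbb{E}[Z^2] \;=\; \frac{n^2}{|W||U|}\, q\bigl(\{W_1,W\setminus W_1\},\{U_1,U\setminus U_1\}\bigr),
\end{align*}
so that
\begin{align*}
    \mathrm{Var}(Z) \;=\; \frac{n^2}{|W||U|}\Bigl( q\bigl(\{W_1,W\setminus W_1\},\{U_1,U\setminus U_1\}\bigr) - q(W,U) \Bigr).
\end{align*}
Thus, proving the lemma reduces to showing $\mathrm{Var}(Z) \geq \eps^4$.

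Next, I would lower bound the variance by retaining only the contribution of the distinguished cell $(W_1,U_1)$. Writing
\begin{align*}
    \mathrm{Var}(Z) \;=\; \sum_{i,j}\frac{|W_i||U_j|}{|W||U|}\bigl(d(W_i,U_j)-d(W,U)\bigr)^2
\end{align*}
as a sum of non-negative terms and dropping all but the $(W_1,U_1)$ term gives
\begin{align*}
    \mathrm{Var}(Z) \;\geq\; \frac{|W_1||U_1|}{|W||U|}\bigl(d(W_1,U_1)-d(W,U)\bigr)^2.
\end{align*}
The hypothesis that $(W,U)$ fails $\eps$-uniformity with witness $(W_1,U_1)$ yields $|W_1|\geq \eps|W|$, $|U_1|\geq \eps|U|$, and $|d(W_1,U_1)-d(W,U)|\geq \eps$, so $\mathrm{Var}(Z)\geq \eps^2\cdot \eps^2=\eps^4$, completing the argument. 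The strict inequality can be recovered either by noting that, if the non-uniformity inequality is assumed strict, so is the bound, or by keeping an additional (non-negative) term from the other three cells.

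The main obstacle, such as it is, amounts to the single conceptual step of identifying the refinement gain with a variance; once that is done, the calculation is mechanical and the non-uniformity witness is tailor-made to produce the $\eps^4$ drop. No additional tools beyond what was used in Lemma~\ref{lem:non_increasing_eneregy} are required, and in particular the bound depends only on the witness, not on the overall structure of the partition.
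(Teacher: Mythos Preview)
Your proposal is correct and follows essentially the same route as the paper: define the density random variable $Z$ as in Lemma~\ref{lem:non_increasing_eneregy}, identify the refinement gain with $\mathrm{Var}(Z)$, and lower bound the variance by the single contribution of the witness cell $(W_1,U_1)$ to obtain $\eps^4$. Your remark about the typo in the statement and your explicit handling of the strict inequality are in fact slightly more careful than the paper's own write-up.
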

\begin{proof}
Define $Z$ as in the previous lemma. Then
\begin{align}
    \mathrm{var}\left[Z\right] = \mathbb{E}\left[Z^2\right]- \mathbb{E}\left[Z\right]^2= \frac{n^2}{|W||U|}q(\{W,W\backslash W_1\},\{U,U\backslash U_1\})-q(W,U).
\end{align}
Note that $|Z-\mathbb{E}[Z]|=|d(W_1,U_1)-d(W,U)|$ with probability $\frac{W_1}{W}\frac{U_1}{U}$, hence 
\begin{align}
     \mathrm{var}\left[Z\right] = \mathbb{E}\left[\left(Z-\mathbb{E}\left[Z\right]\right)^2\right]\geq \frac{W_1}{W}\frac{U_1}{U}(d(W_1,U_1)-d(W,U))^2>\eps ^4,
\end{align}
which concludes the proof.
\end{proof}

\begin{lem}\label{lem:partition_increase_energy}
If a partition $\mathcal{P} = {C_1,\cdots , C_k}$ of $V(G)$ is not $\eps$-uniform, then there exists a refinement $\mathcal{P}'$ of $\mathcal{P}$ 
in which every set $C_i$ is partitioned into at most $2^k$ parts and 
\begin{align}
    q(\mathcal{P}')\geq q(\mathcal{P})+\eps^5.
\end{align}
\end{lem}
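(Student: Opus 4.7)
The plan is to construct the refinement $\mathcal{P}'$ explicitly from the witnesses of irregularity, and then to upgrade the pairwise energy-boost guarantee of Lemma~\ref{lem:energy_boost} into a global bound by summing over all irregular pairs with the help of the monotonicity Lemma~\ref{lem:non_increasing_eneregy}.

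First, since $\mathcal{P}$ is not $\eps$-uniform, more than $\eps\binom{k}{2}$ pairs $(C_i,C_j)$ fail the $\eps$-uniformity test, and for each such pair I would fix witnesses $A_{ij}\subseteq C_i$ and $A_{ji}\subseteq C_j$ with $|A_{ij}|\geq\eps|C_i|$, $|A_{ji}|\geq\eps|C_j|$, and $|d(A_{ij},A_{ji})-d(C_i,C_j)|>\eps$. For each $i\in[k]$ I would then take $\mathcal{P}'_i$ to be the partition of $C_i$ into the atoms of the Boolean algebra generated by the family $\{A_{ij}:j\neq i\}$. Since at most $k-1$ sets generate this algebra, $|\mathcal{P}'_i|\leq 2^{k-1}\leq 2^k$, and the common refinement $\mathcal{P}'\triangleq \bigsqcup_i \mathcal{P}'_i$ is the candidate partition, with every original block split into at most $2^k$ parts as required.

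Second, I would decompose the energy blockwise, writing $q(\mathcal{P})=\sum_{i,j}q(C_i,C_j)$ and $q(\mathcal{P}')=\sum_{i,j}q(\mathcal{P}'_i,\mathcal{P}'_j)$. For every ordered pair $(i,j)$, Lemma~\ref{lem:non_increasing_eneregy} gives $q(\mathcal{P}'_i,\mathcal{P}'_j)\geq q(C_i,C_j)$, so the refinement can only raise the energy. For every \emph{irregular} ordered pair, the partition $\mathcal{P}'_i$ refines the two-part splitting $\{A_{ij},C_i\setminus A_{ij}\}$ and $\mathcal{P}'_j$ refines $\{A_{ji},C_j\setminus A_{ji}\}$, so one application of Lemma~\ref{lem:non_increasing_eneregy} reduces to the $2\times 2$ case, and Lemma~\ref{lem:energy_boost} then furnishes a strict gain $q(\mathcal{P}'_i,\mathcal{P}'_j)\geq q(C_i,C_j)+\eps^4\cdot |C_i||C_j|/n^2$.

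Third, I would sum the per-pair bounds over all ordered $(i,j)$ and use that strictly more than $\eps k(k-1)$ ordered pairs are irregular. This is the step I expect to be the main obstacle, because the weights $|C_i||C_j|/n^2$ depend on possibly unbalanced block sizes. In the equi-sized regime each such weight equals $1/k^2$, so the telescoping yields $q(\mathcal{P}')-q(\mathcal{P})\geq \eps^4\cdot \eps k(k-1)\cdot k^{-2}\geq \eps^5$ for $k$ large, establishing the claim. The general case is absorbed into the statement either by working up to an exceptional part $C_0$ (standard in Szemer\'edi's proof) or by using Cauchy--Schwarz to convert the number-of-irregular-pairs condition into a total-weight condition; in either case the constants can be tracked to give the stated $\eps^5$ bound, after which iterating this energy-increment lemma a bounded number of times produces the $(k,\eps)$-uniform partition claimed by Lemma~\ref{lem:SzemerLem}.
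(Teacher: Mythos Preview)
Your proposal is correct and follows the same argument as the paper: refine each $C_i$ by the atoms generated by the irregularity witnesses $\{A_{ij}\}_j$, apply Lemma~\ref{lem:non_increasing_eneregy} to every pair and Lemma~\ref{lem:energy_boost} to the irregular ones, and sum. The block-size issue you flag in the third step is resolved in the paper simply by reading ``$\mathcal{P}$ not $\eps$-uniform'' in the weighted sense $\sum_{(i,j)\text{ irregular}}|C_i||C_j|/n^2>\eps$ (the natural definition in the unequal-parts version being proved in this appendix), which yields $\sum_{(i,j)\text{ irregular}}\eps^4|C_i||C_j|/n^2>\eps^5$ directly, with no need for Cauchy--Schwarz or an exceptional set.
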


\begin{proof}
For any $(i,j)$ such that $(C_{i},C_{j})$ is not $\eps$-uniform, find the subsets $A_{ij}\subseteq C_i$ and $A_{ji}\subseteq C_j$ that witness the irregularity. 
Let $\mathcal{P}'$ be the refinement of $\mathcal{P}$ by all the subsets $A_{ji}$, $1\leq,i,j\leq k$. Then, in this partition each set $C_i$ is partitioned into at most $2^{k}$ parts. 
Denote by $\mathcal{P}'_{C_i}$ the partitioning of $C_i$ in $\mathcal{P}'$, then,
\begin{align}
    q(\mathcal{P}')=\hspace{-0.5cm}\sum_{(i,j)\in [k]\times[k]}\hspace{-0.3cm}q(\mathcal{P}'_{C_i},\mathcal{P}'_{C_j})=\hspace{-0.8cm}\sum_{\tiny{\begin{array}{cc}(i,j)\in [k]\times[k]:\\(C_i,C_j)\text{ $\eps$-uniform}\end{array}}}\hspace{-0.8cm}q(\mathcal{P}'_{C_i},\mathcal{P}'_{C_j})+\hspace{-0.8cm}\sum_{\tiny{\begin{array}{cc}(i,j)\in [k]\times[k]:\\(C_i,C_j)\text{ not $\eps$-uniform}\end{array}}}\hspace{-0.8cm}q(\mathcal{P}'_{C_i},\mathcal{P}'_{C_j}).
\end{align}
Since $\mathcal{P}'_{C_i}$ is a refinement of $\{C_i, C_i\backslash A_{ij}\}$ we get  that 
\begin{align}
    q(\mathcal{P}')&\geq \hspace{-0.1cm}\sum_{\tiny{\begin{array}{cc}(i,j)\in [k]\times[k]:\\(C_i,C_j)\text{ $\eps$-uniform}\end{array}}}\hspace{-0.8cm}q(C_i,C_j)+\hspace{-0.8cm}\sum_{\tiny{\begin{array}{cc}(i,j)\in [k]\times[k]:\\(C_i,C_j)\text{ not $\eps$-uniform}\end{array}}}\hspace{-0.8cm}q(\{C_i,C_i\backslash A_{ij}\},\{C_j,C_j\backslash A_{ji}\})\label{line:non_inc} \\
   &\geq \sum_{(i,j)\in [k]\times[k]}q(C_i,C_j)+\hspace{-0.8cm}\sum_{\tiny{\begin{array}{cc}(i,j)\in [k]\times[k]:\\(C_i,C_j)\text{ not $\eps$-uniform}\end{array}}}\hspace{-0.8cm}\eps^4\frac{|C_i||C_j|}{n^2})\label{line:boost} \\
    &\geq q(\mathcal{P})+\eps^5,\label{line:not_eps_unif}
\end{align}
where \eqref{line:non_inc} is due to Lemma~\ref{lem:non_increasing_eneregy}, \eqref{line:boost} is due to Lemma~\ref{lem:energy_boost}, and \eqref{line:not_eps_unif} is since $\mathcal{P}$ is not $\eps$-uniform.
\end{proof}
\begin{proof}[Proof of Szemer\'edi regularity lemma]
We start with a trivial partition ($k=1$) and while the partition is not $\eps$-uniform we apply Lemma~\ref{lem:partition_increase_energy}.
At each step the energy of the partition increases by at least $\eps^5$, but $q(\mathcal{P})\leq 1$, hence we bound to stop, i.e., get an $\eps$-uniform partition, after at most $\eps^{-5}$ steps.
\end{proof}
%%%%%%%%%%%%%%%%%%%%%%%%%%%%%%%%%%%%%%%%%%%%%%%%%%%%%%%%%%%%%%%%%%%%%%%%%%%%%%%%%%%%%%%%%%%%%%%%%%%%%%%%%%%%%%%%%%%%%%%%%%%%%%%%%%%%%
\subsection{Proof of Fainleib's inequality}\label{appendix:Fainleib}
We now bring a sketch of the proof of Fainleib's inequality \eqref{eq:esseen_fainleib}. The full version including the exact value of the constants can be found in~\cite{fainleib1968generalization}.

\begin{proof}
The concept of the proof is to show that there exist a low pass filter $\varphi_T(x)$ such that for any cumulative distribution function $Q(x)$ it hold that $Q(x)\leq  2\int_x^{x+\frac{c_2}{T}}Q(u)\varphi_T(u-x)du$, for some constant $c$. Therefor for any $x$ and any two c.d.fs $Q(x)$ and $F(x)$ the expression $Q(x)-F(x)\leq  2\int_x^{x+\frac{c_2}{T}}Q(u)\varphi_T(u-x)du-F(x)\leq 2\int_x^{x+\frac{c_2}{T}}(Q(u)-F(u))\varphi_T(u-x)du+f(F)$ where the latter is a function that depend on the concentration of $F$.
Let $\varphi(x)$ be a filter such that for some constants $c_1,c_2,c_3$ the following properties hold:
\begin{enumerate}
    \item $0\leq \varphi_T(x)\leq c_1 T$, for all $x$; \label{propert1}
    \item $\int_{-\infty}^\infty \varphi_T(x)dx =1$;\label{propert2}
    \item $\int_0^{\frac{c_2}{T}}\varphi_T(x)dx=c(T)\geq \frac{1}{2}$; \label{propert3}
    \item $\vartheta(t)\triangleq \int_{-\infty}^\infty\varphi_T(x)e^{itx}dx$, the Fourier transform of $\varphi_T(x)$ is \label{propert4}
    \begin{itemize}
        \item band limited, $\vartheta(t)=0$, for all $t:|t|\geq T$,
        \item bounded, $|\vartheta(t)|\leq c_3$,
        \item symmetric in absolute value $|\vartheta(t)|=|\vartheta(-t)|$.
    \end{itemize}
\end{enumerate}
Then, since $Q(x)$ is a c.d.f, i.e. monotonically increasing function, and by letting $h=\frac{c_2}{T}$, we get that indeed 
\begin{align}
    Q(x)\leq  \frac{1}{c(T)}\int_x^{x+\frac{c_2}{T}}Q(u)\varphi_T(u-x)du,%\leq 2\int_x^{x+\frac{c_2}{T}}Q(u)\varphi_T(u-x)du,
\end{align}
where we used property~\ref{propert3}.
Then we can write
\begin{align}
    Q(x&)-F(x)\leq \frac{1}{c(T)}\int_x^{x+\frac{c_2}{T}}Q(u)\varphi_T(u-x)du - F(x)\\
    & = \frac{1}{c(T)}\int_x^{x+\frac{c_2}{T}}\hspace{-0.5cm}(F(u) - F(x))\varphi_T(u-x)du  + \frac{1}{c(T)}\int_x^{x+\frac{c_2}{T}}\hspace{-0.5cm}(Q(u)-F(u))\varphi_T(u-x)du\label{line:used_property_2}\\
    &\leq \frac{c_1 T}{c(T)}\int_0^{\frac{c_2}{T}}(F(x+u) - F(x))du+ \frac{1}{c(T)}\int_{-\infty}^{\infty}(Q(u)-F(u))\varphi_T(u-x)du  \nonumber\\
    &\qquad - \frac{1}{c(T)}\left(\int_{-\infty}^{x}+\int_{x+\frac{c_2}{T}}^{\infty}\right)(Q(u)-F(u))\varphi_T(u-x)du \label{line:used_property_1} \\
    &\leq \frac{c_1 T}{c(T)}\int_0^{\frac{c_2}{T}}(F(x+u) - F(x))du+ \frac{1}{c(T)}\int_{-\infty}^{\infty}(Q(u)-F(u))\varphi_T(u-x)du \nonumber \\
    &\qquad -\frac{1}{c(T)}\left(\int_{-\infty}^{X}+\int_{x+\frac{c_2}{T}}^{\infty}\right)(Q(u)-F(u))\varphi_T(u-x)du
    \label{line:used_property_2_again}.
\end{align}
where in \ref{line:used_property_2} and~\ref{line:used_property_2_again}  we used property~\ref{propert2} and in \ref{line:used_property_1} we used property~\ref{propert1}. 
Then, we have 
\begin{align}
    \frac{c_1 T}{c(T)}\int_0^{\frac{c_2}{T}}(F(x+u&) - F(x))du\leq 2c_1 T\int_0^{\frac{c_2}{T}}(F(x+u) - F(x-u))du \\
    &= c_1c_2\left(\frac{T}{c_2}\int_0^{\frac{c_2}{T}}(F(x+u) - F(x-u))du\right)\\
    &\leq c_1c_2\left(\frac{T}{c_2}\int_0^{\frac{c_2}{T}}(F(x+u) - F(x-u))du\right)\\
    &\triangleq  c_1c_2 \tilde{S}_f(\frac{c_2}{T}).
\end{align}
It can be easily shown that there exist a constant $c_4$ such that $\tilde{S}_f(\frac{c_2}{T})\leq c_4 \tilde{S}_f(\frac{1}{T})$ (full derivation in \cite{fainleib1968generalization}). Then \begin{align}
    \tilde{S}_f(\frac{1}{T})&=T\int_0^{\frac{1}{T}}(F(x+u) - F(x-u))du \leq \sup_x (F(x+u) - F(x-u)))\\
    &= S_f(\frac{1}{T}).
\end{align}
When $Q(x)$ and $F(x)$ corresponds to some continuous pdfs $q(x)$ and $f(x)$ respectively, then the characteristics functions $\psi_q$ and $\psi_f$ are their Fourier transforms and we obtain 
\begin{align}
    \left|\int_{-\infty}^{\infty}(Q(u)-F(u))\varphi_T(u-x)du\right| &= \left|\int_{-\infty}^\infty e^{-ixt} \left(\frac{\psi_q(t)}{t}-\frac{\psi_f(t)}{t}\right)\vartheta(t)dt\right|\\
    &\leq c_3\int_{-\infty}^\infty \frac{|\psi_q(t)-\psi_f(t)|}{t}dt,\label{line:used_property_4}
\end{align}
where in \eqref{line:used_property_4} we used the triangle inequality and property~\ref{propert4}.
When $Q(x)$ and $F(x)$ does not corresponds to continuous pdfs the same result can be obtained using Levy's inversion, full derivation available at \cite{fainleib1968generalization}. 
It is left then to deal with the term $\frac{1}{c(T)}\left(\int_{-\infty}^{X}+\int_{x+\frac{c_2}{T}}^{\infty}\right)(Q(u)-F(u))\varphi_T(u-x)du$.
Let $a=\sup_x |Q(x)-F(x)|$, then 
\begin{align}
    \Big|\Big(\int_{-\infty}^{X}+\int_{x+\frac{c_2}{T}}^{\infty}&\Big)(Q(u)-F(u))\varphi_T(u-x)du\Big|\leq a\left(\int_{-\infty}^{X}+\int_{x+\frac{c_2}{T}}^{\infty}\right)\varphi_T(u-x)du\\
    &=a\left(1-\int_x^{x+\frac{c_2}{T}}\varphi_T(u-x)du\right)=a(1-c(T)).
\end{align}
Then by combining all three terms together and switching wings we get the desired results. 
In \cite{fainleib1968generalization} it is shown that the filter 
\begin{align}
    \varphi_T(x)= \frac{T}{\pi} \frac{1-\cos(Tx-3)}{(Tx-3)^2},
\end{align}
hold all the required properties. For the exact constants the reader is referred to the full proof there.
\end{proof}
%%%%%%%%%%%%%%%%%%%%%%%%%%%%%%%%%%%%%%%%%%%%%%%%%%%%%%%%%%%%%%%%%%%%%%%%%%%%%%%%%%%%%%%%%%%%%%%%%%%%%%%%%%%%%%%%%%%%%%%%%%%%%%%%%%%%%
% \fi 

\subsection{Proof of the Counting Lemma}\label{appendix:counting_lemma}
\begin{proof}[Proof of Lemma~\ref{lem:counting_lemma}]
Here we prove the lemma for the case where $\grH$ is a clique. The proof can be easily amended to account for general graphs. In the following, an {\em irregular} pair is a pair of subsets in the $(k,\eps)$-uniform partition that is not an $\eps$-uniform pair. Also, we denote $g=\frac{k}{n}$.
Let $G\in \Lambda(\S,n)$ be a graph that corresponds to an $(k,\eps)$-uniform partition $G=C_0\bigcup C_1 \bigcup \cdots \bigcup C_k$. 
The {\em reduced graph} $\tilde{G}$ is obtained from $G$ by removing the following edges  (recall that $\frac{1}{k}<\eps$):
\begin{enumerate}
    \item All edges with at least one end in $C_0$. There are at most $\eps n^2$ of those.
    \item All edges inside the sets $C_i,\;1\leq i\leq k$. There are at most $g^2k\leq\eps\cdot n^2$ of those.
    \item All edges between irregular pairs $(C_i,C_j)$. There are at most $\eps\cdot\binom{k}{2}\cdot g^2\leq \eps\cdot n^2$ of those.
    \item All edges that belong to $\eps$-uniform pairs with density  $s_{ij}<\eps+\eps^{\frac{1}{r-2}}$. There are at most $(\eps+\eps^{\frac{1}{r-2}})\cdot g^2\cdot\binom{k}{2}\leq 2\eps^{\frac{1}{r-2}}\cdot n^2$ of those.
\end{enumerate}
Let us upper bound the number of copies of $\grH$ that was removed in the reduction process.
There are several cases of such copies:
\begin{enumerate}
    \item Copies with at least one vertex in $C_0$. There are at most $\eps n \cdot \binom{n}{r-1} \leq \eps\cdot n^r$ of those.
    \item Copies with at least two vertices in the same set. There are at most $k\cdot \binom{g}{2}\cdot \binom{n}{r-2} \leq \eps\cdot n^r$ of those.
    \item Copies with each vertex in a different set but with (at least one) irregular couple. 
    There are at most $\eps\cdot\binom{k}{2} \cdot \binom{k}{r-2}\cdot g^r \leq \eps\cdot n^r$ of those.
    \item Copies with each vertex in a different set but with (at least one) less-than-$(\eps+\eps^{\frac{1}{r-2}})$ couple. There are at most $\binom{k}{2} \cdot \binom{k}{r-2}\cdot (\eps+\eps^{\frac{1}{r-2}}) \cdot g^r \leq 2\eps^{\frac{1}{r-2}}\cdot n^r$ of those.
\end{enumerate}
Therefore, we  have $t(G,\grH)-t(\tilde{G},\grH) \leq 5\eps^{\frac{1}{r-2}}\cdot n^r$.

Next,  note that $\tilde{G}$ is a $k$-partite graph on the sets $C_1,\cdots,C_k$ where all the pairs $(C_i,C_j)$, $1\leq i < j \leq k$, in the partition are $\eps$-uniform with densities 
\begin{align*}
    \tilde{s}_{ij} =\begin{cases}s_{ij},\;\; s_{ij} \geq \eps+\eps^{\frac{1}{r-2}},\\0,\;\;\; s_{ij} < \eps+\eps^{\frac{1}{r-2}},\text{ or $(C_i,C_j)$ is irregular pair}.\end{cases}
\end{align*}
Let us lower bound the number of copies of $\grH$ in $\tilde{G}$. We say that a vertex $v\in C_i$ is {\em typical} with respect to $C_j$, $j\neq i$, if it is connected to at least $(\tilde{s}_{ij}-\eps)\cdot|C_j|$ vertices in $C_j$. If $v\in C_i$ is not typical with respect to $C_j$ then it is {\em atypical} with respect to $C_j$. 
Note that for any $\eps$-uniform pair $(C_i,C_j)$, there are at most $\eps|C_i|$ vertices in $C_i$ that are atypical with respect to $C_j$ (if there were more, 
they would form a subset $V\subset C_i$, with size  $|V|>\eps|C_i|$ and $d(V,C_j)<\tilde{s}_{ij}-\eps$, in contradiction to $(C_i,C_j)$ being an $\eps$-uniform pair). 
Hence, given a specific $r$-let of distinct sets $C_{i_1},\cdots,C_{i_r}$, we have that there are at least $(1-(r-1)\eps)\cdot |C_{i_1}|$ vertices in $C_{i_1}$ that are typical with respect to all $C_{i_j}$, $2\leq j\leq r$.  
Let us look at a specific such vertex $v_1\in C_{i_1}$. Denote the span of $v_1$ in $C_{i_j}$ by $A_j$, and note that $|A_j|>(\tilde{s}_{i_1i_j}-\eps) |C_{i_j}|$, $\forall j\in \{2,\cdots,r\}$, and that for any $j$ such that $\tilde{s}_{ij}>0$ we get $|A_j|\geq \eps |C_{i_j}|$. 
Next, note that there are at least $(\tilde{s}_{i_1 i_2}-(r-1)\eps)\cdot |C_{i_2}|$ vertices in $A_{2}$ that are typical with respect to all $A_{j}$, $3\leq j\leq r$, thus connected to at least $(\tilde{s}_{i_2 i_j}-\eps)(\tilde{s}_{i_1 i_j}-\eps)|C_{i_j}|$ vertices in each $A_j$, $3\leq j\leq r$. 
Let us choose one such vertex, $v_2\in A_2\subseteq C_{i_2}$, and repeat the process again. 
After repeating the process $r-2$ times, we have a subset of $r-2$ vertices \{$v_1\in C_{i_1},\cdots, v_{r-2}\in C_{i_{r-2}}\}$, that are all connected to each other and to the same $\left[  \prod_{j=1}^{r-2}(\tilde{s}_{i_{j}i_{\ell}}-\eps)\right]\cdot |C_{i_\ell}|$, vertices in $C_{i_{\ell}}$, $\ell\in \{r-1,r\}$. We denote these subsets as $\tilde{A}_\ell\subset C_{i_\ell}$. Since $\tilde{s}_{ij}\geq\eps+\eps^{\frac{1}{r-2}}$ we get that
\begin{equation*}
    \prod_{k=1}^{r-2}(\tilde{s}_{i_{j}i_{\ell}}-\eps)>\eps,
\end{equation*}
hence the pair ($\tilde{A}_{r-1}$,$\tilde{A}_{r}$) has at least 
\begin{align}
(\tilde{s}_{r-1,r}-\eps)\cdot |\tilde{A}_{r-1}||\tilde{A}_{r}|& \geq \left[ \prod_{j=1}^{r-2}(\tilde{s}_{i_{j}i_{r-1}}-\eps)\right]|C_{i_{r-1}}|\\
&\hspace{-0.5cm} \times \left[ \prod_{\ell=1}^{r-2}(\tilde{s}_{i_{\ell}i_{r}}-\eps)\right]|C_{i_{r}}|
\end{align}
edges between them, each edge completes one copy of $\grH$ (with the vertices $v_1,\cdots,v_{r-2}$). 
But, there were $ \left[\prod_{j=1}^{\ell-1}(\tilde{s}_{i_{j}i_{\ell}}-(r-1)\eps\right]\cdot |C_{i_\ell}|$ options to choose the vertices $v_{\ell}$, $2\leq\ell\leq r-2$,  and $(1-(r-1)\eps)|C_{i_1}|$ options to choose $v_1$. Hence, if we let $A\triangleq\{i_1,\cdots,i_r\}$, and denote by $N^{\tilde{G}}_{A}$ the number of copies of $\grH$ over the $r$-let $C_{i_1},\cdots,C_{i_r}$ in $\tilde{G}$, we get  
\begin{align}
    N^{\tilde{G}}_{A} &\geq (1-(r-1)\eps)|C_{i_1}|
    \prod_{\ell=2}^{r-2} \left(\prod_{j=1}^{\ell-1}(\tilde{s}_{i_{j}i_{\ell}}-(r-1)\eps\right)\\
    &\hspace{1cm}\times|C_{i_\ell}|\prod_{\tilde{\ell}=r-1}^r\left( \prod_{j=1}^{\tilde{\ell}-1}(\tilde{s}_{i_{j}i_{\tilde{\ell}}}-\eps)\right)|C_{i_{\tilde{\ell}}}|\\
          &>  (1-(r-1)\eps)\prod_{i,j\in A} \left(\tilde{s}_{ij} - (r-1)\eps\right)\prod_{\ell=1}^r|C_{i_{\ell}}|\\
                    &>  \left(\prod_{i,j\in A}\tilde{s}_{ij} -r^3 \eps^{\frac{r-3}{r-2}}\right)\prod_{\ell=1}^r|C_{i_{\ell}}|,\label{eq:deviation_constant}
\end{align}
where \eqref{eq:deviation_constant} can be  derived using the relations $\log(1-\frac{1}{x})^{\frac{1}{x}}\geq e^{-\frac{1}{1-x}}$, $\forall x\in(0,1)$ and $e^{-x}\geq 1-x$, $\forall x>0$.  
Then,
\begin{align}
    t(\tilde{G},\grH)&= \frac{1}{\binom{n}{r}}\sum_{\small{\begin{subarray}{l}A\subseteq [k]\\
    |A|=r\end{subarray}}}N^{\tilde{G}}_{A}\geq \frac{n^r}{\binom{n}{r}k^r} \sum_{\small{\begin{subarray}{l}A\subseteq [k]\\
    |A|=r\end{subarray}}}\left(\prod_{i<j\in A} \tilde{s}_{ij}- r^3 \eps^{\frac{r-3}{r-2}}\right)\\
    &= 
    \left(\frac{1}{\binom{k}{r}} \sum_{\small{\begin{subarray}{l}A\subseteq [k]\\
    |A|=r\end{subarray}}}\prod_{i<j\in A} \tilde{s}_{ij}- r^3 \eps^{\frac{r-3}{r-2}}\right)c(n,k),
\end{align}
with $c(n,k)=\frac{n^r(k(k-1)\cdots (k-r+1)}{k^rn(n-1)\cdots (n-r+1)}$.
Next note that since $k<n$ we have
\begin{align}
        c(n,k)=\frac{\prod_{\ell=1}^{r-1}(1-\frac{\ell}{k})}{\prod_{\ell=1}^{r-1}(1-\frac{\ell}{n})}\leq  1,
\end{align}
and also
\begin{align}
    c(n,k)=\frac{\prod_{\ell=1}^{r-1}(1-\frac{\ell}{k})}{\prod_{\ell=1}^{r-1}(1-\frac{\ell}{n})}\geq \frac{\left(1-\frac{r-1}{k}\right)^{r-1}}{\left(1-\frac{1}{n}\right)^{r-1}}\geq \left(1-\frac{r-1}{k}\right)^{r-1}\geq 1-(r-1)^2\eps,
\end{align}
where in the last transition we used the fact the $r\ll k$, the taylor series of $(1+x)^\alpha$ and the relation $\frac{1}{k}\ll \eps$.
Then we get 
\begin{align}
    t(\tilde{G},\grH)&=\geq 
    \frac{1}{\binom{k}{r}} \sum_{\small{\begin{subarray}{l}A\subseteq [k]\\
    |A|=r\end{subarray}}}\prod_{i<j\in A} \tilde{s}_{ij}- r^3 \eps^{\frac{r-3}{r-2}}-(r-1)^2\eps
\end{align}
Note that since $\tilde{s}_{ij}$ is different from $s_{ij}$ only if $s_{ij}<\eps+\eps^{\frac{1}{r-2}}$ or if $(C_i,C_j)$ is irregular pair, which only occurs at $\eps\binom{k}{2}$ of the elements in the sum, we get that 
\begin{align*}
\Big|t(\S,\grH)-\frac{1}{\binom{k}{r}}\sum_{\small{\begin{subarray}{l}A\subseteq [k]\\
    |A|=r\end{subarray}}}\prod_{i<j\in A} \tilde{s}_{ij}\Big|\leq 3\eps^{\frac{1}{r-2}}.
    \end{align*}
Hence, after considering the copies that was lost in the reduction process, we get
\begin{align}
    t(G,\grH) &\geq t(\tilde{G},\grH) - 3\cdot \eps^{\frac{1}{r-2}}\geq t(\tilde{\S},\grH) -r^3 \eps^{\frac{r-3}{r-2}}-(r-1)^2\eps -3\cdot \eps^{\frac{1}{r-2}}\\
   &\geq t(\S,\grH)- \ctilde \eps^{\frac{1}{r-2}},\label{eq:small_eps_deviation_constant}
\end{align}
where \eqref{eq:small_eps_deviation_constant} is true for any $\eps<r^{-3}$.
The upper bound is derived in a similar way.
\end{proof}
\fi 

\subsection{Proof of Lemma~\ref{lem:S_size}}\label{appendix:proof_of_8}
To prove this this lemma we first need the following: we say that a bipartite graph $G=(V_1,V_2,E)$ if $\eps$-uniform if $(V_1,V_2)$ are an $\eps$-uniform pair. In all the following $g\triangleq \frac{n}{k}$.

\begin{lem}
All bipartite graph with $e$ edges over $[g]\times [g]$, except for a fraction of at most $2^{-g^2(2\eps^4 +\frac{4}{g})}$, are $\eps$-uniform.
 \end{lem}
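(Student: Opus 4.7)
The plan is to establish the bound via a union bound over all candidate witness subset-pairs, combined with a hypergeometric concentration inequality. Throughout, let $p = e/g^2$ denote the overall edge density. Recall that a bipartite graph on $[g] \times [g]$ fails to be $\eps$-uniform if and only if there exist subsets $A, B \subseteq [g]$ with $|A|, |B| \geq \eps g$ such that $|d(A,B) - p| > \eps$, or equivalently $|e(A,B) - p|A||B|| > \eps |A| |B|$.

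First I would fix such a pair $(A, B)$ and compute the probability, under the uniform measure on bipartite graphs with exactly $e$ edges, that this specific pair witnesses non-uniformity. Here $e(A,B)$ is hypergeometrically distributed on the $|A||B|$ slots inside $A\times B$ out of $g^2$ total, with mean $p|A||B|$. A Hoeffding-type concentration inequality for sampling without replacement then yields
\begin{equation*}
\Pr\bigl[\,|e(A,B) - p|A||B|| > \eps|A||B|\,\bigr] \leq 2\exp\bigl(-2\eps^2 |A||B|\bigr) \leq 2\exp\bigl(-2\eps^4 g^2\bigr),
\end{equation*}
where the last inequality uses $|A||B| \geq \eps^2 g^2$.

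Next, I would apply a union bound over all candidate witness pairs. Since $A$ and $B$ each lie in $2^g$ possible subsets, there are at most $2^{2g}$ pairs. Hence the fraction of bipartite graphs on $[g]\times[g]$ with exactly $e$ edges that fail to be $\eps$-uniform is at most
\begin{equation*}
2^{2g} \cdot 2\exp(-2\eps^4 g^2) = 2^{2g+1} \cdot \exp(-2\eps^4 g^2),
\end{equation*}
which after converting to base $2$ and factoring $g^2$ out of the exponent can be written in the form $2^{-g^2(2\eps^4 \pm O(1/g))}$.

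The main obstacle I anticipate is matching the precise constants in the exponent as stated, since the naive union bound contributes a linear-in-$g$ term of order $+2g$, whereas the statement bundles a $4/g$ correction inside the $g^2(\cdot)$ factor. Resolving this requires either absorbing the conversion factor between natural and base-$2$ logarithms (the paper works in natural log) or using a slightly sharper concentration bound tailored to hypergeometric deviations in the regime $|A||B| \geq \eps^2 g^2$; either way it is a routine but delicate bookkeeping step once the union-bound-plus-concentration framework is in place.
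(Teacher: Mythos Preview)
Your approach is correct and structurally the same (concentration plus union bound), but the technical route differs from the paper's in one respect worth noting. You work directly in the uniform measure on bipartite graphs with exactly $e$ edges, so that $e(A,B)$ is hypergeometric, and you invoke Hoeffding's inequality for sampling without replacement. The paper instead passes to the i.i.d.\ $\mathrm{Ber}(s)$ model with $s=e/g^2$, applies the standard i.i.d.\ Hoeffding bound together with the union bound to obtain $\Pr(\mathcal{E})\leq 2^{-2\eps^4 g^2+2g+2\log g+1}$, and only then transfers back to the fixed-edge-count model via the type-class lower bound $\Pr(A_s)\geq 1/(g^2+1)$, paying an additional factor of $g^2+1$. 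Your route is one step shorter and avoids the transfer; the paper's route uses only the more familiar i.i.d.\ concentration and the method-of-types trick. Both arrive at $2^{-2\eps^4 g^2 + O(g)}$ with a positive lower-order term. Your worry about the sign of the $4/g$ correction is justified: the paper's own final inequality exhibits the same discrepancy, and the stated exponent $-g^2(2\eps^4+\tfrac{4}{g})$ appears to be a typo for $-g^2(2\eps^4-\tfrac{4}{g})$; either way the mismatch is in the sub-leading term and is immaterial for how the lemma is used in the proof of Lemma~\ref{lem:S_size}.
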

 
\begin{proof}
Let $s=\frac{e}{g^2}$ and $G=(V_1,V_2)$ be a random bipartite graph with $|V_1|=|V_2|=g$, that is obtained by drawing $\text{Ber}(s)$ edges between $V_1$ and $V_2$ independently. 
Let $\mathcal{E}$ be the event that $G$ is not $\eps$-uniform, i.e., the event where there exist two subsets $A\subset V_1,B\subset V_2$, such that $|A|,|B|\geq\eps g$, and $e(A,B)<(s-\eps)|A||B|$ (denote this event by $\mathcal{E}_1$) or $e(A,B)>(s+\eps)|A||B|$ (denote this event by $\mathcal{E}_2$). 
Then, using Hoeffding's inequality and the union bound 
\begin{align}
    \Pr(\mathcal{E}) &\leq  \Pr(\mathcal{E}_1) + \Pr(\mathcal{E}_2)\\
    &\leq 2 \hspace{-0.2cm}\sum_{\ell_1=\eps g}^{g}\sum_{\ell_2=\eps g}^{g}\binom{g}{\ell_1}\binom{g}{\ell_2} 2^{-2\eps^2 \ell_1\ell_2}\\
    &\leq  2\hspace{-0.2cm}\sum_{\ell_1=\eps g}^{g}\sum_{\ell_2=\eps g}^{g}\binom{g}{\frac{1}{2}g}^2 2^{-2\eps^4  g^2}\label{line:Hoefdding3}\\ 
    &\leq  2^{-2\eps^4  g^2+2g+2\log g+1},
\end{align}
where in \eqref{line:Hoefdding3} we maximized the term $\binom{g}{\ell_i}$ using the choice $\ell_i=\frac{1}{2}g$, $i=1,2$, and the exponent using the choice $\ell_i=\eps g$.
Denote by $A_s$ the set of all bipartite graphs with exactly $s\cdot  g^2$ edges and by $A_{\mathcal{E}}$ the set of all such graphs that are not $\eps$-uniform. Recall that the $\text{Ber}(s)$ distribution is uniform over all the graphs in $A_s$ and that $\Pr(A_s)\geq 1/(g^2+1)$\cite{cover1999elements}.
Then 
\begin{align}
\frac{|A_{\mathcal{E}}|}{|A_s|}&=\frac{\Pr(A_{\mathcal{E}})}{\Pr(A_s)}\\
&\leq (g^2+1)2^{-2\eps^4  g^2+2g+2\log g+1}\\
&\leq 2^{-g^2(2\eps^4 +\frac{4}{g})}.
\end{align}
\end{proof}

\begin{proof}[Proof of Lemma \ref{lem:S_size}]
There are three degrees of freedom in constructing a graph $G$ in $\Lambda(\S,n)$: 
\begin{enumerate}
    \item Choosing all the edges between $\eps$-uniform pairs: without the $\eps$-uniformity constraint, this is equivalent to simply choosing the $s_{ij}  g^2$ edges between $C_i$ and $C_j$, which has $2^{g^2\cdot h(s_{ij})-2\log g-1}\leq |A_{s_{ij}}|\leq 2^{g^2\cdot h(s_{ij})}$ options\cite{cover1999elements}.
    Using the above claim we can deduce that the $\eps$-uniformity constraint does not change this number significantly, and we get
    $2^{g^2\cdot h(s_{ij})-2\log g-1}(1-2^{-g^2(2\eps^4 +\frac{4}{g})})\leq |A_{s_{ij},\mathcal{E}}|\leq 2^{g^2\cdot h(s_{ij})}$.
    \item Choosing the edges inside the sets: there are at most $2^{(k+1)\binom{g}{2}}\leq 2^{\eps n^2}$ options.
    \item Choosing the edges between irregular pairs: there are at most $\eps k^2$ such pairs, hence $ 2^{\eps k^2g^2}\leq 2^{\eps n^2}$ options.
\end{enumerate} 
Hence we get
\begin{align}
    2^{g^2\cdot H(S)-2k^2\log g-k^2}&(1-2^{-g^2(2\eps^4 +\frac{4}{g})})\leq|\Lambda(\S,n)| \\
    &\leq  2^{g^2\cdot H(S)+2\eps n^2 }.
\end{align}
\end{proof}
\bibliographystyle{ieeetr}
\bibliography{Szemeredi}
\end{document}